\def\ps@pprintTitle{%
	\let\@oddhead\@empty
	\let\@evenhead\@empty
	\def\@oddfoot{}%
	\let\@evenfoot\@oddfoot}
\definecolor{magenta}{rgb}{.5,0,.5}
\definecolor{black}{rgb}{1.0,1.0,1.0}
\definecolor{magenta}{rgb}{.1,0,.3}
\definecolor{gruen}{rgb}{0.2,0.5,.5}
\definecolor{light}{rgb}{ 0.992, 0.961,  0.902}
\definecolor{Tan}{rgb}{ 0.992, 0.9,  0.902}
\newcommand{\field}[1]{\mathbb{#1}}
\newcommand {\R}        {\field{R} }
\newcommand {\e}  {{\mathbf e}}
\newcommand {\eps}      {\varepsilon}
\theoremstyle{plain}
\newtheorem{prop}{Proposition}[section]
\newtheorem{rem}[prop]{Remark}
\newtheorem{theorem}[prop]{Theorem}
\newtheorem{lemma}[prop]{Lemma}
\newcommand{\komment}[1]{}
\title{Quiescence wins: The Discovery Of Slowness}
\author{Leonard Braun, Noah Risse, Aur\'elien Tellier, Johannes M\"uller\\TU Munich}
\begin{document}
	\maketitle

\begin{abstract}
We investigate the evolution of quiescence within the framework of Adaptive Dynamics for an SIQS (Susceptible - Infected - Quiescent) model with constant environment. In the first part of the paper, the competition of two strains which have the same basic fitness (same reproduction number) but different timing in quiescence is analyzed. Thereto, the complexity of the model is reduced: By a time scale argument, we approximate the SQIS model by an SIS model under the assumption of rapid switching between I and Q. Furthermore, using dimension reduction methods for the van Kampen expansion of the models, we replace the multi-dimensional SDE by an effective one dimensional SDE on the center manifold of the models. Finally, the fixation probabilities of the strains are derived.\\
In the second part of the paper, we use concepts from adaptive dynamics to analyze the stochastic random walk on the trait space. We find that quiescence is favored by intrinsic stochastic effects. In the end, a more intuitive explanation is added to the mathematical arguments.\\ 
The analysis suggests a new paradigm for the evolution of quiescence in parasites: In the present context, quiescence is not a bed-hedging strategy to escape detrimental conditions in a fluctuating environment, but a simple and efficient method to continuously control and slow down the time scale of life-history traits. It turns out that evolution favors slowness, with the analysis suggesting that this effect may be widespread in systems where species interact only indirectly through competition for resources. 
\end{abstract}

\section{Introduction}

Why should pathogens exhibit dormant or quiescent phases? The answer was developed about 30 years ago~\cite{venable1988selective}: Quiescent individuals 
are protected. They are protected against sporadic adverse environmental conditions such as antibiotics, drought, and lack of nutrients, to name but a few. Quiescence is a bet-hedging strategy. Really? Is that the whole answer? 
The paradigm is cracking, quiescence can also develop in a constant environment~\cite{Blath.2020}. But then, why be slow? 
Faster runners are expected to win the race~\cite{lee2022slow}! But that is not the whole story. Slowness has its own strength. 
This is the topic of the present article: The Discovery Of Slowness~\cite{Nadolny2005}.\par\medskip

Quiescence is a trait which is widespread among fungi and microorganisms~\cite{lennon2011microbial}, plants~\cite{tellier2019,verin2018host,blath2021branching} and also pathogens~\cite{sorrell2009evolution,rittershaus2013normalcy}. 
As quiescence seemingly decreases the arithmetic fitness -- only the active part of the population reproduces, while the quiescent 
fraction does not contribute to offspring -- the interpretation of quiescence is a challenge in the context of 
evolution. A classic explanation is that quiescence can be considered as a bet-hedging strategy~\cite{venable1988selective,stumpf2002herpes,blath2021branching,Browning2021}. 
Subpopulations following different behaviors are optimized for different environments. In the case of pathogens, 
the proliferating subpopulation successfully reproduces when infecting an individual, while it will likely be 
wiped out by antibiotics. On the other hand, the quiescent subpopulation cannot contribute to the population 
growth during an ongoing infection, but has a chance to survive antibiotics as their metabolism is largely switched off. 
A population consisting of subpopulations of active and quiescent individuals is obviously best prepared for 
a switching environment. In that, the inclusive fitness of the population is optimized, or at least 
increased. Of course, there has to be a mechanism that allows individuals to change their type 
once in a while to guarantee a genetically homogeneous population. The switching rates will determine the size of the subpopulation 
fractions and are optimized to the time scale of the switching environment~\cite{gaal2010exact,Mueller2013}. 
This behavior is one of the widespread varieties of division 
of labour~\cite{claessen2014bacterial,zhang2016understanding}.
\par\medskip 

Dormant phases are not only observed in individual pathogens, but also in infected hosts. Individuals can alternate between infectious and dormant phases (sometimes referred to as a silent infection or covert infection~\cite{sorrell2009evolution}). Here, too, a changing environment can be a possible cause. For example, the change between dry and rainy seasons has a strong effect on the spread of malaria as the vector is not available during the dry season. The pathogen needs to persist within a host.
{\it Plasmodium vivax}~\cite{Gowda2018}, {\it Plasmodium ovale}~\cite{merrick2021hypnozoites} or {\it Plasmodium cynomolgi}~\cite{Joyner2016} (which mainly infects apes) show dormant phases where the pathogen retracts into the liver. These dormant phases can perhaps be considered as a reaction to the switching environment. On the other hand, the timing of these 
episodes (e.g.\ for {\it Plasmodium cynomolgi}, the infectious period is about one week, with a relapse after 2-4 weeks~\cite{Joyner2016}), are much faster than the annual rhythm of the climate, in contrast to the theoretical predicitons~\cite{gaal2010exact,Mueller2013}. Furthermore, there are other plasmodium strains, such as {\it Plasmodium falciparum}, which do not show a significant dormant behavior~\cite{phillips2017malaria}. In that, it can be discussed if, indeed, the annual rhythm can be identified as the evolutionary force that leads to quiescent phases in malaria.\\
Furthermore, some viral infections show dormant phases, such as chickenpox, which goes into a dormant state and withdraws into the nervous system; the virus can become active again and cause {\it herpes zoster}~\cite{Hadley2016}. More viruses with a quiescent phase on the host level are, for example, Herpes viruses in general~\cite{stumpf2002herpes}, or the  Epstein-Barr Virus~\cite{thorley2013pathogenesis,Chakravorty2022}.\\
A further example of infections with quiescent phases is bacteria infected by bacteriophages. 
Bacteriophage infections might be lytic or lysogenic, i.e.~they either multiply actively in the bacterium or are dormant~\cite{Spriewald2020}.\\
In the present work, we do not deal with a specific life cycle of a specific pathogen but study a general SIQS model to investigate the effects of dormancy at the host level in a generic framework. The model, of course, requires adaptation if a specific pathogen is of interest.\par\medskip 

A recent finding by Blath et al.~\cite{Blath.2020} provides a first indication that a changing environment is not always necessary for dormancy to be selected by evolution. The authors consider competing species in a stochastic model within constant conditions and find that intrinsic fluctuations in population dynamics are sufficient to trigger quiescence as a bet-hedging strategy. This observation can be seen as a first step away from the switching/fluctuating environment paradigm. \\
Like Blath et al., also  the present paper is not dealing with a switching environment, but the only perturbations come from the intrinsic noise caused by a finite population size. We emphasize the difference between the two scenarios: A switching environment induces a clear fitness advantage for quiescence if its timing is in resonance with that of the environment. This result holds even in the deterministic limit~\cite{Gaal.2010}. In the present work, we only consider intrinsic noise due to a finite population, which vanishes in the deterministic limit. Consequently, in this limit, we neither find an advantage nor a disadvantage of quiescence. This observation can be easily understood by considering the effective reproduction number in a constant environment: Quiescence divides the active/infectious time period of an individual into several time intervals, which are interrupted by resting phases. Hereby, the total active time is neither increased nor decreased. Therefore, the effective reproduction number in a constant environment is not affected by quiescence. And indeed, the analysis of the deterministic model shows the occurrence of a line of steady states.\\
However, when considering a finite population, the intrinsic fluctuations lead to a weak advantage for the trait with more quiescence (see~\cite{Blath.2020} and this paper). Adaptive Dynamics indicates that the tendency towards more quiescence does not reach a balance, but instead becomes stronger and stronger. One might argue that this unbounded tendency towards more and more quiescence arises from the absence of a timescale in intrinsic fluctuations, which exhibit characteristics akin to Wiener noise with a frequency density of $1/f^2$.
However, we present an additional argument that demonstrates that bet-hedging is not the primary driver of quiescence in our set-up: Within an appropriate range of parameters, the quiescence model can be well approximated by a model that lacks quiescence, but with appropriately scaled parameters. In this case, the strains (without quiescence) remain neutral in the deterministic limit, resulting in a continuum of stationary states. The key distinction lies not in the reproduction number, but in the timescale of the strains. As the work of Kogan et al.~\cite{Kogan.2014} shows, the intrinsic noise gives an advantage to the slower time scale. 
That is, we argue that in the current constellation, quiescence is just a "clever" way to slow down the pace of a strain without affecting the reproduction number, and in that, obtaining a weak advantage if the intrinsic noise due to a finite population is present.\par\medskip 
%That is, we claim that we need to augment the switching environment paradigm as the ultimate cause for quiescence: In some situations, not the hibernation in a quiescent state during poor environmental conditions, but simply the adaptation of the time scale of activity by quiescence is what evolution is favoring.\par\medskip 

The article is structured as follows: We first set up the stochastic SIQS model, and indicate the evolutionary scenario we 
intend to analyze: Two strains of equal fitness, that is, with identical reproduction numbers, compete. The model incorporates 
four compartments (active and quiescent compartments for the two competing strains). As this model is mathematically challenging, we reduce the complexity in Section~2: On the one hand, if the switching rates between I and Q are large, the I and the Q compartment can be replaced by an effective I compartment with averaged infection and recovery rates. In that, the resulting SIS model still incorporates the switching rates of the SIQS model, 
but merely in order to adapt the time scale. On the other hand, it turns out that both models (the SIQS and the SIS model) 
exhibit a one-dimensional attractive manifold, caused by the fact that both strains only differ in the quiescence switching rates, 
but not in the reproduction number. An effective, one-dimensional stochastic differential equation on the center manifold can be derived, and from that, 
also the fixation probabilities of the competing strains. In consequence, a kind of ``competitive exclusion'' principle is obtained. Based on this, we develop in Section~3 
a variant of Adaptive Dynamics suited to handle the resulting random walk in the trait space. Herein, we find that quiescence 
is favored. As the mathematical arguments that lead to this result are quite technical, we also re-discuss the mechanisms on an informal 
level to give some intuition. The results are put into context in the discussion, Section~4. Most of the proofs are moved to the supplementary information (SI).

\section{Model}
\begin{figure}[tb]
    \centering
    \begin{minipage}{0.9\textwidth}
        \centering
	    \begin{tikzpicture}[->,>=latex,auto,node distance=2.5cm,thick,main node/.style={circle,draw,font=\sffamily\Large\bfseries}]
      % Nodes
      \node[main node] (S) {$S$};
      \node[main node] (I_1) [left of=S, xshift=-1cm] {$I_1$};
      \node[main node] (I_2) [right of=S, xshift=+1cm] {$I_2$};
        \node[main node] (Q_1) [below of=I_1] {$Q_1$};
      \node[main node] (Q_2) [below of=I_2] {$Q_2$};
      % Arrows
      \path[every node/.style={font=\sffamily\small}]
        (S) edge [bend right] node [above] {$\beta_1 S I_1/N$} (I_1)
        (I_1) edge [bend right] node [below] {$\gamma_1 I_1$} (S)
        (S) edge [bend left] node [above] {$\beta_2 S I_2/N$} (I_2)
        (I_2) edge [bend left] node [below] {$\gamma_2 I_2$} (S)
        (I_1) edge [bend left] node [right] {$\hat{\omega}_1 I_1$} (Q_1)
        (Q_1) edge [bend left] node {$\hat{c}_1 Q_1$} (I_1) 
        (I_2) edge [bend left] node {$\hat{\omega}_2 I_2$} (Q_2)
        (Q_2) edge [bend left] node {$\hat{c}_2 Q_2$} (I_2)
        (Q_1) edge [bend right] node [below] {$\hat{\mu}$} (S)
        (Q_2) edge [bend left] node {$\hat{\mu}$} (S)
        ;
    \end{tikzpicture}
        \caption{Illustration of the transitions in the one-strain SIQS model.}
        \label{figDiagramSIQS}
    \end{minipage}
\end{figure}
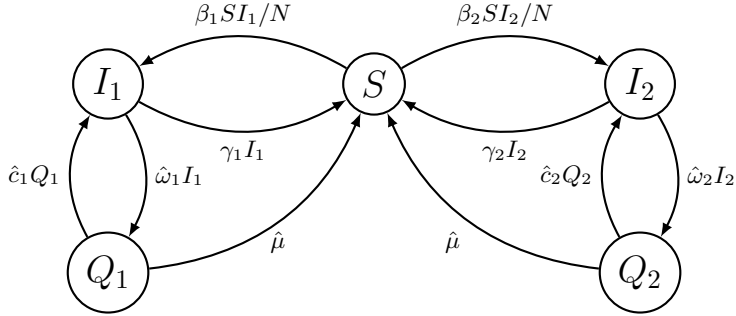
This study aims to investigate the circumstances under which quiescence develops. Thereto, the competition of two rather similar strains is investigated: The difference between the two strains is the degree of quiescence, while all other properties are chosen so that the strain's fitness (reproduction number) is identical. With the standard notation of a two-strain SIQS model with quiescence, we have (also see 
Fig.~\ref{figDiagramSIQS})\\[4pt]
{    \begin{center}
        \begin{tabular}{||c | c | c||} 
         \hline
         Event & Type of transition & Rate \\ [0.5ex] 
         \hline\hline
         Infection with strain 1 &  $I_1 \to I_1+1$ & $\beta S I_1/N$ \\ 
         \hline
         Infection with strain 2 &  $I_2 \to I_2+1$ & $\beta S I_2/N$ \\ 
         \hline
         Recovery of $I_1$ &  $I_1 \to I_1-1$ & $\gamma_1 I_1$ \\
         \hline
         Recovery of $I_2$ & $I_2 \to I_2-1$ & $\gamma_2 I_2$  \\
          \hline
         Deactivation of $I_1$ & $(I_1,Q_1) \to (I_1-1,Q_1+1)$ & $\hat{\omega}_1 I_1$  \\
           \hline
         Deactivation of $I_2$ & $(I_2,Q_2) \to (I_2-1,Q_2+1)$ & $\hat{\omega}_2 I_2$  \\
          \hline
         Reactivation of $Q_1$ & $(I_1,Q_1) \to (I_1+1,Q_1-1)$ & $\hat{c}_1 Q_1$  \\
           \hline
         Reactivation of $Q_2$ & $(I_2,Q_2) \to (I_2+1,Q_2-1)$ & $\hat{c}_2 Q_2$  \\ 
         \hline
		 Death/recovery of $Q_1$ & $Q_1 \to Q_1-1$ & $\hat \mu Q_1$  \\ 
           \hline
         Death/recovery of $Q_2$ & $Q_2 \to Q_2-1$ & $\hat \mu Q_2$  \\ [1ex] 
         \hline
        \end{tabular}
		\end{center} 
		}\par\medskip\noindent 
where always $S=N-I_1-Q_1-I_2-Q_2$. Note that we particularly assume that also quiescent individuals 
are allowed to recover (or die). It is well known that if $\hat\mu=0$, the reproduction number of a strain does not depend 
on the quiescent phase/the parameters for quiescence; if, however, we choose $\hat\mu>0$, the reproduction number will decrease by the net time an infected individual spends in the quiescent state. The fitness of the strains can be measured by the respective reproduction number.

\begin{theorem}\label{reproNumber}
The reproduction number $R_{0,i}$ of strain $i\in\{1,2\}$ is given by 
$$ R_{0,i} = \frac{\beta_i}{\gamma_i+\hat \mu\,\frac{\hat \omega_i}{\hat c_i+\hat \mu}}. 
$$
\end{theorem}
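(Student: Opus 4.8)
The plan is to use the standard interpretation of the basic reproduction number as the product of the rate at which a single infected individual generates new infections at the disease-free equilibrium and the expected length of time it remains able to transmit. Since only active (infected) individuals transmit, and near the disease-free equilibrium $S \approx N$, the infection rate contributed by one individual sitting in the $I_i$ compartment is $\beta_i S/N \approx \beta_i$. Hence $R_{0,i} = \beta_i\, \tau_i$, where $\tau_i$ denotes the expected total time that an individual, introduced into $I_i$, spends in the active compartment before being permanently removed (by recovery from $I_i$ at rate $\gamma_i$, or by death/recovery from $Q_i$ at rate $\hat\mu$). The whole task therefore reduces to computing $\tau_i$.

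First I would model the fate of that single individual as a continuous-time Markov chain on the transient states $I_i$ and $Q_i$ together with an absorbing ``removed'' state, which is legitimate because we work at the linearization around the disease-free equilibrium where the depletion of $S$ is negligible. A renewal/geometric argument then yields $\tau_i$ directly: each sojourn in $I_i$ lasts on average $1/(\gamma_i+\hat\omega_i)$; after leaving $I_i$ the individual returns to $I_i$ (via $Q_i$) with probability $p_i = \frac{\hat\omega_i}{\gamma_i+\hat\omega_i}\cdot\frac{\hat c_i}{\hat c_i+\hat\mu}$, i.e.\ the probability of switching into $Q_i$ rather than recovering, times the probability of reactivating rather than dying. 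The number of visits to $I_i$ is thus geometric with mean $1/(1-p_i)$, and since the sojourn durations are i.i.d.\ and independent of the visit count, Wald's identity gives $\tau_i = \frac{1}{\gamma_i+\hat\omega_i}\cdot\frac{1}{1-p_i}$. Substituting $p_i$, clearing denominators, and using the identity $(\gamma_i+\hat\omega_i)(\hat c_i+\hat\mu)-\hat\omega_i\hat c_i = \gamma_i(\hat c_i+\hat\mu)+\hat\omega_i\hat\mu$ collapses this to $\tau_i = \big(\gamma_i+\hat\mu\,\tfrac{\hat\omega_i}{\hat c_i+\hat\mu}\big)^{-1}$, which gives the stated formula for $R_{0,i}=\beta_i\tau_i$.

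As an independent check I would recompute the same quantity via the next-generation-matrix method. Ordering the infected compartments as $(I_i,Q_i)$, the new-infection matrix is $F=\left(\begin{smallmatrix}\beta_i & 0\\ 0 & 0\end{smallmatrix}\right)$ and the transition matrix is $V=\left(\begin{smallmatrix}\gamma_i+\hat\omega_i & -\hat c_i\\ -\hat\omega_i & \hat c_i+\hat\mu\end{smallmatrix}\right)$; the reproduction number is the spectral radius of $FV^{-1}$, whose only nonzero eigenvalue is $\beta_i(\hat c_i+\hat\mu)/\det V$ with $\det V = \gamma_i(\hat c_i+\hat\mu)+\hat\omega_i\hat\mu$, reproducing the formula. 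Neither route is conceptually deep; the only place to be careful is the bookkeeping of the $I_i\leftrightarrow Q_i$ switching combined with the two distinct removal channels (from $I_i$ and from $Q_i$), and the single algebraic simplification of $\det V$ above. These are precisely the steps where a stray factor or sign could slip in, so I would write them out explicitly rather than leave them to the reader.
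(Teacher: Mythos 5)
Your proposal is correct, and it arrives at the formula by two routes, one of which is genuinely the paper's and one of which is a mild variant. The paper linearizes at the disease-free equilibrium, follows a single infected individual as a Markov chain on $\{I,Q\}$ with generator $A=\left(\begin{smallmatrix}-\hat\omega-\gamma & \hat c\\ \hat\omega & -\hat c-\hat\mu\end{smallmatrix}\right)$, and evaluates $R_0=\int_0^\infty \beta\,P(X_t=I)\,dt=-\beta\,\e_1^TA^{-1}\e_1$ by inverting $A$. Your next-generation-matrix check is this computation in disguise: your $V$ equals $-A$, so the spectral radius $\beta_i(\hat c_i+\hat\mu)/\det V$ is exactly $-\beta_i\,\e_1^TA^{-1}\e_1$. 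Your primary argument instead decomposes the expected total active time $\tau_i=\int_0^\infty P(X_t=I)\,dt$ as (mean sojourn in $I$) $\times$ (mean geometric number of visits) via Wald's identity; this is more probabilistic and avoids the matrix inverse at the cost of having to justify the independence of sojourn lengths and visit count, which is immediate here from the strong Markov property. Both routes correctly handle the two removal channels (rate $\gamma_i$ from $I_i$, rate $\hat\mu$ from $Q_i$), and the algebraic identity $(\gamma_i+\hat\omega_i)(\hat c_i+\hat\mu)-\hat\omega_i\hat c_i=\gamma_i(\hat c_i+\hat\mu)+\hat\omega_i\hat\mu$ that you flag is the same simplification the paper performs. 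No gaps.
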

(Proof: SI, Section~1.1). As expected, we have $R_{0,i}=\beta/\gamma_i$ if $\hat\mu=0$ 
and the reproduction number is decreasing in $\hat\mu$. We aim to compare the effect of quiescence 
in a ``fair'' way -- not the fitness, but the timing due to quiescence is the driving force for the 
dynamics. Therefore, our standing hypothesis is 
$$ R_{0,1} \equiv  R_{0,2} = R_0.$$
which reduces the dimension of the parameter space. If $\hat\omega_i$ are given, 
then $\hat c_i$ are determined. \par\medskip 

We require formulations of the model on different scales for the population size $N$: If $N$ is small 
(around 100, say), the individual-based process as described is adequate. We can use the master equations  (as stated in SI, Section~1.2) to discuss the fate of the model at 
this population size. Technically, this description has the 
disadvantage that the state space is discrete. That limits the available tools to reduce the complexity of the model. 
If, however, 
the population size is larger ($N\approx 1000$, to give a number), the diffusion limit 
(van Kampen system size expansion~\cite{Gardiner.2009}) of the particle 
system is an appropriate description. Herein, we consider relative frequencies 
$x = I_1/N$, $u=Q_1/N$, $y=I_2/N$ and $v=Q_2/N$. 
We obtain a stochastic differential equation (SDE), where the state space 
is continuous. The Fokker-Planck equation of this SDE is stated in SI, Section~A.2. 
More techniques, particularly for dimension reduction are available. 
If, however,  $N$ tends to infinity, the noise terms cancel out, and we end up with the 
ODE ($s=1-(x+y+u+v)$)
\begin{eqnarray*}\label{eq: deterministic SIIQQS rescaled}
x' &=& \beta_1 \, s\, x-\hat\omega_1\, x+\hat c_1 u  -\gamma_1 x\\
u' &=& \hat\omega_1\, x-\hat c_1  u-\hat\mu u\\
y' &=& \beta_2 \, s\, y -\hat\omega_2\, y+\hat c_2 v -\gamma_2 y\\
v' &=& \hat\omega_2\, y-\hat c_2  v-\hat\mu v
\end{eqnarray*}
As the reproduction numbers for both strains are the same, a short computation 
indicates that this ODE system has a line of stationary states, the coexistence line (CL). We only state this 
theorem for the core model, where all parameters are identical but the switching rates between I and Q, and furthermore $\hat\mu=0$. 

\begin{theorem} Let $\gamma_1=\gamma_2$ and $\hat\mu=0$. 
If $R_{0,1}=R_{0,2}=R_0$ the ODE has a line of stationary points, given by 
$$ (x,u,y,v) = 
\bigg(x^\ast,\,\, 
\vartheta_1 x^\ast,\,\, 
\frac{r-(1+\vartheta_1)\, x^\ast}{1+\vartheta_2},\,\, 
\vartheta_2\frac{r-(1+\vartheta_1)\, x^\ast}{1+\vartheta_2}\bigg),
\qquad x^\ast \in[0,r/(r+\vartheta_1)]$$
where $r = 1-1/R_0$, $\vartheta_i = \hat\omega_i/\hat c_i$. This coexistence line is transversally stable. 
\end {theorem}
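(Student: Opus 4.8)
The plan is to treat the two assertions — the explicit form of the coexistence line and its transversal stability — separately, since the first is a direct algebraic computation and the second is where the real work lies.

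First I would fix the parameters. Since $\hat\mu=0$, Theorem~\ref{reproNumber} gives $R_{0,i}=\beta_i/\gamma_i$, so the hypotheses $\gamma_1=\gamma_2=:\gamma$ and $R_{0,1}=R_{0,2}=R_0$ force $\beta_1=\beta_2=:\beta$ with $R_0=\beta/\gamma$. To locate the equilibria I would set the right-hand sides to zero. The equations $u'=0$ and $v'=0$ immediately give $u=\vartheta_1 x$ and $v=\vartheta_2 y$ with $\vartheta_i=\hat\omega_i/\hat c_i$. Substituting $\hat c_1 u=\hat\omega_1 x$ into the $x$-equation, the quiescence terms cancel and it collapses to $x(\beta s-\gamma)=0$; likewise the $y$-equation collapses to $y(\beta s-\gamma)=0$. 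On the coexistence line both strains are present, so neither $x$ nor $y$ vanishes and we must have $s=\gamma/\beta=1/R_0$, i.e. $x+u+y+v=1-1/R_0=r$. Inserting $u=\vartheta_1 x$, $v=\vartheta_2 y$ and solving for $y$ reproduces the stated parametrization, and non-negativity of $y$ (equivalently $v$) pins down the admissible range of $x^\ast$, the upper endpoint being the value at which $y$ hits zero.

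For stability I would compute the Jacobian $J$ at a generic point of the line, using $\beta s=\gamma$ to simplify the diagonal. The decisive observation is that $J$ splits as
$$ J = B - \beta\, w\, \mathbf{1}^{T}, \qquad w=(x,0,y,0)^{T},\quad \mathbf 1=(1,1,1,1)^{T}, $$
where $B$ is the block-diagonal ``switching'' matrix built from the two $2\times2$ blocks $\left(\begin{smallmatrix}-\hat\omega_i & \hat c_i\\ \hat\omega_i & -\hat c_i\end{smallmatrix}\right)$. Two structural facts drive the rest. Because each block has zero column sums, $\mathbf 1^{T}B=0$, so $\mathbf 1^{T}J=-\beta(x+y)\,\mathbf 1^{T}$: the all-ones covector is a left eigenvector of $J$ with eigenvalue $-\beta(x+y)<0$. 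Consequently the hyperplane $H=\{\,\xi:\mathbf 1^{T}\xi=0\,\}$ is $J$-invariant, and on $H$ the rank-one term drops out, so that $J|_{H}=B|_{H}$.

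It then remains to read off the spectrum of $B|_{H}$. Each block contributes eigenvalues $0$ and $-(\hat\omega_i+\hat c_i)$; the eigenvectors $(1,-1,0,0)$ and $(0,0,1,-1)$ of the two negative eigenvalues already lie in $H$, while the one-dimensional intersection $\ker B\cap H$ supplies the remaining eigenvalue $0$. Hence the four eigenvalues of $J$ are $-\beta(x+y)$, $-(\hat\omega_1+\hat c_1)$, $-(\hat\omega_2+\hat c_2)$ and a simple $0$, whose eigenvector is exactly the tangent to the coexistence line (one checks that $\vartheta_i=\hat\omega_i/\hat c_i$ makes the tangent lie in $\ker B$). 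Since the only zero eigenvalue is the one forced along the line and all three transverse eigenvalues are strictly negative, the line is transversally stable. The main obstacle is precisely recognizing the decomposition $J=B-\beta w\,\mathbf 1^{T}$ together with $\mathbf 1^{T}B=0$; without it one is left factoring a quartic characteristic polynomial and checking a Routh--Hurwitz condition on the resulting cubic, which is far messier and obscures why the negative eigenvalues are simply the switching rates $\hat\omega_i+\hat c_i$ and the aggregate infection rate $\beta(x+y)$.
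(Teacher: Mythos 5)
Your proposal is correct and follows the same route the paper takes, namely solving for the equilibria directly and then invoking the centre manifold theorem once the linearisation is shown to have three eigenvalues with negative real part and a simple zero along the line; the paper states only this one-sentence outline, and your rank-one decomposition $J=B-\beta\,w\,\mathbf{1}^{T}$ with $\mathbf{1}^{T}B=0$ is a clean and verifiably correct way to carry out the ``direct computations'' it alludes to, yielding exactly the spectrum $\{-\beta(x+y),\,-(\hat\omega_1+\hat c_1),\,-(\hat\omega_2+\hat c_2),\,0\}$. One incidental remark: your derivation gives the upper endpoint of the parameter range as $r/(1+\vartheta_1)$ (where $y$ vanishes), which indicates that the bound $r/(r+\vartheta_1)$ printed in the theorem is a typo.
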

The result follows by direct computations and an application of the centre manifold theorem~\cite{guckenheimer2013nonlinear} (all eigenvalues
of the linearisation have negative real part but one which is zero).\par\medskip 

For the case of two strains with an equal reproduction number, a standard situation often discussed is obtained: 
The drift of the stochastic model drives the state into the vicinity of the centre manifold/line of stationary 
points on a fast time scale; once the state is there, the drift term becomes small, and the noise takes over (Fig.~\ref{simulSIS}). 
The system performs a random walk along this coexistence line until one strain goes into fixation (the other strain dies out). 
The approximation results target the characterization of this one-dimensional random walk to get a grip on the absorption/fixation probabilities.

\section{Reduction of complexity}

\begin{figure}[tb]
\begin{center}
\includegraphics[width=12cm]{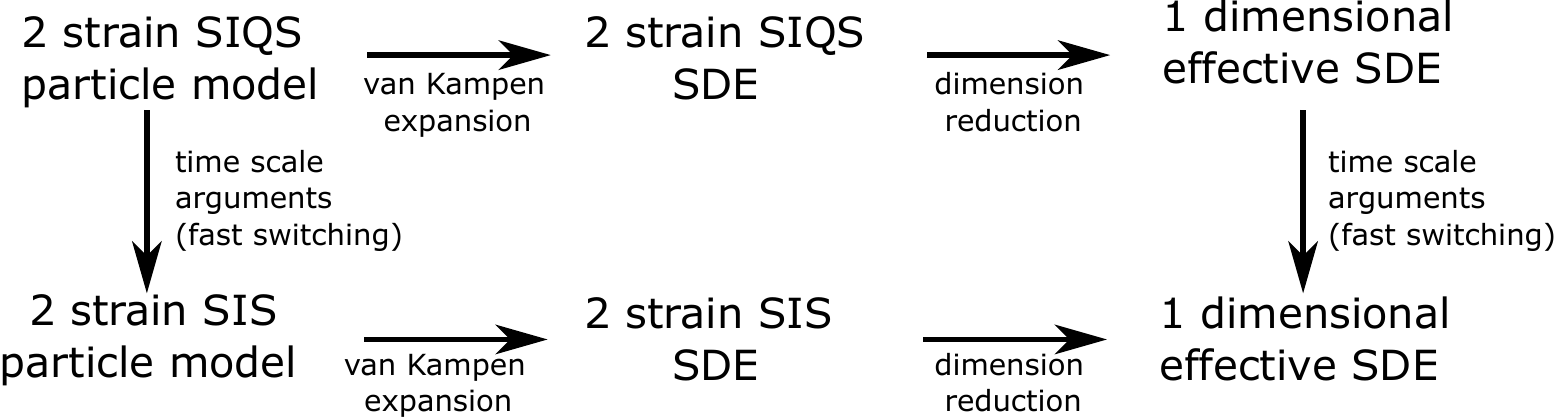}
\end{center}
\caption{Reduction of the model's complexity by time scale arguments and by dimension reduction. The diagram commutes..}\label{complexReduct}
\end{figure}
Let's outline the steps we take to reduce complexity. Our starting point is a SIQS particle model. Because the state space is discrete, we primarily rely on a time-scale argument. If we assume that 
the switching rates between the I and the Q states are large, the particle system quickly approaches a quasi-steady state. This allows us to replace the SIQS model with an SIS model with re-scaled parameters (see Fig.~\ref{complexReduct}). That is, we have two particle systems describing the competition of two strains, one SQIS and one SIS model. These systems exhibit similar behavior, as we will see in section~\ref{timeScaleReduction}.\par\medskip 

To further reduce the state space, we use a standard van Kampen expansion to derive a Fokker-Planck equation. 
The resulting state space of the Fokker-Planck equation is 4 dimensional (SIQS model) resp.\ 2 dimensional (SIS system). The advantage 
of the transition from a discrete to a continuous state space is that the drift term of the Fokker-Planck equations 
coincides a deterministic ODE, and drives the state towards a 1-dimensional  line of (deterministic) stationary states. \\
In order to understand the long-term dynamics of the systems, it is sufficient to understand the one-dimensional dynamics along this 
line of stationary states. In this dimension reduction, we follow the method of Kogan, and obtain 
one-dimensional Fokker Planck equations (one for the SIQS and one for the SIS model, Section~\ref{dimReducitonSection}).\\
It should be noted that the limit based on rapid switching rates transfers the one-dimensional Fokker-Planck equation for the  SIQS model exactly to the one-dimensional Fokker-Planck equation for the SIS model. In this respect, the picture closes (end of Section~\ref{dimReducitonSection}). \par\medskip 

The SDE's derived with the help of dimension reduction allow to work out the fixation probabilities of the two strains (Section~\ref{fixProbab}). These fixation probabilities will be a central building block used by Adaptive Dynamics.\par\medskip 

We describe the steps below in a rather informal way; the (partially lengthy) technical derivations can be found in the SI.

\subsection{Rapid switching: From the branching SIQS to the branching SIS model}
\label{timeScaleReduction}
We assume that $\hat c$ and $\hat \omega$ are large. We might formally express this fact by replacing 
$$ \hat c\mapsto \hat c/\eps,\qquad  \hat \omega\mapsto\hat\omega/\eps$$
where $\eps$ is small. For a one-strain model, this limit has been discussed in~\cite{usman:arxiv}. In that, we will be able to remove the Q state. The idea is based on Markov processes on 
two time scales, as e.g.~explained in the monography~\cite{Yin.2013}: At a given point in time, 
the total number of  individuals infected by strain $i\in\{1,2\}$ is given by 
$\tilde I_i=I_i+Q_i$. 
Any infected individual rapidly oscillates between $I$ and $Q$, such that it is dormant  (state $Q_i$) with probability $\zeta_i := \hat c_i/(\hat c_i + \hat\omega_i)$, and infectious (state $I_i$) with probability $1-\zeta_i$. We can remove the stratification of infected individuals into an infectious and a dormant class. Instead,  we introduce effective infection/removal rates, 
$$ \beta\mapsto \beta\frac{\hat c_i}{\hat c_i + \hat\omega_i}, \qquad 
\gamma_i\mapsto \gamma_i\frac{\hat c_i}{\hat c_i + \hat\omega_i} + \hat\mu   \frac{\hat \omega_i}{\hat c_i + \hat\omega_i}$$
which weight the respective rates by the probability (or fraction of time) an infected individual spends in the 
infectious/quiescent state. The proof of the following theorem can be found in SI, Section~2.1.%Appendix~\ref{largeSwitchLimitParticle}.
\begin{theorem}\label{rapidSwitchTheo}
Assume that $\hat\omega_i$, $\hat c_i\rightarrow\infty$ such that $\frac{\hat \omega_i}{\hat c_i + \hat\omega_i}$ converge to $\zeta_i\in(0,1)$.
Then, the 2-strain SIQS model converges in distribution to a 2-strain SIS model with transition rates
{    \begin{center}
        \begin{tabular}{||c | c | c||} 
         \hline
         Event & Type of transition & Rate \\ [0.5ex] 
         \hline\hline
         Infection with strain 1 &  $\tilde I_1 \to \tilde I_1+1$ & $\beta\,(1-\zeta_1) S \tilde I_1/N$ \\ 
         \hline
         Infection with strain 2 &  $\tilde I_2 \to \tilde I_2+1$ & $\beta\,(1-\zeta_2) S \tilde I_2/N$ \\ 
         \hline
         Recovery of an infected individual (strain 1) &  $\tilde I_1 \to \tilde I_1-1$ & $(\gamma_1(1-\zeta_1)+\hat\mu\zeta_1) \tilde I_1$ \\
         \hline
         Recovery of an infected individual (strain 2)  & $\tilde I_2 \to \tilde I_2-1$ & $(\gamma_2(1-\zeta_2)+\hat\mu\zeta_2) \tilde I_2$  \\
          \hline
        \end{tabular}
		\end{center} 
		}\par\medskip\noindent 
(where $S=N-\tilde I_1-\tilde I_2$) in the sense that  $I_i+Q_i \overset{d}{\to} \tilde I_i$.
\end{theorem}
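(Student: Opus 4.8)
The plan is to treat the two-strain SIQS particle system as a continuous-time Markov chain on the finite state space $\{(I_1,Q_1,I_2,Q_2): I_1+Q_1+I_2+Q_2\le N\}$ and to read off the limit from the classical theory of Markov processes on two time scales~\cite{Yin.2013}. With the rescaling $\hat c_i\mapsto \hat c_i/\eps$, $\hat\omega_i\mapsto\hat\omega_i/\eps$, the generator of the process splits as $\mathcal L_\eps=\tfrac1\eps\mathcal L_0+\mathcal L_1$, where $\mathcal L_0$ collects the fast deactivation/reactivation events (which move individuals between $I_i$ and $Q_i$) and $\mathcal L_1$ collects the slow events (infection, recovery, death). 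The conceptually central observation is that $\mathcal L_0$ conserves the totals $\tilde I_i=I_i+Q_i$: the fast flow only redistributes each strain's infecteds between the active and dormant compartments. Hence $(\tilde I_1,\tilde I_2)$ are the slow variables, and the goal is to identify their limiting generator.

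Next I would analyse the fast subsystem with the totals frozen. For fixed $\tilde I_i=m$, the states $(I_i,Q_i)=(k,m-k)$ form a finite birth--death chain with up-rate $\hat c_i(m-k)/\eps$ and down-rate $\hat\omega_i k/\eps$, which is exactly $m$ independent two-state switches. It is irreducible, hence admits a unique stationary distribution $\pi_{\tilde I}$, under which each of the $m$ individuals is independently dormant with probability $\zeta_i=\hat\omega_i/(\hat c_i+\hat\omega_i)$ and active with probability $1-\zeta_i$ (individual detailed balance $\hat\omega_i(1-\zeta_i)=\hat c_i\zeta_i$). In particular $\mathbb E_{\pi_{\tilde I}}[I_i\mid \tilde I_i]=(1-\zeta_i)\tilde I_i$ and $\mathbb E_{\pi_{\tilde I}}[Q_i\mid\tilde I_i]=\zeta_i\tilde I_i$. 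Because the state space is finite and the per-capita switching rates are bounded, the fast chain is uniformly ergodic with a spectral gap bounded away from $0$ uniformly in $(\tilde I_1,\tilde I_2)$ (for the Ehrenfest-type chain the gap is $\hat c_i+\hat\omega_i$, independent of $m$), which is the ingredient needed for averaging.

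The averaging principle then gives the limiting generator on functions $f(\tilde I_1,\tilde I_2)$ of the slow variables as $\bar{\mathcal L}f=\sum_{(I,Q)}\pi_{\tilde I}(I,Q)\,(\mathcal L_1 f)(I,Q)$. All slow rates are affine in $(I_i,Q_i)$, so only the first conditional moments enter. Averaging the infection term $\beta S I_i/N$ replaces $I_i$ by $(1-\zeta_i)\tilde I_i$ while leaving $S=N-\tilde I_1-\tilde I_2$ (a function of the slow variables) untouched, giving the effective rate $\beta(1-\zeta_i)\,S\tilde I_i/N$; averaging the removal terms $\gamma_i I_i+\hat\mu Q_i$ gives $(\gamma_i(1-\zeta_i)+\hat\mu\zeta_i)\tilde I_i$. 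These are precisely the transition rates of the two-strain SIS model in the statement, so $\bar{\mathcal L}$ is its generator.

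To make this rigorous I would use the martingale-problem formulation together with Kurtz's averaging theorem: since everything lives in a fixed finite state space, tightness of $\{(\tilde I_1^\eps,\tilde I_2^\eps)\}_\eps$ in the Skorokhod topology is immediate from the boundedness of the jump rates, and the occupation measures of the fast variable are relatively compact; one passes to the limit and identifies every limit point as a solution of the martingale problem for $\bar{\mathcal L}$, whose uniqueness (a well-posed finite-state SIS chain) upgrades this to convergence in distribution and yields the stated marginal convergence $I_i+Q_i\overset{d}{\to}\tilde I_i$. The main obstacle is exactly the averaging step: one must control the fast variable's fluctuations on the vanishing $O(\eps)$ time scale uniformly over the slow variable, so that the slow motion sees only the averaged rates. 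The finiteness of the state space and the uniform spectral gap of the fast chain are what make this control routine rather than delicate, leaving only the moment bookkeeping carried out above.
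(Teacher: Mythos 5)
Your proposal is correct, and the heart of it coincides with the paper's argument: both identify the totals $\tilde I_i=I_i+Q_i$ as the slow variables conserved by the fast switching, both compute the quasi-stationary law of the fast subsystem as the binomial distribution in which each infected individual is independently active with probability $1-\zeta_i$ (your ``$m$ independent two-state switches'' is exactly the detailed-balance computation $s_i^{(l)}\propto(\hat c/\hat\omega)^i\binom{l}{i}$ in the SI), and both exploit the affineness of the slow rates so that only the first conditional moments $\mathbb{E}[I_i\mid\tilde I_i]=(1-\zeta_i)\tilde I_i$ enter the averaged rates. Where you genuinely diverge is the technical vehicle for the limit. The paper works at the level of the master equation: it multiplies the forward Kolmogorov ODEs by $\eps$, invokes Fenichel/Tikhonov singular perturbation theory to place the sub-chain probabilities $s_i^{(l)}$ on the slow manifold after an $O(\eps)$ initial layer, and then checks that the marginals $r_l$ satisfy the SIS master equation; the two-strain case is handled by a short verbal argument that the two fast sub-chains do not interact. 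You instead split the generator as $\tfrac1\eps\mathcal L_0+\mathcal L_1$ and invoke Kurtz-type stochastic averaging via the martingale problem, with tightness from the finite state space and identification of limit points through occupation measures. Your route buys a stronger conclusion (process-level convergence in the Skorokhod topology rather than convergence of the time-$t$ marginals), treats both strains simultaneously without the paper's separate factorization argument, and makes explicit the uniform spectral gap that the paper leaves implicit; the paper's route is more elementary and self-contained, requiring only linear ODE perturbation theory and the explicit binomial computation. Either is a complete proof of the stated convergence $I_i+Q_i\overset{d}{\to}\tilde I_i$.
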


\subsection{Dimension reduction for the van Kampen expansion of the SIQS and SIS model}
\label{dimReducitonSection}
We do not go into the van Kampen expansion but simply state the result of this diffusion limit (SI, Section~1.2) for the SIQS model. We briefly describe the dimension reduction method 
which goes back to an idea provided in a paper by Kurtz~\cite{Kurtz.1973} and was carried forward by many authors (also see \cite[chapter~1, corollary 7.8]{ethier2009markov} or the review article~\cite[Section~6]{Givon2004extracting}).  
 We sketch the general method, following 
the nice explanations in \cite[chapter 4.4]{Majda2001}, also see~\cite{Heinrich.2018}; the details, and particularly the idea of Kogan et al.~\cite{Kogan.2014} to avoid some involving steps 
are carried out in the SI, Section~3.%Appendix~\ref{kogan}.
\par\medskip 

\begin{figure}
    \centering
        \includegraphics[width=7cm]{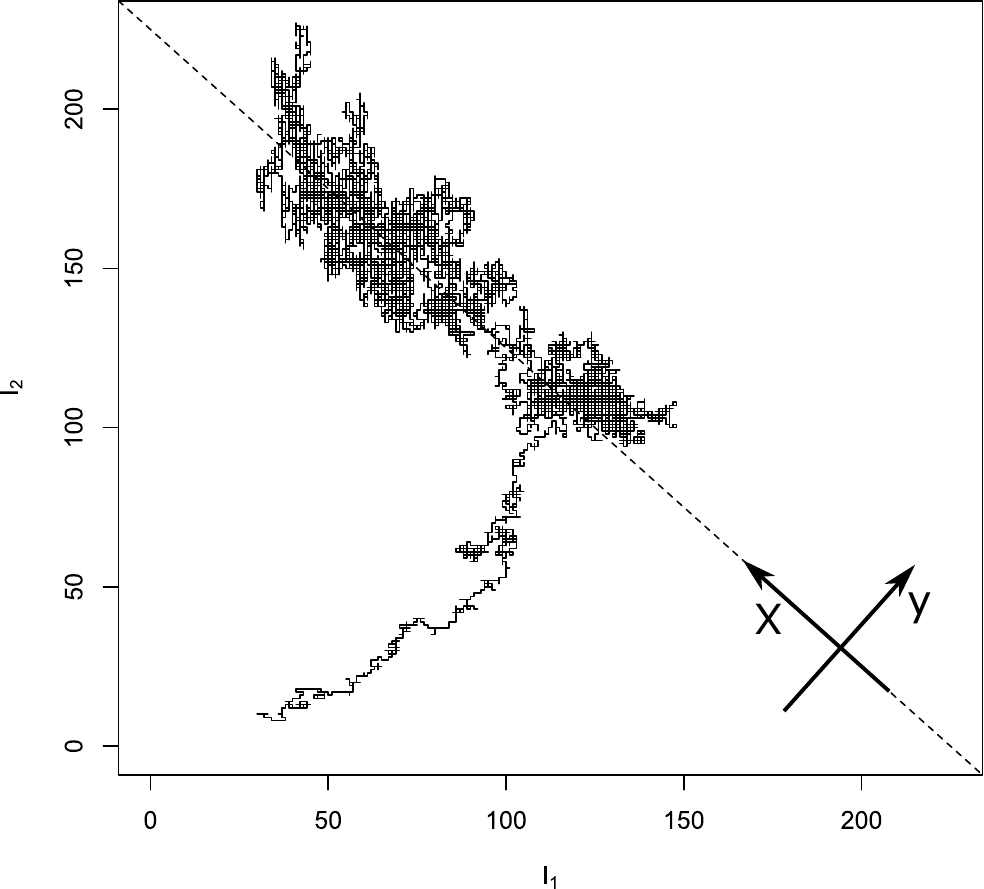}
        \caption{Realization of the individual-based 2-strain SIS model 
        ($\beta_1=1$, $\beta_2=1.5$, $R_0=2$ for both strains, $N=450$, initial value $I_1=30$, $I_2=10$). Dashed line: Coexistence line. $(X,y)$: new coordinate system (note that $Y=y/\sqrt{N}$). }
        \label{simulSIS}
\end{figure}

{\it Preparations -- appropriate coordinate system:}\\
The observations show that the state of the stochastic system moves towards the coexistence line, and performs  
a random walk there (Fig.~\ref{simulSIS}). In the first step, we choose a new coordinate system, 
where $X$ points along this coexistence line, 
and $y_i$ are directions perpendicular to $X$. As the process will stay $1/\sqrt{N}$-close to this line, we blow up 
the $y$ coordinates, $y_i=Y_i\eps$ with $\eps^2=1/N$. The r.h.s.\ of the resulting Fokker-Planck equation depends on $\eps$. 
We intend to investigate the long-term behaviour, and hence choose the time $t = \eps^{2} \tilde\tau$, such that our Fokker-Planck equation reads 
$$\eps^2\partial_{\tilde\tau} \rho^\eps(x,y,\tilde \tau) = L^\eps \rho^\eps(x,y,\tilde\tau).$$
The operator $L^\eps$ can be expanded w.r.t~$\eps$, 
$$ L^\eps \rho^\eps  = L^{(0)} \rho^\eps + \eps L^{(1)} \rho^\eps + \eps^2 L^{(2)} \rho^\eps +\cdots. $$
With $L^\eps$ also the solution $\rho^\eps$ of $\eps^2\partial_t \rho^\eps(x,y,t) = L^\eps \rho^\eps(x,y,t)$ depends on $\eps$.
\par\medskip 

{\it Approximation theorem:}\\
In order to state the approximation theorem, we consider the backward equation, 
$$ - \eps^2 \partial_s w^\eps = L^{(0)+} w^\eps + \eps L^{(1)+} w^\eps + \eps^2 L^{(2)+} w^\eps +\cdots $$
where $L^{(i)+}$ are the formal adjoint operators for $L^{(i)}$. $w^\eps$ is the solution of the backward equation (which depends on $\eps$). 
Formally the expansion of $w^\eps$ in terms of $\eps$ reads
$$ w^\eps = w^{(0)}+\eps w^{(1)}+\eps^2w^{(2)}+\cdots.$$
Inserting $w^\eps$ into the backward equation and  equating the coefficients 
of equal power in $\eps$ yields 
\begin{eqnarray*}
	L^{(0)+} w^{(0)} &=& 0\\
	L^{(0)+} w^{(1)} &=& - L^{(1)+} w^{(0)}\\
	L^{(0)+} w^{(2)} 
	&=& 
	-\partial_s w^{(0)} 
	- L^{(2)+}w^{(0)}
	- L^{(1)+}w^{(1)}
	\\
	&\ldots &
\end{eqnarray*}
Any feasible solution is element of the range of  $L^{(0)+}$.
A natural assumption is that 
 the operator $L^{(0)+}$ generates a semigroup that converges to a projection operator for $t\rightarrow\infty$, 
 $$ \lim_{t\rightarrow\infty} e^{L^{(0)+}\, t} \cdot = {\mathbf P} \,\, \cdot$$
As $L^{(0)+} w^{(0)} = 0$, we conclude ${\mathbf P} w^{(0)}=w^{(0)}$. Therewith, we understand the consequence of the first equation.
\par\medskip 

Also the second equation 
$L^{(0)+} w^{(1)} = - L^{(1)+} w^{(0)}$ implies a solvability condition, 
$${\mathbf P}\,\,L^{(1)+} w^{(0)} = 0.$$
Therefore, there is a (pseudo)-inverse of $L^{(0)+}$, and 
$$ w^{(1)} = -  (L^{(0)+})^{-1} L^{(1)+} w^{(0)}.$$
Kogan's method (see SI, Section~3) %~\ref{kogan}) 
centrally suggests a way of avoiding the explicit computation of these pseudo-inverses.  \par\medskip 

Also, the solvability of the third equation gives us information, 
$$0= {\mathbf P}\,\,(	-\partial_s w^{(0)} 
- L^{(2)+}w^{(0)}
- L^{(1)+}w^{(1)}).$$
We replace $w^{(1)}$ by our result from above, and find 
$$-\partial_s w^{(0)} = {\mathbf P}\,L^{(2)+}w^{(0)} - 
{\mathbf P} (L^{(1)+})(L^{(0)+})^{-1} L^{(1)+}w^{(0)}.$$
Of course, the arguments presented here have only been formally, but it is possible to extend them to a proof of the 
following theorem~\cite[chapter 4.4]{Majda2001}. 
The proof of the following theorem can be found in SI, Section~3.

\begin{theorem}\label{approxSIQS}
	Let $w^{\eps}(s,x,y |t)$ satisfy 
	$$ - \eps^2 \partial_s w^\eps = L^{(0)+} w^\eps + \eps L^{(1)+} w^\eps + \eps^2 L^{(2)+} w^\eps, \qquad w^\eps(t,x,y|t) = f(x) $$
	where $L^{(0)+}$, $L^{(1)+}$, $L^{(2)+}$ are backward Fokker-Planck 
	operators, $L^{(0)+}$ generates a stationary process such that 
	 $$ \lim_{t\rightarrow\infty} e^{L^{(0)+} t} \cdot = {\mathbf P} \,\, \cdot,$$
	and ${\mathbf P}L^{(2)+}{\mathbf P} = 0$. Assume ${\mathbf P} f=f$. Then, in the limit as $\eps\rightarrow 0$, $w^{\eps}(s,x,y|t)$ tends to $w^{(0)}(s,x|t)$ 
	for $-T<s\leq t$, $T<\infty$, uniformly in $x$ and $y$ on compact sets, 
	where $w^{(0)}$  satisfies ${\mathbf P}w^{(0)}=0$ and solves the backward equation 
	$$ -\partial_s w^{(0)} = {\hat L}^+w^{(0)},\qquad w^{(0)}(t,x|t)=f(x)$$
	with 
	${\hat L}^+ 
	= {\mathbf P}\,(L^{(2)+}){\mathbf P} - 
	{\mathbf P} L^{(1)+}(L^{(0)+})^{-1} L^{(1)+}{\mathbf P}$.
\end{theorem}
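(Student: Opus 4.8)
The plan is to promote the formal two-scale expansion sketched above to a rigorous convergence statement, following the template of~\cite{Majda2001} and the semigroup-convergence philosophy going back to~\cite{Kurtz.1973}. First I would fix the functional-analytic framework. The fast variable here is the transverse coordinate $y$, and $L^{(0)+}$ is the backward generator of the fast stationary process in $y$ with the slow variable $x$ frozen. The decisive structural input is that this fast process is ergodic with a spectral gap: its invariant measure $\pi_x(\mathrm{d}y)$ exists, $\ker L^{(0)+}$ consists exactly of functions independent of $y$, and $\mathbf{P}$ is the averaging projection $(\mathbf{P}g)(x)=\int g(x,y)\,\pi_x(\mathrm{d}y)$. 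The assumption $\lim_{t\to\infty}e^{L^{(0)+}t}=\mathbf{P}$ encodes precisely this ergodicity, and the spectral gap guarantees, via the Fredholm alternative, that the restriction of $L^{(0)+}$ to the complementary subspace $\mathrm{Range}(I-\mathbf{P})$ is boundedly invertible; this is what legitimises the pseudo-inverse $(L^{(0)+})^{-1}$ appearing in $\hat{L}^+$.

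With this in hand I would make the hierarchy rigorous. Solving $L^{(0)+}w^{(0)}=0$ forces $\mathbf{P}w^{(0)}=w^{(0)}$, so $w^{(0)}=w^{(0)}(s,x)$ is independent of the fast variable. The second equation $L^{(0)+}w^{(1)}=-L^{(1)+}w^{(0)}$ is solvable if and only if $\mathbf{P}L^{(1)+}w^{(0)}=0$, which holds because $w^{(0)}$ lies in the range of $\mathbf{P}$ and the relevant mixed projection vanishes by the structural hypotheses; its solution in $\mathrm{Range}(I-\mathbf{P})$ is $w^{(1)}=-(L^{(0)+})^{-1}L^{(1)+}w^{(0)}$. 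Applying $\mathbf{P}$ to the third equation and substituting $w^{(1)}$ yields the closed effective equation $-\partial_s w^{(0)}=\hat{L}^+w^{(0)}$ with $\hat{L}^+=\mathbf{P}L^{(2)+}\mathbf{P}-\mathbf{P}L^{(1)+}(L^{(0)+})^{-1}L^{(1)+}\mathbf{P}$. At this point I would check that $\hat{L}^+$ is itself a genuine backward diffusion generator, so that the limit problem with terminal datum $f$ is well posed with a unique classical solution $w^{(0)}(s,x\,|\,t)$; the key ellipticity point is that $-\mathbf{P}L^{(1)+}(L^{(0)+})^{-1}L^{(1)+}\mathbf{P}$ contributes a nonnegative effective diffusion, a standard consequence of $L^{(0)+}$ being dissipative on $\mathrm{Range}(I-\mathbf{P})$.

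The core analytic step is the remainder estimate. I would set $\tilde{w}^\eps=w^{(0)}+\eps\,w^{(1)}+\eps^2 w^{(2)}$, where $w^{(2)}$ solves the third hierarchy equation in $\mathrm{Range}(I-\mathbf{P})$, and compute the residual $R^\eps:=-\eps^2\partial_s\tilde{w}^\eps-(L^{(0)+}+\eps L^{(1)+}+\eps^2 L^{(2)+})\tilde{w}^\eps$. By construction the terms of order $\eps^0,\eps^1,\eps^2$ cancel, leaving $R^\eps=O(\eps^3)$ in the appropriate weighted norm, while the terminal mismatch $\tilde{w}^\eps(t,\cdot\,|\,t)-f$ is $O(\eps)$; note that no initial-layer correction is required precisely because $\mathbf{P}f=f$ places the datum in $\ker L^{(0)+}$. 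The difference $d^\eps:=w^\eps-\tilde{w}^\eps$ then solves the same perturbed backward equation driven by the small source $R^\eps$ and the small terminal mismatch, and a Feynman--Kac representation (equivalently a Gronwall estimate on a suitable energy) bounds $\sup_{-T<s\le t}\|d^\eps\|\le C(T)\big(\eps^{-2}\|R^\eps\|+\|\tilde{w}^\eps(t,\cdot\,|\,t)-f\|\big)\to 0$. Since $\tilde{w}^\eps\to w^{(0)}$ uniformly on compact $(x,y)$-sets, the claim follows.

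I expect the main obstacle to be the uniformity of this remainder estimate. Two difficulties interact. First, the transverse dynamics is of Ornstein--Uhlenbeck type with a Gaussian invariant measure, so the natural function spaces must tolerate polynomial growth in $y$; the pseudo-inverse $(L^{(0)+})^{-1}$ and the functions $w^{(1)},w^{(2)}$ are controlled only in weighted norms, and one must verify that $\mathbf{P}$, $(L^{(0)+})^{-1}$ and $L^{(1)+},L^{(2)+}$ map these spaces into one another while preserving the growth budget. Second, the division by $\eps^2$ in passing from $R^\eps$ to the effective source degrades the $O(\eps^3)$ residual to $O(\eps)$, which still suffices but demands that the residual bound be genuinely uniform in $s$ and in $y$ on the relevant sets rather than merely pointwise. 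Confirming cleanly that $\mathbf{P}f=f$ truly suppresses any boundary layer at $s=t$, and hides no $1/\eps$ blow-up there, is the delicate point that the formal calculation alone cannot certify.
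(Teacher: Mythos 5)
Your plan is correct in outline and is, in substance, the proof of the reference the paper itself leans on (\cite[chapter 4.4]{Majda2001}): the hierarchy from matching powers of $\eps$, solvability via the Fredholm alternative on $\mathrm{Range}(I-{\mathbf P})$, construction of the truncated expansion $\tilde w^\eps=w^{(0)}+\eps w^{(1)}+\eps^2 w^{(2)}$, an $O(\eps^3)$ residual, and a Duhamel/contraction estimate for the backward Markov semigroup generated by $\eps^{-2}L^{\eps+}$ that converts the residual into an $O(\eps)$ error uniformly on $-T<s\le t$. The paper does not actually reprove this abstract statement; what its SI section on Kogan's reduction supplies is the concrete implementation for the SIQS model on the \emph{forward} equation, whose main device is to find a linear function $F(X,Y)=\sum_i A_i(X)Y_i$ with $L_0^+F=\sum_i\Xi_iY_i$, so that $\int L_1\rho_1\,dY$ can be evaluated without ever computing the pseudo-inverse $(L^{(0)+})^{-1}$ explicitly -- exactly the object your argument handles abstractly via the spectral gap. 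Your route buys the general convergence statement; the paper's route buys computable drift and diffusion coefficients. Two fine points: the solvability condition you need for the second hierarchy equation is ${\mathbf P}L^{(1)+}{\mathbf P}=0$ (the hypothesis printed as ${\mathbf P}L^{(2)+}{\mathbf P}=0$ in the theorem is evidently a misprint, since otherwise the first term of $\hat L^+$ would vanish identically), and ${\mathbf P}w^{(0)}=w^{(0)}$ rather than $=0$ is what the kernel condition yields; you state both correctly. The concern you flag about weighted spaces for the Gaussian fast variable is precisely where the remaining technical work lies, but it is standard for OU-type $L^{(0)+}$ and does not threaten the argument.
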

\medskip

Of course, we return to the forward equation. If we use this reasoning for our SIQS model, we find that the long-term projector ${\mathbf P}$ 
``lives'' on the one-dimensional line, and the resulting approximate Fokker-Planck equation is one-dimensional. For the full 2-strain SIQS model, 
we limit ourselves to the case $\gamma_1=\gamma_2$ and $\hat\mu=0$ to reduce the computational complexity.

\begin{theorem} Consider the van Kampen 
expansion of the 2-strain SIQS model for $\gamma_1=\gamma_2=\gamma$ and $\hat\mu=0$.  
With rescaled time  $\tau  = \gamma \varepsilon^2 t/r$ the one-dimensional approximate Fokker-Planck equation reads 
\begin{equation}
    \begin{aligned}\label{FP1dim}
     \partial_{\tau} f(X,\tau) = -\partial_X v(X) f(X,\tau) + \frac{1}{2} \partial_{X}^2 D(X) f(X,\tau), \qquad X\in[0,1]
    \end{aligned}
\end{equation}
with effective drift and diffusion terms are given by (recall $\zeta_i = \hat\omega_i/(\hat c_i+\hat\omega_i)$, 
and introduce $\omega_i = \hat\omega_i/(\beta r)$)
\begin{align}\label{Drift1dim}
\alpha(X) &= 1-\zeta_2+(\zeta_2-\zeta_1)X\\
    v(X) &= -(1-\zeta_1)(1-\zeta_2)\frac{\zeta_1\zeta_2 (\omega_2-\omega_1) \alpha(X)
    +\omega_1\omega_2(\zeta_2-\zeta_1) }{ \alpha(X)^2 \left[\omega_1+\zeta_1 \alpha(X) \right]\left[\omega_2+\zeta_2 \alpha(X) \right]}  X(1-X)\label{siqsDtift}\\
    D(X) &= \frac{2(1-\zeta_1)(1-\zeta_2)}{ \alpha(X)}X(1-X).\label{siqsNoise}
\end{align}
\end{theorem}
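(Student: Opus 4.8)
The plan is to run the dimension-reduction scheme sketched above on the four-dimensional van Kampen diffusion of the SIQS model and then to invoke Theorem~\ref{approxSIQS}. I would start from the four-dimensional Fokker-Planck equation in the frequencies $(x,u,y,v)$ (SI, Section~1.2): its drift is exactly the right-hand side of the deterministic ODE, and its diffusion matrix $B$ is the van Kampen covariance assembled from the transition rates of the table. By the preceding theorem the drift vanishes along the coexistence line (CL), which is one-dimensional and transversally stable. I would parametrise the CL by $X\in[0,1]$, the strain-$1$ fraction of the (constant) total prevalence $r$, namely $X=(1+\vartheta_1)x^\ast/r$, and complete $X$ to curvilinear coordinates $(X,y_1,y_2,y_3)$ whose $y_j$ span the three transverse directions.

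Next I would substitute the blow-up $y_j=\varepsilon Y_j$ with $\varepsilon^2=1/N$ and the slow time $t=\varepsilon^2\tilde\tau$, and Taylor-expand the transformed operator as $L^\varepsilon=L^{(0)}+\varepsilon L^{(1)}+\varepsilon^2 L^{(2)}+\cdots$. Transversal stability makes $L^{(0)}$ the generator of an Ornstein-Uhlenbeck process in the frozen variables $Y$, with drift matrix $A(X)$ (the transverse linearisation of the ODE along the CL) and diffusion $B(X)$; its invariant Gaussian, whose covariance $\Sigma(X)$ solves the Lyapunov equation $A\Sigma+\Sigma A^{\top}+B=0$, defines the long-time projection $\mathbf P$. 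I would then verify the hypotheses of Theorem~\ref{approxSIQS}, in particular the solvability condition $\mathbf P L^{(1)+}\mathbf P=0$ derived above, which holds because the transverse fluctuations are mean-zero under the Gaussian. The theorem then yields the effective backward generator $\hat L^+=\mathbf P L^{(2)+}\mathbf P-\mathbf P L^{(1)+}(L^{(0)+})^{-1}L^{(1)+}\mathbf P$, and dualising back to the forward equation produces~\eqref{FP1dim}; the rescaling $\tau=\gamma\varepsilon^2 t/r$ only fixes the overall rate.

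The diffusion coefficient is the benign half: it is the tangential block of $\mathbf P L^{(2)+}\mathbf P$, i.e.\ the local covariance projected onto the tangent of the CL, and inserting the parametrisation should collapse it to $D(X)$ in~\eqref{siqsNoise}, a Moran-type $X(1-X)$ modulated by $(1-\zeta_1)(1-\zeta_2)/\alpha(X)$. The main obstacle is the effective drift $v(X)$ in~\eqref{siqsDtift}. Since the deterministic flow vanishes along the CL, $v(X)$ is \emph{entirely} noise-induced and is assembled from two pieces: the curvature terms of $\mathbf P L^{(2)+}\mathbf P$ (quadratic in $Y$, hence weighted by $\Sigma(X)$) and the cross term $-\mathbf P L^{(1)+}(L^{(0)+})^{-1}L^{(1)+}\mathbf P$, which formally requires the pseudo-inverse of the transverse generator on the complement of its null space. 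Here I would follow Kogan et al.\ and bypass the pseudo-inverse, expressing everything through $\Sigma(X)$ and the derivatives of the projection along the curved CL. The bookkeeping is heavy because $A(X)$ and $\Sigma(X)$ both vary with $X$ and the switching rates enter through the CL geometry via $\vartheta_i=\hat\omega_i/\hat c_i$; pushing the rational expressions through the substitutions $\zeta_i=\vartheta_i/(1+\vartheta_i)$ and $\omega_i=\hat\omega_i/(\beta r)$ should deliver $\alpha(X)$, $v(X)$ and $D(X)$ in the stated form. As a final consistency check I would confirm that the rapid-switching limit of Theorem~\ref{rapidSwitchTheo} carries this equation into the corresponding one-dimensional SIS equation, as asserted at the end of this section.
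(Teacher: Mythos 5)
Your overall route is the one the paper actually follows (SI, Section~3): a change of variables $(X,Y_1,Y_2,Y_3)$ adapted to the coexistence line, the blow-up $y=\eps Y$ with $\eps^2=1/N$, the slow time, the multiscale ansatz $\rho=\rho_0+\eps\rho_1+\eps^2\rho_2$ with Gaussian $\rho_0=f(X,\tau)\mathcal{N}(X)e^{-\frac12\sum c_{ij}Y_iY_j}$, and Kogan's auxiliary function $F=\sum_iA_i(X)Y_i$ solving $L_0^+F=\sum_i\Xi_iY_i$ so that $\int L_1\rho_1\,dY=\partial_X\int F\,L_1\rho_0\,dY$, which avoids the pseudo-inverse. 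Two harmless deviations: the paper uses an explicit \emph{linear} coordinate change (built from $\vartheta_i=\hat\omega_i/\hat c_i$) rather than general curvilinear coordinates, and it evaluates the remaining Gaussian second moments by the linear system $Mz=\vec h$ of the lemma in SI Section~3 rather than by solving the Lyapunov equation for $\Sigma(X)$; these are equivalent. The final algebra is done in Maxima in both cases.

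The one step that would fail as written is your attribution of the two coefficients. You assign $D(X)$ entirely to the tangential block of ${\mathbf P}L^{(2)+}{\mathbf P}$ and reserve the cross term $-{\mathbf P}L^{(1)+}(L^{(0)+})^{-1}L^{(1)+}{\mathbf P}$ for the drift. In this model $\int L_2\rho_0\,dY$ contributes only $\partial_{XX}^2$ acting on a coefficient that is \emph{affine} in $X$, namely $\tfrac{1}{4r}\left[(1+\vartheta_1)(\omega_1+\gamma/(\beta r))X+(1+\vartheta_2)(\omega_2+\gamma/(\beta r))(1-X)\right]$; the $X(1-X)$ factor and the $1/\alpha(X)$ dependence of \eqref{siqsNoise} emerge only after adding the second-order-in-$\partial_X$ part of $\int L_1\rho_1\,dY$. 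Symmetrically, the drift \eqref{siqsDtift} comes from the first-order part of that same cross-term integral, not from any curvature contribution of $L^{(2)}$ (whose $Y$-dependent pieces are total $Y$-derivatives and integrate away). So computing $D(X)$ from $L^{(2)}$ alone gives the wrong answer, and the decomposition ``benign diffusion from $L^{(2)}$, noise-induced drift from the cross term'' must be dropped: both coefficients mix contributions from both operators and only their sum has the stated form. The rest of your plan, including the final consistency check against the rapid-switching SIS limit of Theorem~\ref{rapidSwitchTheo}, matches the paper.
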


Also, the SIS model exhibits the crucial coexistence line of stationary points - as the parameters for the SIQS model are chosen in such a 
way that no strain has an advantage in the deterministic limit, this also is the case for the SIS model we did derive. As a consequence, 
we precisely are in the setup of Kogan's paper~\cite{Kogan.2014} which uses a two-strain SIS model to demonstrate his method. We directly
can use his result (SI, Section 3.2). %Appendix~\ref{KoganSIS}).
Note that we adapt the rate constant to the present notation of the SIS model above and that we rescale the variable $X$ (as defined in~\cite{Kogan.2014}) such that $X\in[0,1]$. In the next theorem, we allow for $\gamma_1\not=\gamma_2$ and $\hat\mu\geq 0$. 

\begin{theorem} \label{approxSIS}
The one-dimensional approximation of the two-strain SIS SDE model is given by 
\begin{equation}
    \begin{aligned}\label{FP1dim2strain}
     \partial_{\tau} f(X,\tau) = -\partial_X\bigg( v_\text{SIS}(X) f(X,\tau)\bigg) + \frac{1}{2} \partial_{X}^2 \bigg(D_\text{SIS}(X) f(X,\tau)\bigg), \qquad X\in[0,1]
    \end{aligned}
\end{equation}
where $\tau = \frac{(1-\zeta_1)\,\gamma_1}{r}\eps^2t$ and with effective drift and diffusion terms are given by 
    \begin{align}
	a &= \frac{\gamma_2(1-\zeta_2) + \hat\mu\zeta_2}{\gamma_1(1-\zeta_1) + \hat\mu\zeta_1}\\
        v_\text{SIS}(X) &=-\frac{\,a\,(1-a)}{(a+(1-a)X)^2}\,X(1-X)\\
     D_\text{SIS}(X) &= \frac{2\,a}{a+(1-a)X}\,X(1-X)
    \end{align}
\end{theorem}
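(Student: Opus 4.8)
\emph{Strategy.} The plan is to read off the statement as a direct application of the dimension-reduction computation of Kogan et al.~\cite{Kogan.2014} for the quasineutral two-strain SIS model, specialised to the effective rates furnished by Theorem~\ref{rapidSwitchTheo}. Writing $\beta_i^{\text{eff}}=\beta(1-\zeta_i)$ and $\gamma_i^{\text{eff}}=\gamma_i(1-\zeta_i)+\hat\mu\zeta_i$ for the infection and removal rates of the reduced SIS process, the hypothesis $R_{0,1}=R_{0,2}=R_0$ forces $\beta_i^{\text{eff}}=R_0\,\gamma_i^{\text{eff}}$, so that the two strains differ only by an overall time-scale factor and not in fitness. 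The deterministic limit therefore carries the coexistence line $\{x+y=r\}$ (equivalently $s=1/R_0$), with $r=1-1/R_0$; along it the relative frequency of the two strains is neutral and serves as the slow variable $X$, while the total prevalence $n=x+y$ is the fast variable relaxing to $r$. This is exactly the situation analysed in~\cite{Kogan.2014}.

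\emph{Reduction.} First I would write down the van Kampen expansion of the reduced SIS model: a two-dimensional Fokker--Planck equation whose drift reproduces the ODE above and whose diffusion matrix is diagonal (the two strains share no single transition), with entries $B_{11}=(\beta_1^{\text{eff}} s+\gamma_1^{\text{eff}})\,x$ and $B_{22}=(\beta_2^{\text{eff}} s+\gamma_2^{\text{eff}})\,y$. On the coexistence line $\beta_i^{\text{eff}} s=\gamma_i^{\text{eff}}$, so these collapse to $B_{ii}=2\gamma_i^{\text{eff}}\times(\text{prevalence of strain }i)$. Next I would pass to coordinates adapted to the line, blow up the transversal coordinate as $y=Y\eps$ with $\eps^2=1/N$, rescale to the slow time, and expand $L^\eps=L^{(0)}+\eps L^{(1)}+\eps^2 L^{(2)}+\cdots$, precisely as in the preparations preceding Theorem~\ref{approxSIQS}. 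Applying that theorem yields a one-dimensional backward generator
\[
{\hat L}^+={\mathbf P}\,L^{(2)+}{\mathbf P}-{\mathbf P}\,L^{(1)+}(L^{(0)+})^{-1}L^{(1)+}{\mathbf P}.
\]
The first term is the naive tangential projection of the diffusion; the second is the noise-induced drift generated by the coupling of the fast transversal fluctuations to the slow frequency. It is this second term that produces a \emph{nonzero} $v_\text{SIS}$ out of a deterministically neutral line, and it is the conceptual heart of the statement.

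\emph{Execution and matching.} Finally I would quote Kogan's closed-form one-dimensional drift and diffusion, substitute the present rates, and perform two cosmetic rescalings: rescale his spatial variable to the unit interval so that $X\in[0,1]$ with $X=0,1$ the two fixation states, and rescale time by the global rate constant to $\tau=\frac{(1-\zeta_1)\gamma_1}{r}\eps^2 t$. All parameter dependence then collapses onto the single dimensionless ratio $a=\gamma_2^{\text{eff}}/\gamma_1^{\text{eff}}=(\gamma_2(1-\zeta_2)+\hat\mu\zeta_2)/(\gamma_1(1-\zeta_1)+\hat\mu\zeta_1)$, which by equal reproduction numbers also equals $(1-\zeta_2)/(1-\zeta_1)$. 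Checking that this reproduces the stated $v_\text{SIS}(X)$ and $D_\text{SIS}(X)$ completes the proof. The main obstacle is exactly this bookkeeping: one must track how Kogan's normalisation of space and time maps to the present one, confirm the rational (denominator) structure of the coefficients --- note that the naive projection of $B$ only ever yields a \emph{numerator} factor $[a+(1-a)X]$, so its appearance in the \emph{denominator} of $D_\text{SIS}$ is genuinely an effect of the full reduction and not of a mere change of variables --- and verify that allowing $\gamma_1\neq\gamma_2$ and $\hat\mu\geq 0$ enters only through $a$ and the overall time scale. If one preferred a self-contained derivation over citing~\cite{Kogan.2014}, the real work would be evaluating the pseudo-inverse $(L^{(0)+})^{-1}$ in the noise-induced drift term, which is precisely the computation Kogan's trick is designed to circumvent.
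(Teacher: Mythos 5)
Your proposal is correct and follows essentially the same route as the paper: the paper's own proof (SI, Section~3.2) simply restates Kogan's one-dimensional reduction for the quasineutral two-strain SIS model, rescales his spatial variable to $[0,1]$ and his time by $\gamma_1\eps^2/r$, and substitutes the effective rates $\gamma_i^{\text{eff}}=\gamma_i(1-\zeta_i)+\hat\mu\zeta_i$ from the rapid-switching limit so that everything collapses onto $a=\gamma_2^{\text{eff}}/\gamma_1^{\text{eff}}$. Your additional sketch of the underlying projector computation matches the machinery the paper develops for the SIQS case but does not change the argument.
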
 

\begin{rem}
Let  $\hat\omega_i,\hat c_i\rightarrow\infty$ while $\hat\omega_i/(\hat\omega_i+\hat c_i)\equiv \zeta_i$, 
 $\hat\mu=0$ and $\gamma_1=\gamma_2$, s.t.\ $a=(1-\zeta_2)/(1-\zeta_1)$. Then, 
\begin{eqnarray*}
D(X) &=& \frac{2(1-\zeta_1)(1-\zeta_2)}{ \alpha(X)}X(1-X) 
=  \frac{2(1-\zeta_1)(1-\zeta_2)}{1-\zeta_2+(\zeta_2-\zeta_1)X}X(1-X) = (1-\zeta_1) D_{SIS}(X)
\end{eqnarray*}
and in the limit 
\begin{eqnarray*}
 v(X) &=& -(1-\zeta_1)(1-\zeta_2)\frac{\zeta_1\zeta_2 (\omega_2-\omega_1) \alpha(X)
    +\omega_1\omega_2(\zeta_2-\zeta_1) }{ \alpha(X)^2 \left[\omega_1+\zeta_1 \alpha(X) \right]\left[\omega_2+\zeta_2 \alpha(X) \right]}  X(1-X)\\
	&\rightarrow & 
-(1-\zeta_1)(1-\zeta_2)\frac{(\zeta_2-\zeta_1) }{ \alpha(X)^2 }  X(1-X) = (1-\zeta_1) v_{SIS}(X)
\end{eqnarray*}
The factor $1-\zeta_1$ is due to the different time scaling in Theorem~\ref{approxSIQS} and Theorem~\ref{approxSIS}, s.t.\ 
indeed the rapid switching limit transfers the SIQS approximate Fokker-Planck equation into the approximate SIS Fokker-Planck equation.
\end{rem}

%%%%%%%%%%%%%%%%%%%%%%%%%%%%%%%%%%%%%%%%%%%%%%%%%%%%%%Effe
\subsection{Effective dynamics along the Coexistence Line: Fixation probabilities}
\label{fixProbab}
\label{sec: Dynamics at CL}
We first discuss the sign of the drift term in equation~\eqref{siqsDtift} and then proceed to the computation of the absorption probabilities.

%%%%%%%%%%%%%%%%%%%%%%%%%%%%%%%%%Drift graphs
\begin{figure}[t!]
    %\centering
    \begin{subfigure}[b]{0.3\textwidth}
        \includegraphics[width=\textwidth]{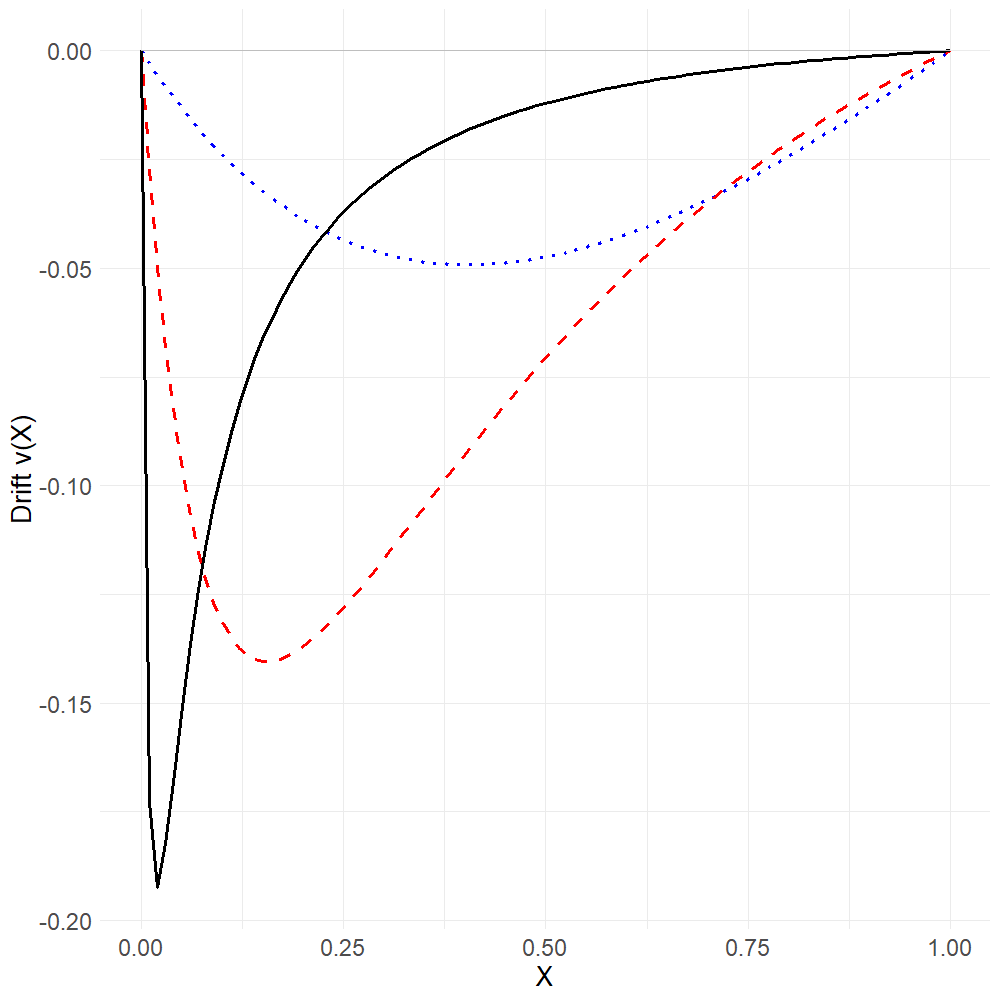}
        \caption{Parameters: $\omega_1=\omega_2=2$, $\zeta_1=0.2$ and $\zeta_2= 0.44,0.84, 0.984$}
        \label{fig:Drift zeta}
    \end{subfigure}
    \hfill
    \begin{subfigure}[b]{0.3\textwidth}
        \includegraphics[width=\textwidth]{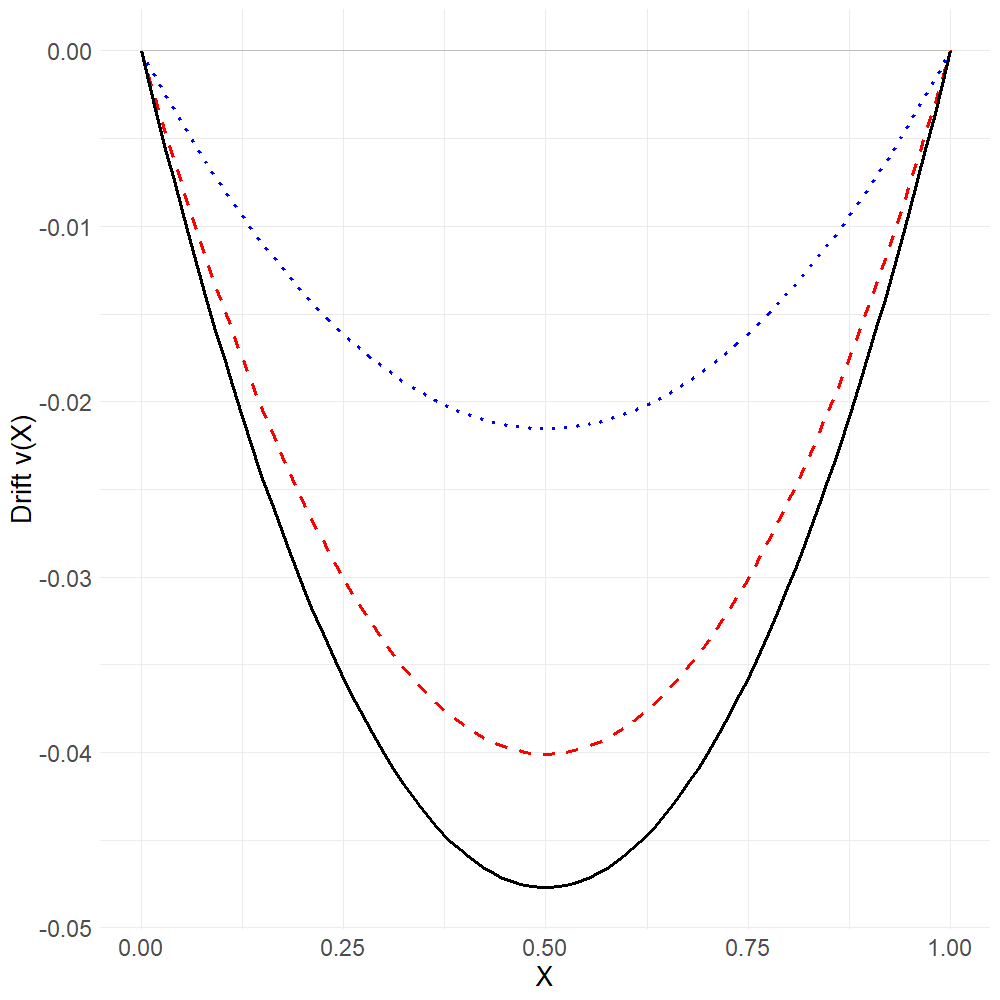}
        \caption{Parameters: $\zeta_1=\zeta_2=0.75$, $\omega_1=0.5$ and $\omega_2=1,3,10$}
        \label{fig:Drift omega}
    \end{subfigure}
    \hfill
    \begin{subfigure}[b]{0.3\textwidth}
        \includegraphics[width=\textwidth]{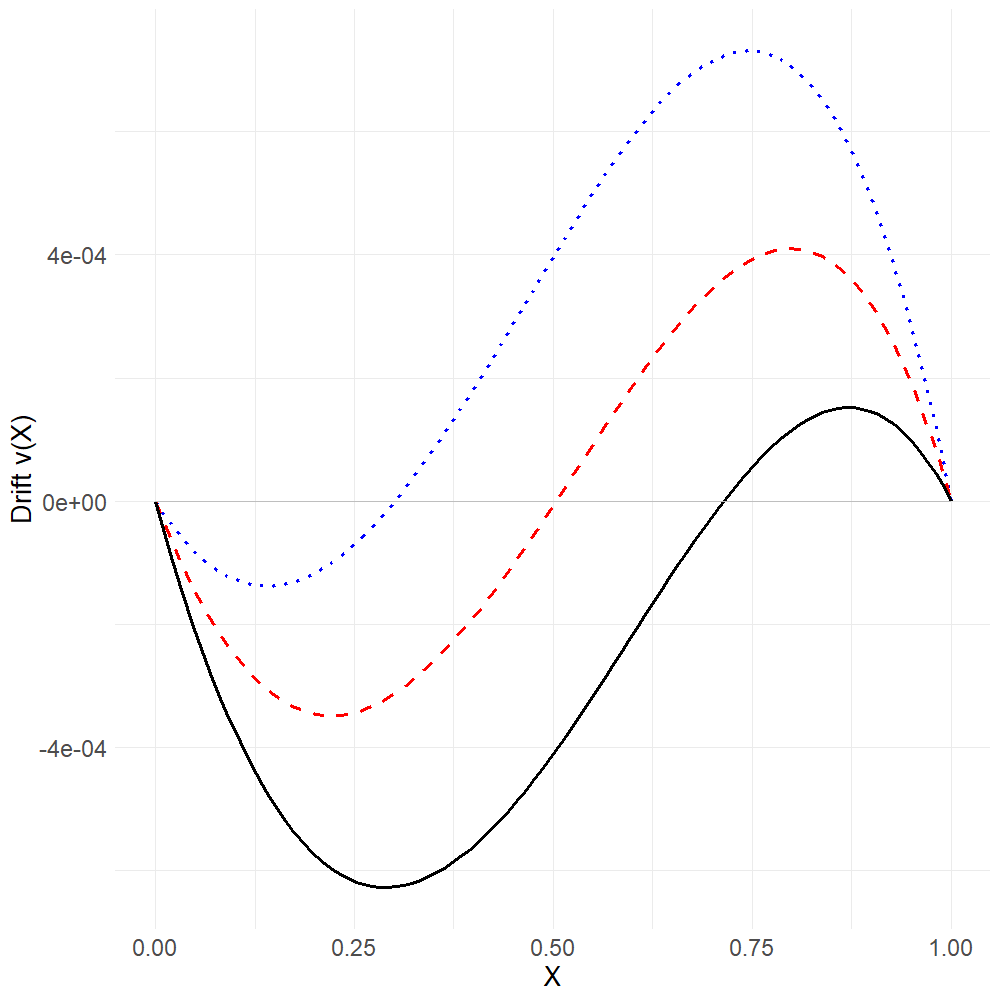}
        \caption{Parameters: $\omega_1=1, \omega_2=3,\zeta_1=0.5$\\ and $\zeta_2=0.422,0.424,0.426$}
        \label{fig: Drift disruptive}
    \end{subfigure}
    \caption{Drift $v(X)$ from (\ref{Drift1dim}). (a) for $\omega_1=\omega_2$,
    	the larger $\zeta$ has an advantage (b) for $\zeta_1=\zeta_2$, the larger $\omega$ has an advantage (c) disruptive dynamics in case of $\zeta_1>\zeta_2$ and $\omega_1<\omega_2$. }
    \label{fig:Drift_SIQS}
\end{figure}

Our ultimate aim is to understand how quiescence develops under evolutionary forces. Thereto we investigate the competition of different quiescent traits. 
A quiescent trait can be characterized by two different aspects: How often an individual becomes quiescent ($\omega$) and how long the individual stays quiescent ($\zeta$). The drift term $v(X)$ in eqn.~\eqref{siqsDtift} nicely splits up into two parts induced by these different aspects  ($\omega_2-\omega_1$ resp.\ $\zeta_2-\zeta_1$). In Fig.~\ref{fig:Drift_SIQS}, the drift term is plotted in different scenarios. For $1-\zeta_1 \leq \alpha(X) \leq 1-\zeta_2$ the sign of the drift is determined by the differences $\omega_2-\omega_1$ and $\zeta_2-\zeta_1$ (see Prop.~\ref{Prop: selectiontypes} below). If $\omega_1=\omega_2$ is fixed, then the drift is positive for $\zeta_2>\zeta_1$ and conversely, if $\zeta_1=\zeta_2$ is fixed, then the drift is positive for $\omega_2>\omega_1$. This means that strains with higher values in $\omega$ (higher jump frequency between active and inactive state) and higher values in $\zeta$ (mean proportion of time inactive) gain a competitive advantage. Herein, we already find a first fundamental result: More quiescence (higher $\omega$, higher $\zeta$) is favorable. \\
If $(\omega_2-\omega_1)$ and $(\zeta_2-\zeta_1)$ have different signs, the drift term may also have a sign conversion. The direction of selection forces depends on the state of the system. By the next proposition, the drift is always directed to either of the boundaries of the CL. The model can exhibit weak directional or disruptive selection but not balancing selection (Proof: SI, Section~4.1).

\begin{prop}\label{Prop: selectiontypes}
	Assume $\zeta_2 > \zeta_1$. If and only if $(1-\zeta_2) < -\frac{(\zeta_2-\zeta_1)\omega_1\omega_2}{\zeta_1\zeta_2(\omega_2-\omega_1)} < (1-\zeta_1)$, there exists $X_0 \in (0,1)$ with $v(X_0)=0$. For any such $X_0$ it is $v'(X_0)>0$. In this sense, the model may exhibit directional or disruptive selection but not balancing selection. 
\end{prop}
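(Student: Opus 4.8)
The plan is to reduce the whole statement to the behaviour of a single affine function. First I would observe that on the open interval $(0,1)$ every factor of $v(X)$ in \eqref{siqsDtift} except the bracketed numerator has a fixed sign: since $\zeta_2>\zeta_1$, the affine map $\alpha$ is strictly increasing with $\alpha([0,1])=[1-\zeta_2,1-\zeta_1]\subset(0,\infty)$, so $\alpha(X)^2>0$ and both brackets satisfy $\omega_i+\zeta_i\alpha(X)>0$; moreover $X(1-X)>0$ and the prefactor $-(1-\zeta_1)(1-\zeta_2)<0$. Writing $v(X)=g(X)\,N(X)$ with
\[ N(X):=\zeta_1\zeta_2(\omega_2-\omega_1)\,\alpha(X)+\omega_1\omega_2(\zeta_2-\zeta_1), \]
\[ g(X):=-(1-\zeta_1)(1-\zeta_2)\,\frac{X(1-X)}{\alpha(X)^2\,[\omega_1+\zeta_1\alpha(X)]\,[\omega_2+\zeta_2\alpha(X)]}, \]
I then have $g(X)<0$ throughout $(0,1)$, so that $v(X)=0$ if and only if $N(X)=0$, and $\operatorname{sign}v=-\operatorname{sign}N$ on $(0,1)$.

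Since $\alpha$ is affine, $N$ is affine in $X$ and has at most one zero. Solving $N(X)=0$ gives $\alpha(X)=\alpha^\ast$ with $\alpha^\ast:=-\frac{(\zeta_2-\zeta_1)\omega_1\omega_2}{\zeta_1\zeta_2(\omega_2-\omega_1)}$, which is exactly the middle term of the stated inequality. Because $\alpha$ is an increasing bijection of $[0,1]$ onto $[1-\zeta_2,1-\zeta_1]$, a root $X_0\in(0,1)$ exists if and only if $\alpha^\ast\in(1-\zeta_2,1-\zeta_1)$, i.e.\ precisely when $(1-\zeta_2)<\alpha^\ast<(1-\zeta_1)$; this supplies both directions of the equivalence. (If $\omega_1=\omega_2$, then $N\equiv\omega_1\omega_2(\zeta_2-\zeta_1)>0$ is a nonzero constant and no interior root exists, consistent with the now-degenerate inequality; this falls into the directional case below.)

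For the sign of the derivative at a root I would differentiate the factorisation. Since $N(X_0)=0$, the product rule gives $v'(X_0)=g(X_0)\,N'(X_0)$, and $N'(X)=\zeta_1\zeta_2(\omega_2-\omega_1)(\zeta_2-\zeta_1)$ is constant. The key observation is that the mere existence of $X_0\in(0,1)$ forces $\alpha^\ast=\alpha(X_0)\in(1-\zeta_2,1-\zeta_1)\subset(0,\infty)$, and since $\zeta_1\zeta_2(\zeta_2-\zeta_1)\omega_1\omega_2>0$ this positivity is only possible when $\omega_2-\omega_1<0$. Combining $g(X_0)<0$, $\omega_2-\omega_1<0$ and $\zeta_2-\zeta_1>0$ then yields $v'(X_0)=(\text{neg})\cdot(\text{neg})\cdot(\text{pos})>0$, so any interior zero of the drift is a repeller.

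Finally I would translate this into the selection-type dichotomy: an interior zero, when present, is unstable, so the drift points away from it towards both endpoints of the coexistence line (disruptive selection); when no interior zero exists, $N$, and hence $v$, keeps a constant sign on $(0,1)$, driving the state monotonically to one boundary (directional selection). In neither case is there an attracting interior equilibrium, which rules out balancing selection. I expect the only mild obstacle to be the sign bookkeeping in the third step: one must notice that it is the existence of the root itself that pins down $\omega_2<\omega_1$, which is what upgrades $v'(X_0)\neq0$ to the definite statement $v'(X_0)>0$.
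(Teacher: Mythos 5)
Your proposal is correct and follows essentially the same route as the paper's proof in the SI: both reduce the zero set of $v$ to the affine numerator via the increasing bijection $\alpha:[0,1]\to[1-\zeta_2,1-\zeta_1]$, both kill the derivative of the non-vanishing factor at $X_0$ (you via the product rule on $g\cdot N$, the paper via the quotient rule on $v(X)/(X(1-X))$), and both extract $v'(X_0)>0$ from the observation that existence of an interior root forces $\omega_2-\omega_1$ and $\zeta_2-\zeta_1$ to have opposite signs. The differences are purely presentational.
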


\begin{figure}[htbp]
	%\centering
	\begin{subfigure}[b]{0.45\textwidth}
		\includegraphics[width=\textwidth]{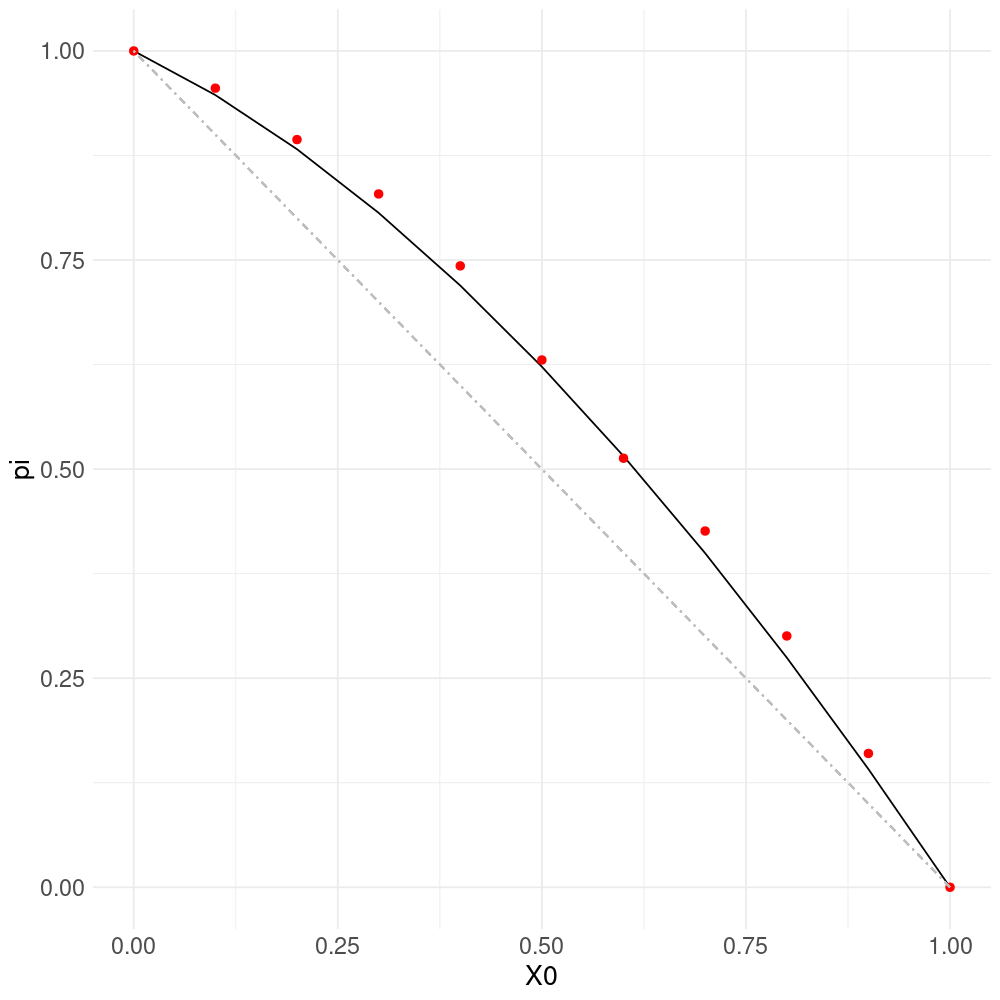}
		\caption{Parameters: $\omega_1=\omega_2=2$, $\zeta_1=0.2$, $\zeta_2=0.8$}
		\label{fig: Pi zeta empiric}
	\end{subfigure}
	\hfill
	\begin{subfigure}[b]{0.45\textwidth}
		\includegraphics[width=\textwidth]{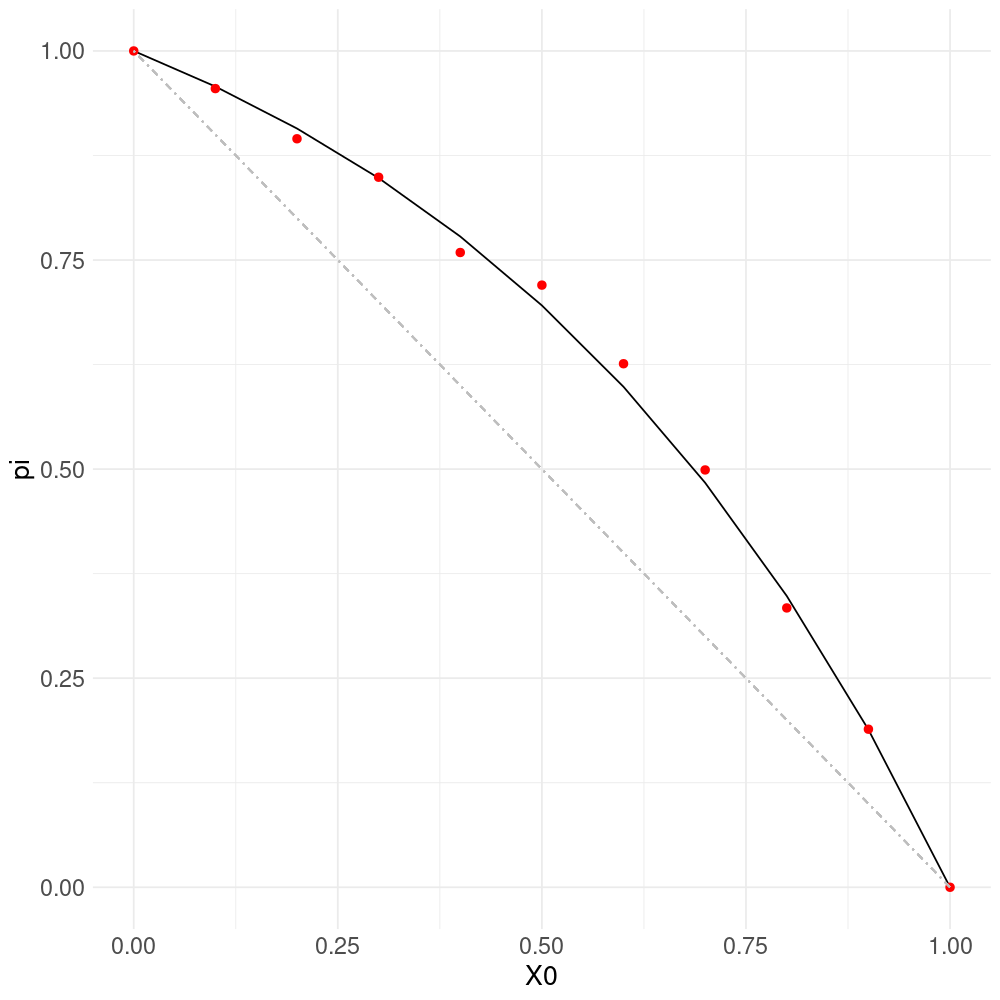}
		\caption{Parameters: $\omega_1=0.2, \omega_2=5$, $\zeta_1=\zeta_2=0.8$.}
		\label{fig: Pi w empiric}
	\end{subfigure}
	\caption{Absorption probabilities; Solid lines: Theory based on the approximative one-dimensional SIQS Fokker-Planck equation, dots:  $10^3$ realizations of the individual-based $SIQS$-model with $N=250$ using Monte-Carlo-simulation. }
	\label{fig: Pi empiric}
\end{figure}

The Fokker-Planck-equation (\ref{FP1dim})  describes a one-dimensional Markov process $(X_\tau)_{\tau \geq 0}$ in terms of it's probability density function $f(X,\tau)$. It has absorbing boundaries at $X=0$ and $X=1$, corresponding to fixation of either the fast first strain $I_1$ (if $X=1$) or slow second strain $I_2$ (if $X=0$). Within a finite time, it will a.s.\ reach one of these boundary points. We aim at the probability of fixation of each strain. 
%; to that end, recall the calculations in section \ref{sec: SIQS}, equation (\ref{Formulafixationprob}) from \cite{Gardiner.2009} in particular. 
%The two strains might differ in two parameters: The mean proportion of time between infection and recovery spent in the inactive state  $\zeta_i$ and the frequency of jumps between both states $\omega_i$.
It is well-known how to compute the absorption probability~\cite[Chapter 5.5.4]{Gardiner.2009}. Let $\pi:=\mathrm{P}(\lim_{t \to \infty} X(t)=0|X(0)=X_0)=\mathrm{P}(\text{"Strain 2 reaches fixation"}|X(0)=X_0)$ be the probability that, starting on the CL, the second strain fixates. It obeys
\begin{align*}
	v(X) \pi'(X) + \frac{1}{2}D(X)\pi''(X)=0.
\end{align*}
Under the boundary conditions $\pi(0)=1$, $\pi(1)=0$ and starting at $X_0$ the solution is
\begin{align}\label{Formulafixationprob}
	\pi(X_0)= \frac{\int_{X_0}^{1} \nu(y)dy}{\int_0^1 \nu(y) dy}, \quad \text{where } \nu(y) = \exp \left(- 2\int_0^y \frac{v(z)}{D(z)}dz \right).
\end{align}
In case the two strains only differ in $\zeta$ or in $\omega$ and not in both, the integrals can be explicitly evaluated, and in that, fixation probability can be explicitly calculated.

\begin{comment} 
%%%Figures Pi
\begin{figure}[htbp]
    %\centering
    \begin{subfigure}[b]{0.45\textwidth}
        \includegraphics[width=\textwidth]{figures/Pi zeta.png}
        \caption{Parameters: $\omega_1=\omega_2=2$, $\zeta_1=0.2$ and $\zeta_2= 0.44,0.84, 0.984$ (dotted, dashed, solid)}
        \label{fig: Pi zeta}
    \end{subfigure}
    \hfill
    \begin{subfigure}[b]{0.45\textwidth}
        \includegraphics[width=\textwidth]{figures/Pi w.png}
        \caption{Parameters: $\zeta_1=\zeta_2=0.75$, $\omega_1=0.5$ and $\omega_2=1,3,10$ (dotted, dashed, solid)}
        \label{fig: Pi w}
    \end{subfigure}
    \caption{Fixation probability $\pi(X_0)$, calculated with proposition \ref{prop: fix probs SIQS}. To be easily comparable, the set of parameters is the same as in figure \ref{fig:Drift_SIQS}. Upward convexity of both curves implies a competitive advantage for higher values of $\zeta$ and $\omega$.}
    \label{fig: Pi}
\end{figure}
\end{comment} 

\begin{prop}\label{prop: fix probs SIQS}
\begin{enumerate}[label=(\alph*)]
\item If $\zeta_1=\zeta_2$ but $\omega_1\not=\omega_2$, the probability that the second strain, starting at $X_0 \in (0,1)$, outcompetes the first strain 
reads
\begin{align*}
    \pi(X_0)= \frac{\mathrm{e}^{\mathcal{A}}-\mathrm{e}^{\mathcal{A}X_0}}{\mathrm{e}^{\mathcal{A}}-1},
\end{align*}
where $\mathcal{A}=\frac{ \zeta_1^2 (\omega_2-\omega_1)}{\left( \omega_1+(1-\zeta_1)\zeta_1 \right)\left(\omega_2+(1-\zeta_1)\zeta_1\right)}$. In particular $\pi(X_0)>1-X_0 \iff \omega_2>\omega_1$

\item If $\omega_1=\omega_2$ but $\zeta_1 \neq \zeta_2$, then 
\begin{align*}
    \pi(X_0)&= \frac{\mathfrak{A}(1-\zeta_1) - \mathfrak{A}(\alpha(X_0))}{\mathfrak{A}(1-\zeta_1) - \mathfrak{A}(1-\zeta_2)},
\end{align*}
where 
\begin{align*}
    \mathfrak{A}:[1-\zeta_2,1-\zeta_1] &\to \mathbb{R}_{>0}:
    s \mapsto (\omega_1 + \zeta_1 s)^{\frac{-\zeta_2}{\zeta_2-\zeta_1}} (\omega_1 + \zeta_2 s)^{\frac{\zeta_1}{\zeta_2-\zeta_1}}.
\end{align*}
$\mathfrak{A}$ is concave if $\zeta_2 > \zeta_1$ and convex if $\zeta_1 > \zeta_2$.
\end{enumerate}
\end{prop}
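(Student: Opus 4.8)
The plan is to read off both parts directly from the general fixation formula~\eqref{Formulafixationprob}, exploiting that the awkward prefactors cancel in the ratio $v/D$. Dividing~\eqref{siqsDtift} by~\eqref{siqsNoise}, the common factor $X(1-X)$ and the constant $(1-\zeta_1)(1-\zeta_2)$ drop out, so that $-2v(z)/D(z)$ reduces to the rational function
\[
\frac{\zeta_1\zeta_2(\omega_2-\omega_1)\,\alpha(z)+\omega_1\omega_2(\zeta_2-\zeta_1)}{\alpha(z)\,[\omega_1+\zeta_1\alpha(z)]\,[\omega_2+\zeta_2\alpha(z)]},
\]
a function of $\alpha(z)=1-\zeta_2+(\zeta_2-\zeta_1)z$ alone. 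Everything then hinges on integrating this expression over $[0,y]$ and recognising an antiderivative of the resulting $\nu$.

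For part (a), $\zeta_1=\zeta_2$ forces $\alpha(z)\equiv 1-\zeta_1$ to be constant and kills the $(\zeta_2-\zeta_1)$ term, so the displayed quotient collapses to the constant $\mathcal{A}=\zeta_1^2(\omega_2-\omega_1)/\{[\omega_1+(1-\zeta_1)\zeta_1][\omega_2+(1-\zeta_1)\zeta_1]\}$. Hence $\nu(y)=e^{\mathcal{A}y}$ up to an irrelevant constant, and the two elementary integrals in~\eqref{Formulafixationprob} give the claimed $\pi(X_0)=(e^{\mathcal A}-e^{\mathcal A X_0})/(e^{\mathcal A}-1)$. For the comparison with the neutral value $1-X_0$ I would note that $\mathcal A$ carries the sign of $\omega_2-\omega_1$ (all other factors being positive) and that $\pi''(X_0)=-\mathcal A^2 e^{\mathcal A X_0}/(e^{\mathcal A}-1)$ has the sign of $-(e^{\mathcal A}-1)$; since $\pi$ and $1-X_0$ agree at the endpoints, $\pi$ lies above the chord exactly when it is concave, i.e.\ exactly when $\mathcal A>0$, which is $\omega_2>\omega_1$.

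For part (b), $\omega_1=\omega_2=:\omega$ annihilates the first summand of the numerator, leaving $\omega^2(\zeta_2-\zeta_1)$ over $\alpha[\omega+\zeta_1\alpha][\omega+\zeta_2\alpha]$. The substitution $s=\alpha(z)$, $ds=(\zeta_2-\zeta_1)\,dz$, converts $-2\int_0^y v/D$ into $\int_{1-\zeta_2}^{\alpha(y)}\omega^2\,[\,s(\omega+\zeta_1 s)(\omega+\zeta_2 s)\,]^{-1}\,ds$. A partial-fraction decomposition (residues at $s=0,\,-\omega/\zeta_1,\,-\omega/\zeta_2$) integrates to a combination of logarithms, and exponentiating shows that $\nu(y)$ is, up to a constant, a product of powers of $\alpha(y)$, $\omega+\zeta_1\alpha(y)$ and $\omega+\zeta_2\alpha(y)$. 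I would then pull the outer integral $\int\nu\,dy$ back through the same substitution and recognise $\mathfrak A$ as an antiderivative of the resulting power product; verifying $\mathfrak A'\propto\nu$ by direct differentiation is cleaner than integration by parts, because the exponents are tuned so that the two terms of $\mathfrak A'$ combine into a single monomial. With $\alpha(0)=1-\zeta_2$ and $\alpha(1)=1-\zeta_1$, the ratio~\eqref{Formulafixationprob} then telescopes into the stated quotient of values of $\mathfrak A$.

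The concavity statement I would settle by differentiating twice: because $\mathfrak A'$ is a single monomial times a product of powers, $\mathfrak A''$ again factors, and after cancelling the manifestly positive powers one is left with a single explicit factor whose sign on $[1-\zeta_2,1-\zeta_1]$ is the quantity to control. The main obstacle, and the place where I would be most careful, is exactly this bookkeeping of exponents and signs in $\mathfrak A$: it is easy to land on the reciprocal of the correct antiderivative, which still satisfies the boundary normalisations $\pi(0)=1$, $\pi(1)=0$ and agrees with the truth to leading order in the weak-selection regime, so I would insist on the independent check $\mathfrak A'\propto\nu$ before trusting the final quotient. I would likewise track the sign of the denominator $\mathfrak A(1-\zeta_1)-\mathfrak A(1-\zeta_2)$, whose flip under $\zeta_1\leftrightarrow\zeta_2$ is what ultimately encodes the switch in which strain is selected.
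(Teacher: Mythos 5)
Your proposal follows essentially the same route as the paper's proof: both parts are read off from formula~\eqref{Formulafixationprob} after cancelling $X(1-X)$ and $(1-\zeta_1)(1-\zeta_2)$ in $-2v/D$, which is the constant $\mathcal{A}$ when $\zeta_1=\zeta_2$, and which after the substitution $s=\alpha(z)$ and partial fractions yields a power-product $\nu$ with an explicit antiderivative when $\omega_1=\omega_2$. The one genuine deviation is your chord argument for ``$\pi(X_0)>1-X_0\iff\omega_2>\omega_1$'': since $\pi$ agrees with $1-X_0$ at both endpoints and $\pi''$ has the sign of $-(\mathrm{e}^{\mathcal{A}}-1)$, concavity settles the claim; this is shorter than the paper's route, which differentiates $\pi$ with respect to the parameters and reduces the sign of $\partial\pi/\partial\omega_i$ to that of $\partial\mathcal{A}/\partial\omega_i$. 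Your insistence on the independent check $\mathfrak{A}'\propto\nu$ is also well placed: carrying out the partial-fraction computation, the antiderivative of $s(\omega_1+\zeta_1 s)^{\zeta_1/(\zeta_2-\zeta_1)}(\omega_1+\zeta_2 s)^{-\zeta_2/(\zeta_2-\zeta_1)}$ is proportional to $(\omega_1+\zeta_1 s)^{\zeta_2/(\zeta_2-\zeta_1)}(\omega_1+\zeta_2 s)^{-\zeta_1/(\zeta_2-\zeta_1)}$ --- exactly the function appearing in the paper's own verification of $\tfrac{d}{ds}\mathfrak{A}$ --- which is the \emph{reciprocal} of the $\mathfrak{A}$ displayed in the statement, so the exponent signs there need the correction your check would force.
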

A proof is given in SI, Section 4.2. %Appendix \ref{prop: fix probs SIQSAppend}. 
The result is nicely in line with simulations of the individual-based SIQS process (see Fig.~\ref{fig: Pi empiric}).

\begin{rem}
(a) In the neutral case ($\omega_1=\omega_2$, $\zeta_1=\zeta_2$), the offspring of any randomly chosen individual has the same probability to eventually take over the population. Thus, $\pi(X_0)=1-X_0$.\\
(b) The theorem indicates that $\pi(X_0)>1-X_0$ for $X_0\in(0,1)$ if either 
$\omega_2>\omega_1$ (and $\zeta_1=\zeta_2$) or  $\zeta_2>\zeta_1$ (and $\omega_1=\omega_2$). More quiescence is an advantage.\\
(c) In case of $\omega_1>\omega_2$ and $\zeta_1<\zeta_2$ such that there is $X\in(0,1)$ with $v(X)=0$ (the drift term changes sign), it depends on the 
initial value $X_0$ if $\pi(X_0)$ is larger or below the neutral case ($\pi(X_0)=1-X_0$).
\end{rem} 

For the sake of completeness, we also state the fixation probability for the SIS model (Theorem~\ref{approxSIS}), also see~\cite{Kogan.2014}. 

\begin{prop}\label{prop: fix prob SIS}
	In the $SIS$-model, the fixation probability of strain 2 is
	\begin{align}\label{eq: Pi Kogan}
		\pi_{SIS}(X_0) = (1-X_0) \left(1+\frac{1-a}{1+a}X_0\right).
	\end{align}
\end{prop}
\begin{proof}
	By standard integration $\nu(y)=1+\frac{1-a}{a}y$ and 
	\begin{align*}
		\frac{\int_{X_0}^{1} \nu(y)dy}{\int_0^1 \nu(y) dy}=\frac{a(1-X_0)^2+1-X_0^2}{1+a}\frac{1+a}{2a}=(1-X_0) \left(1+\frac{1-a}{1+a}X_0\right)
	\end{align*} after some manipulation.
\end{proof}
As expected, the fixation probability scales with the initial proportion of strain 2, $1-X_0$. If $X_0=1/2$, then both strains start with the same amount of infected.

\section{Adaptive Dynamic in case of weak selection}
We will now broaden our perspective and consider the long-term evolution of the traits within the framework of Adaptive Dynamic \cite{geritz1997dynamics,diekmann2004beginner,Muller.2013}. The idea behind Adaptive Dynamics is to consider one strain $I_1$, called the resident, in its endemic equilibrium. The traits of the strain, $\beta, \gamma, \omega_1, \zeta_1$, define the properties of the environment, that is, the number of susceptible, infected, quiescent. Once in a while, a rare mutant strain $I_2$ (particularly with slightly different timing in quiescence) tries to invade the equilibrium population. If this strain is able to displace the resident, it becomes the new resident, and evolution made a (tiny) step.\\
Herein, we cannot use the standard setting of Adaptive Dynamic. Usually, it is assumed that either the resident or the mutant is superior, and the inferior strain always is outcompeted. Only the parameters define which strain is superior and in that, outcompetes the other strain. No randomness is involved, such that a deterministic pairwise invasibility plot (PIP) can be constructed.\\
This is different in the current situation, where we only can talk about invasion probabilities. In that, we make three primary assumptions that parallel those used by classical Adaptive Dynamics. First, mutants are rare: We start off far on the right of the CL; $X_0$ close to $1$. Second, the mutant has parameters close to the resident and will either replace the resident strain or become extinct. This assumption is naturally fulfilled in our process on the CL. In the previous section, we explored the probability of a successful invasion, that is, fixation of the mutant strain $2$. Of course, the probability of a successful invasion scales with the initial frequency of the mutant $1-X_0$; nonetheless, we obtain non-trivial results when calculating stationary distributions for the limit $X_0\to1$. The third key assumption is that the time scale at which mutants reach fixation (or extinction) is much faster than the time scale at which additional mutations occur. This assumption ensures that we remain within the framework of two-strain competition, without the presence of three or more strains simultaneously. Importantly, here we are dealing not with deterministic invasions as is typical in adaptive dynamics, but with probabilities for a successful invasion. This leads to a directed random walk in the trait space \cite{Dieckmann.1996}. The proper ESS in classical Adaptive Dynamics is replaced by an invariant measure of the random walk on the trait space. \par\medskip
In what follows, we focus separately on each of the traits $\omega$ and $\zeta$ and begin with a compact discretization of the state space. Next, we will seek stationary distributions for the resulting directed random walk and then refine the discretization to move towards a continuum limit probability density function.

\subsection{Adaptive Dynamic for $\omega$ in the SIQS model - analytic results}
%%%%%%
Fix $\zeta_1=\zeta_2=:\zeta \in (0,1)$ and consider only $\omega$.  Define a discrete-time Markov-chain $(W_t)_{t\in \mathbb{N}}$ for the evolution of $\omega$ in the following way: 
For some $M>0$, $n\in\mathbb{N}$, $\delta=M/n$, the trait space is $\{\omega_0=0$,  
$\omega_1=\delta$, $\omega_2=2\delta,\ldots, \omega_{n-1}= (n-1)\delta$, $\omega_n=M \}$ and the transition probabilities are
\begin{align*}
    P(W_{t+1} = \omega + \delta | W_t=\omega) &= \frac{1}{2}\pi_i(X_0)\\
    P(W_{t+1} = \omega - \delta | W_t=\omega) &= \frac{1}{2}\Tilde{\pi}_i(X_0)\\
    P(W_{t+1} = \omega | W_t=\omega) &= 1-\frac{1}{2}\left(\pi_i(X_0)+\Tilde{\pi}_{i}(X_0) \right).
\end{align*}
$\pi_i(X_0)$ is the probability of a successful invasion (= fixation of strain 2) of a strain with higher $\omega$-value, given by Prop.~\ref{prop: fix probs SIQS} with parameters $\omega_1=\omega_i$, $\omega_2=\omega_i+\delta$. Likewise, $\Tilde{\pi}_i$ is the probability of a successful invasion of a strain with lower $\omega$-value, namely with parameters $\omega_1=\omega_i$, $\omega_2=\omega_i-\delta$. In Section~\ref{sec: Dynamics at CL} we saw that $\Tilde{\pi}_i(X_0)\leq\pi_i(X_0)$. Due to symmetry reasons $\pi_{i-1}(X_0)=1-\Tilde{\pi}_{i}(1-X_0)$ holds. The Markov chain is irreducible on a finite state space and thus has a unique stationary distribution $W^\ast=(W_0^\ast,...,W_M^\ast)$ characterized by the detailed balance equation, resulting in 
\begin{align*}
    W_i^\ast=\frac{\pi_{i-1}(X_0)}{\Tilde{\pi}_{i}(X_0)}\cdot...\cdot\frac{\pi_{0}(X_0)}{\Tilde{\pi}_{1}(X_0)} W_0^\ast,
\end{align*}
where $W_0^\ast$ has to be chosen s.t. $\| W^\ast\|_1=1$. 
In fact, using Proposition \ref{prop: fix probs SIQS},
\begin{align*}
    \frac{\pi_{i-1}(X_0)}{\Tilde{\pi}_{i}(X_0)} &= \frac{\pi_{i-1}(X_0)}{1-\pi_{i-1}(1-X_0)}
    = \mathrm{e}^{\mathcal{A}(\omega_{i-1},\omega_i)X_0} \\
    &= \exp\left( \frac{ \zeta^2 (\omega_i-\omega_{i-1})}{\left( \omega_{i-1}+(1-\zeta)\zeta \right)\left(\omega_i+(1-\zeta)\zeta\right)} X_0 \right).
\end{align*}
Therewith
\begin{align*}
    W_i^\ast&=\frac{\pi_{i-1}(X_0)}{\Tilde{\pi}_{i}(X_0)}\cdot...\cdot\frac{\pi_{0}(X_0)}{\Tilde{\pi}_{1}(X_0)} W_0^\ast
    =\exp\left( \frac{ \zeta^2 (\omega_i-\omega_{0})}{\left( \omega_{i}+(1-\zeta)\zeta \right)\left(\omega_0+(1-\zeta)\zeta\right)} X_0 \right)W_0^\ast\\
    &=\exp\left( \frac{\zeta}{1-\zeta} \frac{ \omega_i}{ \omega_{i}+(1-\zeta)\zeta } X_0 \right)W_0^\ast.
\end{align*}
Not only gives this easy access to the limit $X_0 \to 1$ (number of mutants approaches 0) but it shows that the stationary distribution is - up to scaling - actually independent of the chosen grid-size $\delta$. This result stems from the specific form of $\frac{v(X)}{D(X)}$, which forms a telescopic sum when adding up the values from the discretized trait space. It does not hold in general for Markov processes constructed in this way from a Fokker-Planck equation. Embedding the discrete stationary distribution $W^\ast$ in $[0,M]$ and letting $X_0 \to 1$ and $n \to \infty$,  it is approximating a continuous probability density function $p_{W}$, the so-called continuum limit:
%%%%% Plots stationary distribution
\begin{figure}[tb]
    %\centering
    \begin{subfigure}[b]{0.45\textwidth}
        \includegraphics[width=\textwidth]{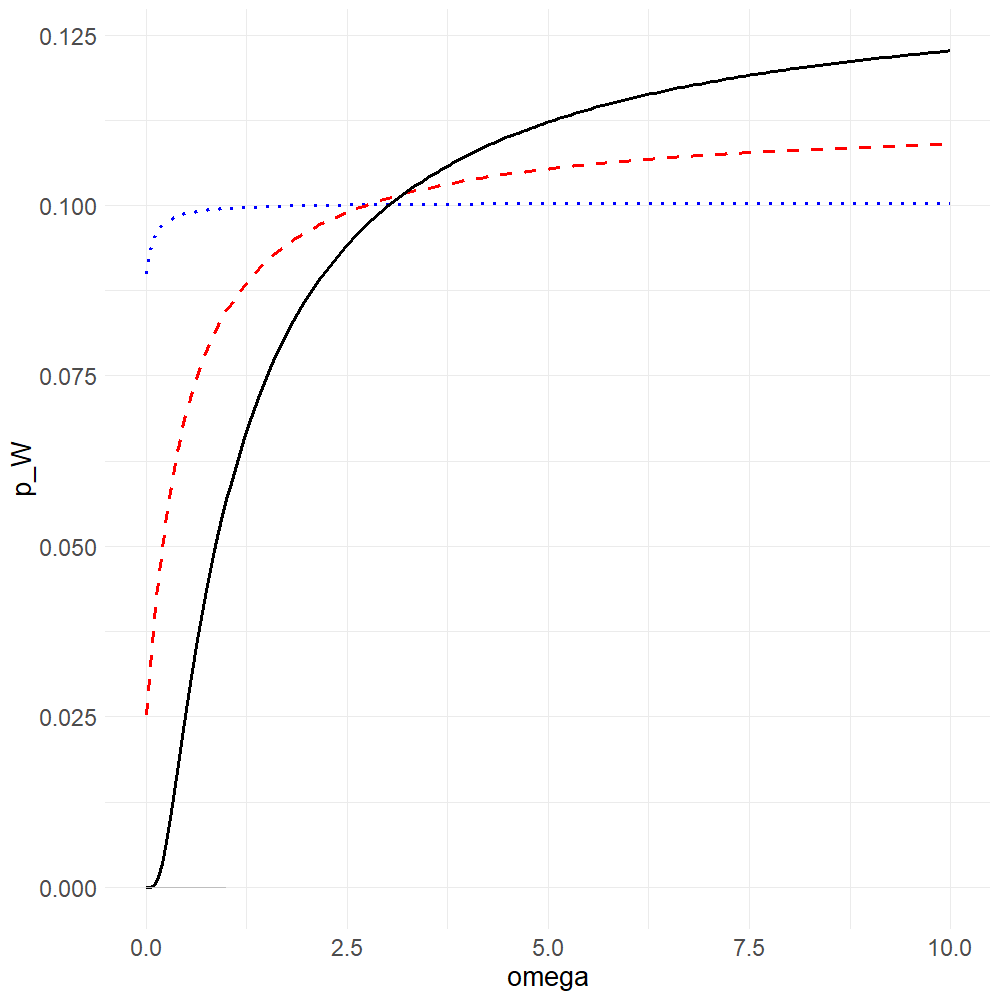}
        \caption{Continuum limit $p_W$  for stationary distributions of $W$. Parameters are $\zeta=0.2,0.6,0.95$ (dotted, dashed, solid) and $M=10$}
        \label{fig: RAD omega}
    \end{subfigure}
    \hfill
    \begin{subfigure}[b]{0.45\textwidth}
        \includegraphics[width=\textwidth]{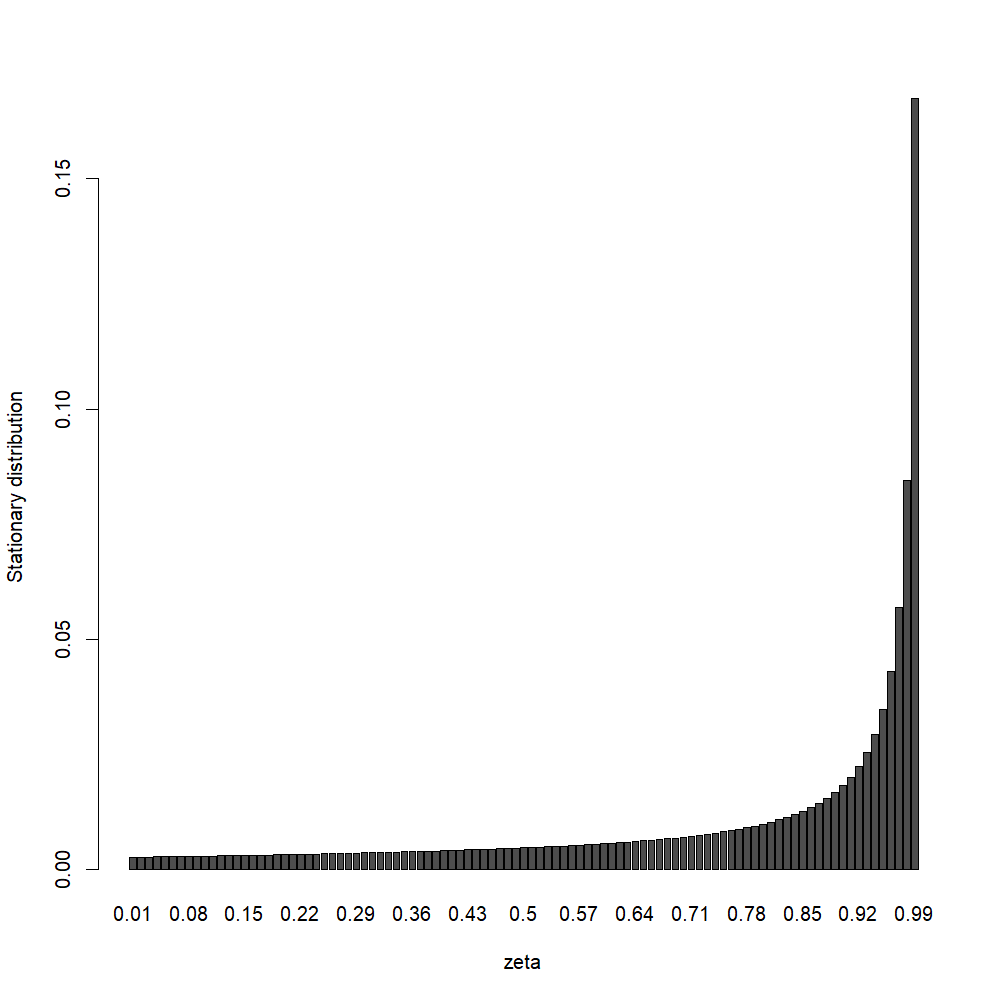}
        \caption{Discrete stationary distribution $Z^\ast$ of $\zeta$ for $n=100, \omega=2$. $Z^\ast \xrightarrow{d}\delta_1$ for $n\to\infty$.}
        \label{fig: RAD zeta}
    \end{subfigure}
    \caption{Invariant measures on the trait space for $\omega$ resp.\ $\zeta$ in case of the SIQS model.}\label{invMeasSIQS}
\end{figure}

\begin{align*}
    p_W(\omega) &= \lim_{\delta \to 0} P(W^\ast = \omega) = \exp \left( \frac{\zeta}{1-\zeta} \frac{ \omega}{ \omega+(1-\zeta)\zeta }\right) C(M)^{-1} \\
    \text{and } C(M) :&=  \int_0^M \exp\left( \frac{\zeta}{1-\zeta} \frac{ s}{ s+(1-\zeta)\zeta } \right)ds.
\end{align*}
In Fig.~\ref{fig: RAD omega}, the $\zeta$-dependence of the limiting distribution $p_W$ is displayed. When $\zeta$ is small, the distribution is nearly uniform, whereas for larger values of $\zeta$, a bias for larger values of $\omega$ can be observed. This is intuitive: When the time in $Q$ increases, the frequency of jumps becomes more important.
The cut-off at $\omega=M=10$ is arbitrary; larger values of $\omega$ are always favored. However, the marginal benefit is diminishing rapidly. Whilst very low values of $\omega$ are improbable to survive, a further increase when $\omega$ is already at a moderate value gains less and less advantage. This finding can be understood based on the rapid-switching approximation theorem~\ref{rapidSwitchTheo} which shows that the SIQS model can be well approximated by an SIS model with rescaled rate constants if $\omega_i$ and $c_i$ are large; in these rescaled rate constants, 
not $\omega_i$ and $c_i$ but only $\zeta_i=\omega_i/(\omega_i+c_i)$ enter. In that, the specific value of $\omega$ becomes  less and less important if 
$\omega$ becomes large while $\zeta$ is kept fix.

%%%%%%
\subsection{Interlude: Adaptive Dynamic for $\gamma$ in the SIS model}
As a preparation of the considerations about the evolution 
of the trait $\zeta$ for 
the SIQS model in Sect.~\ref{SIQSzetaAD} we consider a simple SIS model with infection rate $\beta$ and recovery rate $\gamma$. We fix $\mathcal{R}_0=\beta/\gamma$. For the competition of two strains (with rate constants $\beta_i$, $\gamma_i$) Theorem~\ref{approxSIS} holds true where $a=\gamma_1/\gamma_2$. Therewith, the fixation probability $\pi_{SIS}(X)$ is given by Prop.~\ref{prop: fix prob SIS}. \par\medskip 
With these preliminaries, we consider the evolution of $\gamma$ as a Markov process $(A_t)_{t\in\mathbb{N}}$. For $n\in\mathbb{N}, \delta=1/n$, the trait space is $\{ \gamma_0, \gamma_1=(1-\delta)\gamma_0,\gamma_2=(1-2\delta)\gamma_0,...,\gamma_{n-1}=(1-(n-1)\delta)\gamma_0 \}$ and the transitions are 
\begin{align*}
        P(A_{t+1} = \gamma + \delta | A_t=\gamma) &= \frac{1}{2}\pi_i(X_0)\\
    P(A_{t+1} = \gamma - \delta | A_t=\gamma) &= \frac{1}{2}\Tilde{\pi}_i(X_0)\\
    P(A_{t+1} = \gamma | A_t=\gamma) &= 1-\frac{1}{2}\left(\pi_i(X_0)+\Tilde{\pi}_{i}(X_0) \right),
\end{align*}
with the fixation probabilities for the SIS model  $\pi_i(X_0)=\pi_{SIS}(X_0)$ with $a_i=\gamma_{i+1}/\gamma_i$ and $\Tilde{\pi}_i(X_0)$ with $\Tilde{a}_i=\gamma_{i-1}/\gamma_i$. We obtain the stationary distribution $A^\ast=(A_0^\ast,...,A_{n-1}^\ast)$ for $X_0\to1$
\begin{align*}
        \frac{\pi_{i-1}(X_0)}{\Tilde{\pi}_{i}(X_0)} &= \frac{\pi_{i-1}(X_0)}{1-\pi_{i-1}(1-X_0)}
    =\frac{1+a_{i-1}+(1-a_{i-1})X_0}{1+a_{i-1}-(1-a_{i-1})X_0}  
    \xrightarrow{X_0\to1} \frac{1}{a_{i-1}}=\frac{\gamma_{i-1}}{\gamma_{i}}\\
    &\implies A_i^\ast = \frac{\gamma_0}{\gamma_i}A_0^\ast = \frac{1}{1-i/n}A_0^\ast
\end{align*}
with $A_0^\ast=\left( \sum_{j=0}^{n-1}\frac{1}{1-j/n} \right)^{-1}$. We again embed the discrete distribution $A^\ast$ in $[0,1]$ by $A^\ast([0,\gamma_{n-1}])=A^\ast_{n-1}$, $A^\ast( (\gamma_{i+1},\gamma_{i}] )=A^\ast_i$ for $0\leq i \leq n-2$ and $A^\ast((\gamma_0,1])=0$. Denoting both distributions as $A^\ast$ is a slight abuse of notation as they are strictly speaking different objects. Let $F_{A^\ast}:\mathbb{R}\to[0,1]$ be the cdf of the embedded distribution. Then for $0<y\leq1$
\begin{align*}
    1-F_{A^\ast}(y) = A^\ast((y,\infty))=\frac{\sum_{j=0}^{\lfloor (1-y)n \rfloor} A^\ast_j}{\sum_{j=0}^{n-1} A^\ast_j} = \frac{\sum_{j=0}^{\lfloor (1-y)n \rfloor} \frac{1}{n-j}}{\sum_{j=0}^{n-1} \frac{1}{n-j}}
    =\frac{\sum_{j=\lfloor ny \rfloor}^{ n } \frac{1}{j}}{\sum_{j=1}^{n} \frac{1}{j}}
    \xrightarrow{n\to\infty}0
\end{align*}
where the limit holds since $\sum_{j=\lfloor ny \rfloor}^{ n } \frac{1}{j}<\lfloor y^{-1}\rfloor-1$ and the harmonic series diverges. We can infer that the cdf $F_{A^\ast}$ is weakly convergent to $\textbf{1}_{\{0\}}$ and thus by the Helly-Bray theorem the stationary distribution is converging in distribution to the Dirac-measure at $0$:
\begin{align*}
    A^\ast\xrightarrow{d}\delta_0.
\end{align*}
Evolution favours the slower strain. Of course, $\gamma=0$ does not give a well-defined $ SIS$-process; this limit is purely symbolic. As with $\gamma\rightarrow 0$ the time scale becomes slower and slower, also the time till fixation becomes longer and longer. In that, the trait $\gamma$ slowly approximate zero, but never reaches zero. Anyhow, for very slow strains, most likely further mechanisms which are neglected by the simple SIS model gain importance. For $\gamma$ very small, the SIS model might not be appropriate any more.

%%%%%%
\subsection{Adaptive Dynamic for $\zeta$ in the SIQS model - formal results}
\label{SIQSzetaAD}

In the present section we focus on $\zeta$, and fix $\omega_1=\omega_2=:\omega>0$. 
We find again, as $\gamma$ in the SIS model, 
also the invariant distribution of $\zeta$ will degenerate to a Dirac-measure.\\
Consider a discrete-time Markov-chain $(Z_t)_{t\in\mathbb{N}}$, defined on the trait space $\{\zeta_0=0 , \zeta_1=\delta, ..., \zeta_{n-1}= (n-1)\delta \}$, where $\delta=1/n$,  with transitions
\begin{align*}
    P(Z_{t+1} = \zeta + \delta | Z_t=\zeta) &= \frac{1}{2}\pi_i(X_0)\\
    P(Z_{t+1} = \zeta - \delta | Z_t=\zeta) &= \frac{1}{2}\Tilde{\pi}_i(X_0)\\
    P(Z_{t+1} = \zeta | Z_t=\zeta) &= 1-\frac{1}{2}\left(\pi_i(X_0)+\Tilde{\pi}_{i}(X_0) \right).
\end{align*}
The fixation probabilities $\pi_i(X_0)$ of a strain with higher $\zeta$-value are  given in Prop.~\ref{prop: fix probs SIQS}. The fixation probability of a strain with lower $\zeta$-value is, by symmetry, $\Tilde{\pi}_i(X_0)=1-\pi_{i-1}(1-X_0)$ and Section~\ref{sec: Dynamics at CL} revealed that $\Tilde{\pi}_i(X_0)\leq\pi_{i-1}(1-X_0)$. 

We aim at a stationary distribution and proceed in the same way as before. Using the l'Hopital rule on the explicit form of $\pi_i(X_0)$ yields
\begin{align*}
    \frac{\pi_{i-1}(X_0)}{\Tilde{\pi}_{i}(X_0)} &= \frac{\pi_{i-1}(X_0)}{1-\pi_{i-1}(1-X_0)}
    = \frac{\mathfrak{A}(1-\zeta_{i-1}) - \mathfrak{A}(\alpha(X_0))}{\mathfrak{A}(\alpha(1-X_0)) - \mathfrak{A}(1-\zeta_i)}\\
    &\xrightarrow{X_0\to1} \frac{\frac{d}{dX_0}\mathfrak{A}(\alpha(X_0))\big|_{X_0=1}}{\frac{d}{dX_0}\mathfrak{A}(\alpha(1-X_0))\big|_{X_0=1}}\\
    &= \frac{1-\zeta_{i-1}}{1-\zeta_i}
    \left[ \frac{\omega+\zeta_i(1-\zeta_i)}{\omega+\zeta_i(1-\zeta_{i-1})} \right]^{\delta^{-1}\zeta_{i}}
    \left[\frac{\omega+\zeta_{i-1}(1-\zeta_{i-1})}{\omega+\zeta_{i-1}(1-\zeta_{i})}\right]^{\delta^{-1}\zeta_{i-1}}.
\end{align*}
Note that for $\omega\to\infty$, we can recover the same expression for the fraction $\frac{\pi_{i-1}(X_0)}{\Tilde{\pi}_{i}(X_0)}$ as before with $a=(1-\zeta_i)/(1-\zeta_{i-1})$. Unfortunately, unlike in the previous cases, 
\begin{align*}
    \prod_{j=1}^{i} \frac{\pi_{j-1}}{\Tilde{\pi}_{j}} \neq \frac{\pi_0}{\Tilde{\pi_{i}}}
\end{align*}
however, for large $n$, the right-hand side is a good approximation. We arrive at
\begin{align*}
    Z^\ast_i \thickapprox \frac{\pi_0}{\Tilde{\pi}_i} = 
    \frac{1}{1-\zeta_i}\frac{\omega + \zeta_i(1-\zeta_i)}{\omega+\zeta_i} Z^\ast_1,\quad 2\leq i \leq n-1,
\end{align*}
and $Z^\ast_1$ chosen s.t. $\| Z^\ast \|_1=1$. The rest goes along the lines of the calculations for $A_t$. Embedding again the discrete distribution $Z^\ast$ in $[0,1]$ by $Z^\ast([0,\zeta_{1}])=Z^\ast_1$, 
$Z^\ast( (\zeta_{i-1}, \zeta_{i}])=Z^\ast_i$ for $2\leq i \leq n-1$ 
and $Z^\ast((\zeta_{n-1},1])=0$, further let $F_{W^\ast}:\mathbb{R}\to[0,1]$ be the cdf of the embedded distribution. Then for $0\leq y < 1$
\begin{align*}
    F_{Z^\ast}(y) &= Z^\ast((-\infty,y])=\frac{\sum_{j=1}^{\lfloor yn \rfloor} Z^\ast_j}{\sum_{j=1}^{n-1} Z^\ast_j} 
\end{align*}
Noting that $\frac{\omega}{\omega+1} < (1-\zeta_i)Z_i^\ast/Z_1^\ast < 1$ we find weak convergence of the cdf to $\textbf{1}_{\{1\}}$:
\begin{align*}
    F_{Z^\ast}(y) < \frac{\sum_{j=1}^{\lfloor yn \rfloor} 
            \frac{1}{n-j} }
            {\sum_{j=1}^{n-1} \frac{1}{n-j} \frac{\omega}{\omega+1}}
            = \frac{\sum_{j=\lfloor (1-y)n \rfloor}^{n-1} 
            \frac{1}{j} }
            {\frac{\omega}{\omega+1} \sum_{j=1}^{n-1} \frac{1}{j} }
            \xrightarrow{n\to\infty}0,
\end{align*}
implying that the embedded stationary distribution is, for $n\to\infty$, converging in distribution to the Dirac-measure at 1:
\begin{align*}
    Z^\ast \xrightarrow{d} \delta_1.
\end{align*}
This limit is symbolic as $\zeta=1$ does not give a well-defined $SIQS$-process. In Fig.~\ref{fig: RAD zeta} $Z^\ast$ is shown for $n=100$. The stationary distribution is heavily concentrated near $1$. The tail is vanishing: $P(Z^\ast \leq \zeta) \xrightarrow{n \to \infty} 0$ for any $\zeta<1$. This strong bias towards quiescent behavior indicates that a strain not maximizing $\zeta$ will almost surely not survive. The analogue can be observed in the $SIS$-model, where evolution drives the strains to become slower and slower. We have seen that quiescence can be interpreted as a mechanism for a strain to slow down its activity, i.e., decrease the infection and recovery rate. 

It seems natural to combine $W$ and $Z$ to a two-dimensional random walk on the grid $\{\zeta_1,...,\zeta_{n-1} \} \times \{\omega_0,...,\omega_{n}\}$. This, however, gains no further insight; the marginal distributions of the resulting stationary distribution on the grid behave qualitatively just like the one-dimensional stationary distributions.

\subsection{Adaptive Dynamics of $\zeta$ in the SIQS model including a non-zero death term}
The evolution in $\zeta$ did hint that it is -- the longer the better -- preferential to be in the Q state. This is, of course, not realistically and is also caused by the fact that quiescence has no disadvantage (if $\hat\mu=0$, the reproduction number is independent on the switching rates).  If, however, much time is spend in the Q state, effects as natural death (or recovery) cannot be neglected. In the present section, we aim to include that effect in allowing that $\hat\mu>0$. To remain in the quasi-neutral case $\mathcal{R}_{0,1} = \mathcal{R}_{0,2}$, we  compensate the additional outflow from the Q state by $\beta_2=\beta_1\frac{\gamma+\vartheta_2 \hat{\mu}}{\gamma+\vartheta_1 \hat{\mu}}$. Both strains have equal fitness but different timing due to quiescence.
Unfortunately, this generalization significantly worsens the algebra and the resulting terms are too cumbersome to be of any practical use. We suffice with evaluating numerically the stationary distributions of the Markov chain induced by Adaptive Dynamics in the discretized trait space of $\zeta$ from \ref{SIQSzetaAD}, through use of the computer algebra package maxima~\cite{Maxima.2023}. 

The procedure is as following: Again, we discretize the $\zeta$-axis to $\{\zeta_0=0 , \zeta_1=\delta, ..., \zeta_{n-1}= (n-1)\delta \}$, and use the algorithm presented in ... to calculate the drift and diffusion terms $v_{\mu}(\zeta_i)$ and $D_{\mu} (\zeta_i)$ at the grid points. Define $(Z_{t}^{\mu})_{t\in\mathbb{N}}$ just as in \ref{SIQSzetaAD} but with the transition probabilities $\pi_i^\mu, \Tilde{\pi_i}^\mu$ induced by the $\hat{\mu}$-dependent drift and diffusion terms. The stationary distribution $Z^{\mu\ast}=(Z_1^{\mu\ast},...,Z_n^{\mu\ast})$ is given by the detailed balance condition
\begin{align*}
	Z_i^\ast=\frac{\pi^\mu_{i-1}(X_0)}{\Tilde{\pi^\mu}_{i}(X_0)}\cdot...\cdot\frac{\pi^\mu_{0}(X_0)}{\Tilde{\pi}^\mu_{1}(X_0)} Z_1^\ast.
\end{align*}

Recalling (\ref{Formulafixationprob}) and l'Hopital, we can calculate the stationary distribution straight from the drift and diffusion terms: 
\begin{align*}
	\frac{\pi^\mu_{i-1}(X_0)}{\Tilde{\pi}^\mu_{i}(X_0)} 
	&= \frac{\pi^\mu_{i-1}(X_0)}{1-\pi^\mu_{i-1}(1-X_0)}
	= \frac{ \int_{X_0}^1 \nu (y)dy /  \int_{0}^1 \nu (y)dy }{ 1 - \int_{1-X_0}^1 \nu (y)dy/ \int_{0}^1 \nu (y)dy}\\
	&= \frac{  \int_{X_0}^1 \nu (y)dy}{  \int_{0}^{1-X_0} \nu (y)dy}
	\xrightarrow{X_0 \to 1} \frac{\nu(1)}{\nu(0)} = \nu(1) = \exp \left( -2\int_0^1 v_\mu(z)/D_\mu(z) dz\right)
\end{align*}
Note that we took the limit $X_0\to1$. This means the number of mutants introduced to the population approaches $0$.

Interestingly, the resulting stationary distributions are no longer degenerate. 
In Fig.~\ref{fig: Sim Stationary} the stationary distribution is given under high population turnover $\mu=0.08$ (a) and low population turnover $\mu=0.01$ (b).
Bear in mind that the additional outflow from $Q$ is compensated for by a higher infection rate. The distribution has a single peak away from $1$. In fact, the probability plummets rapidly to $0$ as $\zeta$ approaches 1, whereas it remains positive on the lower tail. As expected, with lower turnover, the peak is more pronounced and shifted to the right. In Fig.~\ref{fig: Sim Stationary omega} the stationary distribution is given with high frequency of jumps $\omega_1 = \omega_2 = 1$ (a) and low frequency of jumps $\omega_1 = \omega_2 = 10$ (b). We recognize the same shape: A single peak with vanishing upper tail and positive lower tail. Anew, the maximum is more pronounced and shifted to the right when $\omega$ is large.

\begin{figure}[htbp]
	%\centering
	\begin{subfigure}[b]{0.45\textwidth}
		\includegraphics[width=\textwidth]{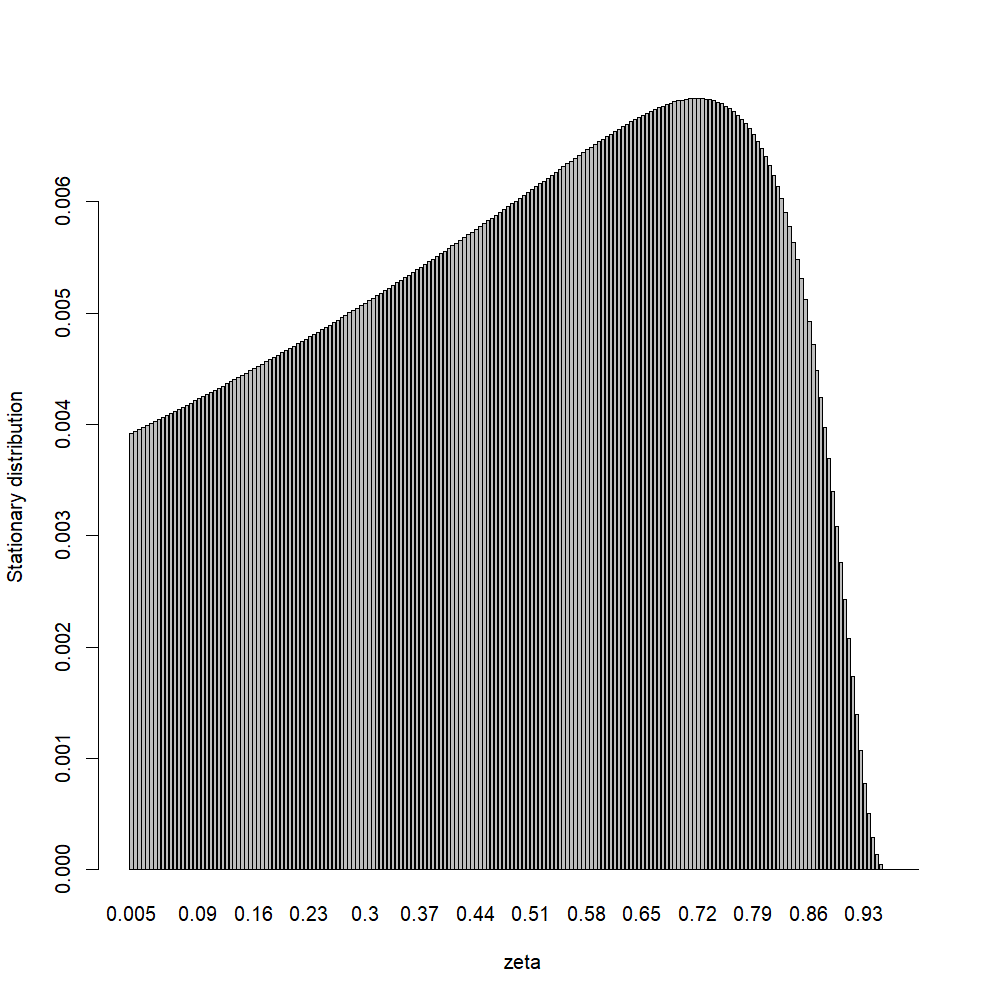}
		\caption{$\mu=0.08$ }
		\label{fig: Sim Stationary distribution mu8}
	\end{subfigure}
	\hfill
	\begin{subfigure}[b]{0.45\textwidth}
		\includegraphics[width=\textwidth]{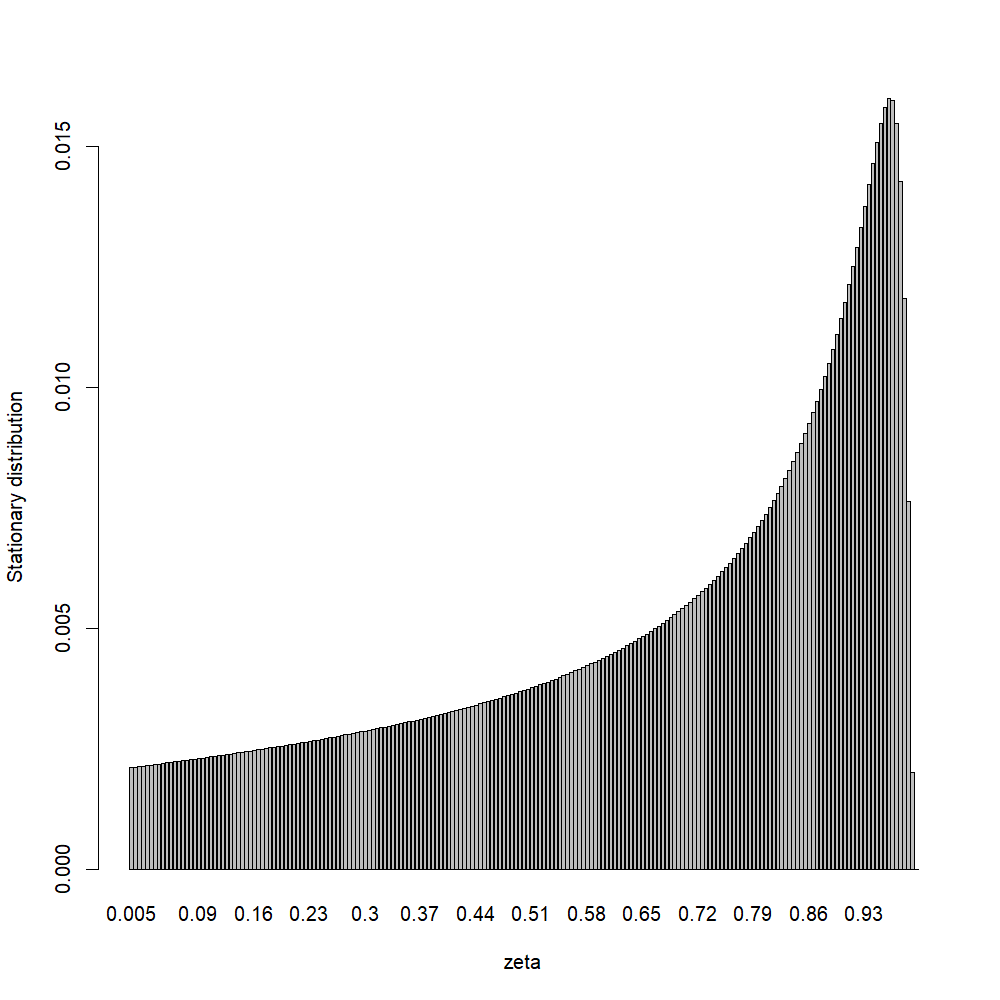}
		\caption{ $\mu=0.01$}
		\label{fig: Sim Stationary distribution mu1}
	\end{subfigure}
	\caption{ Stationary distributions for $\zeta$ under high population turnover (a) and low population turnover (b). The other parameters are $n=200$, $\omega_1=\omega_2=2$, $\beta_1=2$, $\gamma=1$.  Here we consider the SIS model which approximates the SIQS model in the rapid switching limit: The parameters $\omega$ and $c$ appear in the rates of the SIS model (see Theorem~\ref{rapidSwitchTheo}).}
	\label{fig: Sim Stationary}
\end{figure}
%%%Figure: Stationary distributions w
\begin{figure}[htbp]
	%\centering
	\begin{subfigure}[b]{0.45\textwidth}
		\includegraphics[width=\textwidth]{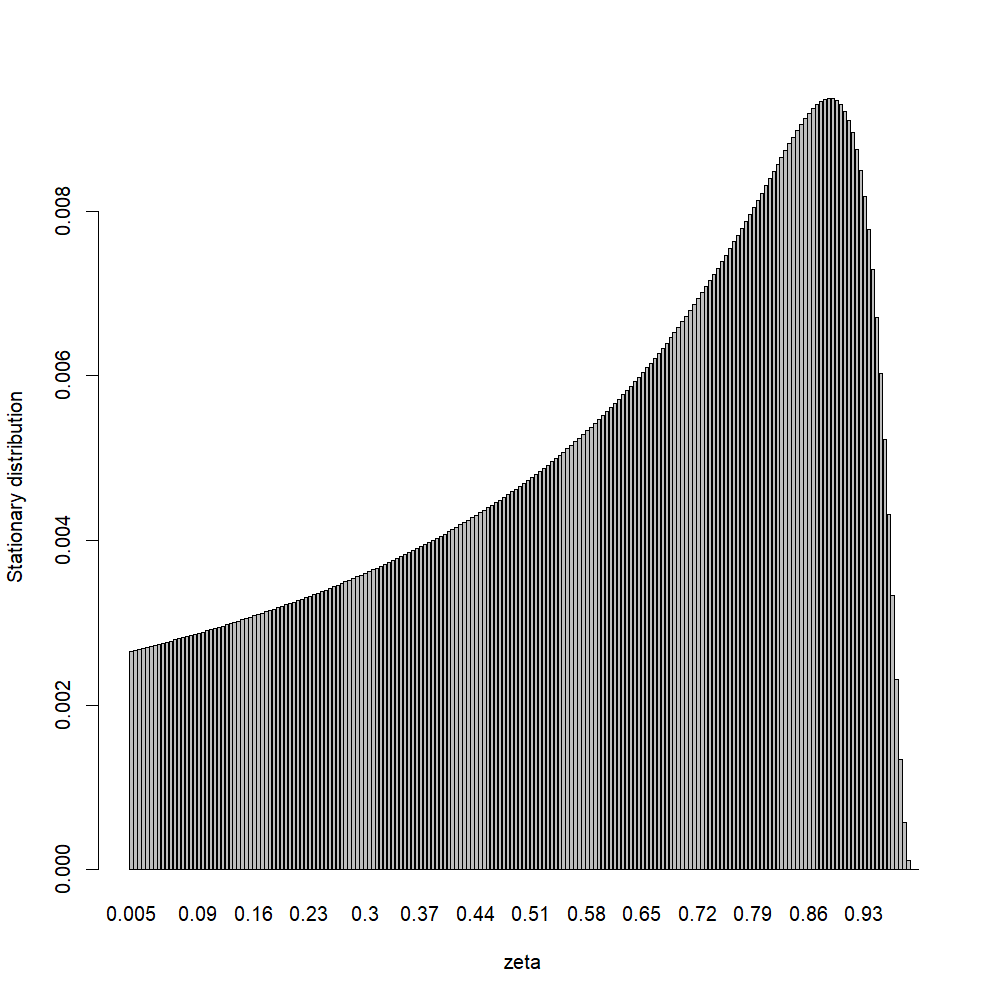 }
		\caption{$\omega_1 = \omega_2 = 10$ }
		\label{fig: Sim Stationary distribution mu8w10}
	\end{subfigure}
	\hfill
	\begin{subfigure}[b]{0.45\textwidth}
		\includegraphics[width=\textwidth]{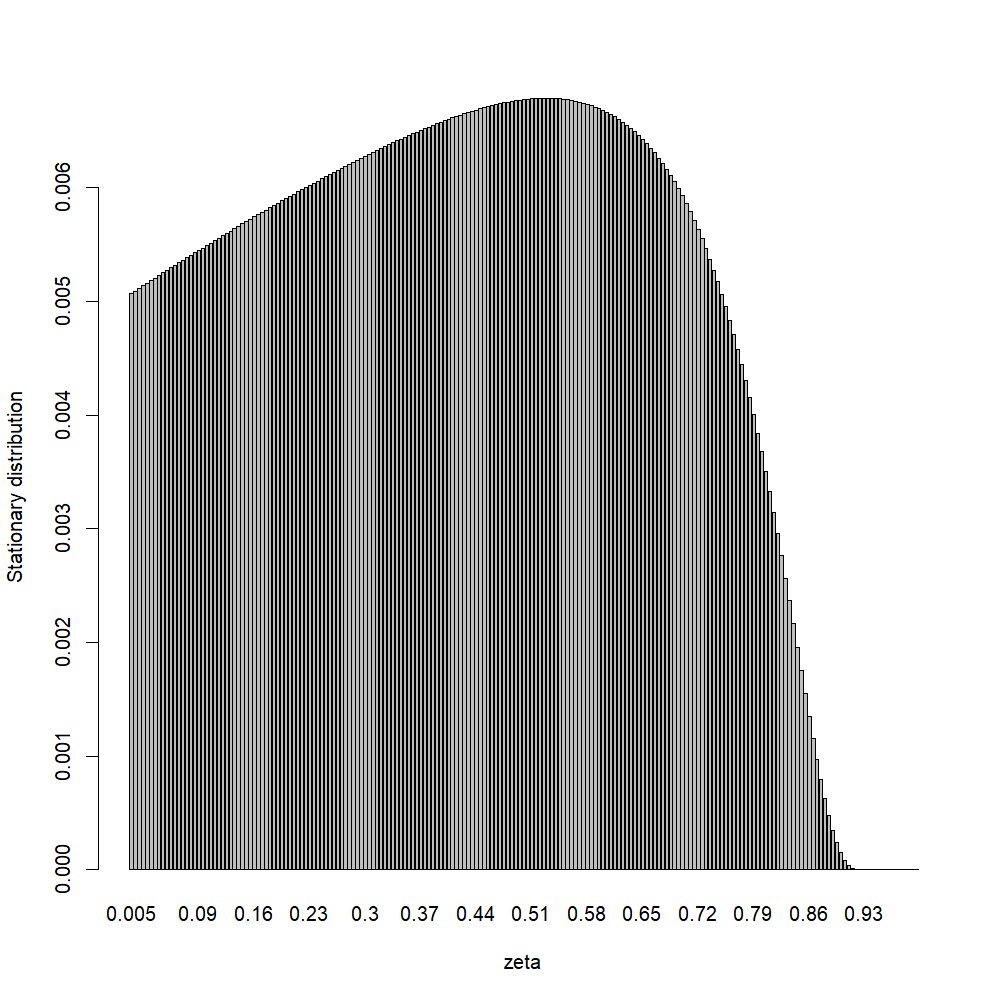}
		\caption{ $\omega_1 = \omega_2 = 1$}
		\label{fig: Sim Stationary distribution mu8w1}
	\end{subfigure}
	\caption{ Stationary distributions for $\zeta$ with high frequency of jumps (a) and low frequency of jumps (b). The other parameters are $n=200$, $\mu=0.08$, $\beta_1=2$, $\gamma=1$.  } %
	\label{fig: Sim Stationary omega}
\end{figure}

In Fig.~\ref{fig: Sim Stationary} the stationary distribution is given under high population turnover $\mu=0.08$ (a) and low population turnover $\mu=0.01$ (b).
Bear in mind that the additional outflow from $Q$ is compensated for by a higher infection rate. The distribution has a single peak away from $1$. In fact, the probability plummets rapidly to $0$ as $\zeta$ approaches 1, whereas it remains positive on the lower tail. As expected, with lower turnover, the peak is more pronounced and shifted to the right. In Fig.~\ref{fig: Sim Stationary omega} the stationary distribution is given with high frequency of jumps $\omega_1 = \omega_2 = 1$ (a) and low frequency of jumps $\omega_1 = \omega_2 = 10$ (b). We recognize the same shape: A single peak with vanishing upper tail and positive lower tail. Anew, the maximum is more pronounced and shifted to the right when $\omega$ is large.

\subsection{Intuition for the mechanism behind the weak selection}\label{innterpret}
Our results show that even in a constant environment, slower traits are superior to faster ones. This result is counter intuitive. Naively, we expect that the faster a strain is, the fitter it is. Unfortunately, the result is a consequence of rather involving and technical arguments and does not lead to a direct deeper insight. In this section, we aim to use some more informal arguments. Thereto, we focus on an SIS model with two different incidence rate constants $\beta_i$ and recovery rate constants $\gamma_i$, where we require the same fitness in the deterministic limit, $R_0=\beta_1/\gamma_1=\beta_2/\gamma_2$. The two-dimensional van Kampen expansion up to second order is given by 
\begin{equation}\label{FP2dimSIS}
	\begin{aligned}
		\partial_t \rho = &-\partial_x \left( \left[ \beta_1(1-x-y)x-\gamma_1x \right] \rho \right)
		-\partial_y \left(\left[ \beta_2(1-x-y)y-\gamma_2y \right]\rho \right)\\
		&+\frac{h}{2}\partial_x^2 \left(\left[ \beta_1(1-x-y)x+\gamma_1x \right] \rho \right)
		+\frac{h}{2}\partial_y^2 \left(\left[ \beta_2(1-x-y)y+\gamma_2y \right]\rho \right).
	\end{aligned}
\end{equation}
That is, the drift (the corresponding ODE) reads
$$\frac d{dt}
\left(\begin{array}{c}
x\\y
\end{array}\right)
= 
\left(\begin{array}{c}
	\gamma_1( (R_0(1-x-y)-1)x\\
	 \gamma_2((R_0(1-x-y)-1)y
\end{array}\right) =: f(x,y).$$

We start off with a state $(x^\#,y^\#)$ on the coexistence line (CL), $x^\#+y^\# = 1-1/R_0$. Noise will carry the state away from the coexistence line. Note that the noise term  
is completely symmetric, such that noise alone does not change the expected location of the state. \\
Anyhow, assume that only one strain becomes perturbed. That is, from $(x^\#, 1-1/R_0-x^\#)$ the state jumps to either 
$$ 
\vec x^\pm_i = \left(\begin{array}{c}
x^\#\\
y^\#
\end{array}\right)\pm \eps \e_i,\qquad i\in\{1,2\}$$
where $\e_1=(1,0)^T$ and $\e_2=(0,1)^T$. Evaluating the drift at these two positions yields 
(note that $R_0(1-x^\#-y^\#)-1=R_0(1-(1-1/R_0))-1=0$ and 
$R_0(1-(x^\#\pm\eps)-y^\#)-1=\mp R_0\eps$)
$$ 
f(\vec x^\pm_1) = 
 \left(\begin{array}{c}
\mp\,\gamma_1\,R_0\,\eps\,(x^\#\pm\eps)\\
\mp\,\gamma_2\,R_0\,\eps\,y^\#\\
\end{array}\right) 
= 
\mp\eps R_0\,\left(\begin{array}{c}
	\gamma_1\,x^\#\\
	\gamma_2\,y^\#
\end{array}\right)
-
	\gamma_1\,R_0\,\eps^2\e_1
$$
and 
$$
f(\vec x^\pm_2) = 
\left(\begin{array}{c}
	\mp\,\gamma_1\,R_0\,\eps\,x^\ast\\
	\mp\,\gamma_2\,R_0\,\eps\, (y^\#\pm\eps)
\end{array}\right)
= 
\mp\eps R_0\,\left(\begin{array}{c}
	\gamma_1\,x^\#\\
	\gamma_2\,y^\#
\end{array}\right)-	\gamma_2\,R_0\,\eps^2\e_2
$$
The vector field which drives back the state consists of two parts: The first order term has the same absolute size for the jump in direction $+\eps \e_i$ and $-\eps\e_i$, but opposite signs. As the noise is symmetric, in average, this term cancels out. There is in average no transport of the state due to this first term.\\
However, the second order term always has a negative sign. That is, if we jump up or down, the strain which is jumping will be decreased in freqeuncy. There is a net average displacement due to this second term. The amount of the decrease is proportional to the removal rate of this strain. A smaller removal rate does mean less decrease. The slower strain has an advantage. Once core message here: The advantage is based on a second order term. This observation might be in line with the fact that we have a weak selection.\par\medskip 

This consideration already moves us closer to a direct interpretation, but we are still not there. We still work with computations, though they made the issue clearer: Direct effects will not explain the effect, we need to consider the interaction of several compartments. Let us decipher these higher order interactions.\par\medskip 

Let us rewrite the terms from above focussing on the stochastic distortion of the first strain, where $x=x^\#\pm\Delta x$ and consequently $s=s_0\mp\Delta s$, $s_0=1-1/R_0$, $\Delta x = \Delta s=\eps$, then the net infection rate (incidence minus recovery) for strain~1 reads
\begin{eqnarray*}
\beta_1 s x -\gamma_1 x 
%&=& \beta_1(x^\#\pm\Delta x)(s_0\mp\Delta s) -\gamma_1(x^\#\pm\Delta x)
%= \pm (\beta_1 (s_0\Delta x-x^\#\Delta s)-\gamma_1\Delta x) )
%-\beta_1\Delta x\Delta s\\
&=& \pm \gamma_1(R_0(s_0\Delta x-x^\#\Delta s)-\Delta x) )
-\gamma_1R_0\Delta x\Delta s.
\end{eqnarray*} 
The term linear in $\Delta x$ and $\Delta s$ will be in average balanced by jumps in the two directions $\pm \eps$. The interaction term (increase of $x$ by $\eps$ and decrease of $s$ by $\eps$) remains, has this term always has a negative sign.\par\medskip 

To sum up: The strain which is distorted always has an disadvantage in the interaction between number of infecteds and number of susceptibles. Either this strain becomes larger by $\eps$, but these surplus of infected individuals have less susceptibles to infect (also the susceptibles have been decreased by $\eps$). Or, the strain is deceased by $\eps$. Then, there are more susceptibles (a surplus of $\eps$) to infect, but less infectious individuals who are able to infect suscpetibles. The other strain is also affected by the distortion of the first strain, but the effects on the other strain cancel out in average: In the equation of the other strain only a first order term appears.\\ 
While effects which are linear in $\eps$ in average cancel out, the quadratic interaction terms do not. They lead to an disadvantage of the distorted strain, which is proportional to $\eps^2$ and the time scale of the corresponding strain. 
The slower strain has a smaller disadvantage and in that is favoured by selection. \par\medskip

If we consider this argument, it becomes clear that the effects we see here have nothing to do with quiescences {\it per se} but with timing only. The pathogens do not aim to escape some detrimental environment through switching into hibernation. Quiescence is merely a way to slow down processes while keeping the reproduction number untouched. Every other mechanism which slows down the timing of a pathogen while keeping the reproduction number would have the same effect and also would be favoured by evolution.

\section{Discussion}
We discussed a stochastic SIQS model in the framework of Adaptive Dynamics. 
As the model formulated as an individual based process with a finite population is rather challenging to analyze, 
we propose several approximations to reduce the complexity: A fast I-Q-switching limit allows to approximate the SIQS model by an SIS model with adapted/averaged rate constants. 
For the usage by Adaptive Dynamic, we investigate the competition of two strains that only differ in the quiescent behavior, but not in their reproduction number: This neutrality condition
allows to reduce the diffusion limit of the models (SIQS as well as SIS) to a one-dimensional stochastic differential equation. Herein, the fixation probabilities for the strains have been determined, which in turn allowed to show by Adaptive Dynamics that evolution favours the slower trait, and in that, favours quiescence. \par\medskip 

One main outcome of this analysis is that quiescence as a trait is favoured without a switching environment. This finding is in line with previous findings by Blath et al.~\cite{Blath.2020}. The mechanism clearly relies on stochastic effects and disappears if the population size tends to infinity such that stochasticity vanishes.\\
If we look closer at the analysis, we observe the second main finding: 
Quiescence {\it per se} is not the driving force for this trait to be selected: Usually, it is assumed that the quiescent state allows the strain to better withstand extrinsic or intrinsic stochastic fluctuations  which lead to a disadvantage for the strain; in that, quiescence is a response to a fluctuating environment. If we consider the SIS model with re-scaled rates that approximate the SIQS model, also in this setting quiescence is favoured. However, in the SIS model, the Q state does not exist. The nature of the selected trait is different: The main effect is a decrease of the time scale of a strain (infection rate as well as recovery rate).\\ 
The interpretation of quiescence as a mean to control the time scale of a strain resembles the way how the ion current in axons is controlled by ion channels: Though the single channels only can assume one of the two states, open or closed, by stochastic   switching between the two states, a certain fraction of all ion channels are open, and in that, the ion current can be controlled in a continuous manner.  This is a parsimonious design, which avoids to construct sophisticated and complex ion channels where each individual channel is able to control the ion flux in a continuous manner.\\
In the same way, a single infected individuals either is infectious or quiescent. On the population level however, the incidence and the recovery rate, that is, the time scale of the infection process, can continuously be controlled by the switching rates between Q and I. It is not quiescence, but slowness,  which is selected for. \par\medskip 

By a rather informal summary of the approximation results we have indicated that the selection is based on interactions of second order terms in stochastic fluctuations.  
The core mechanism is based on the fact that the noise acts independently on the frequency of both strains. If the frequency of a strain is decreased, the susceptible class is increased. That is, there are more susceptibles but less individuals of this strain to infect them. This is a disadvantage for this strain. Or, if a strain is increased, the susceptibles are decreased. That is, there are more individuals in the distorted strain, but they have less individuals to infect. Also this interaction is an disadvantage. Any stochastic fluctuation has a disadvantage for the fluctuating strain. The size of this disadvantage depends on the velocity of the strain: The faster, the larger is the disadvantage. In that, slower strains have a smaller disadvantage (or, slower strains have an advantage above faster strains). We conjecture that this mechanism is rather universal and can be also found in other biological systems. Whenever two strains interact only indirectly by competing for an environmental resource (in our case: the susceptible individuals), this effect is more likely to play a role. For example, we strongly suspect that this effect can also be found in competition models such as the Volterra model. \par\medskip 

Another consequence of the analysis clearly indicates that the evolutionary dynamics is slow (we have no selective sweeps but only weak selection). And, we do not have an ESS which can  be localized in the trait space as we are used from classical Adaptive Dynamics with a deterministic PIP. In our case, the ESS is replaced by a  broad equilibrium distribution of traits. This finding resembles the difference between a linear ODE with a stable stationary state and an Ornstein-Uhlenbeck process with some small noise  (see, e.g.~\cite{Grasman1999}). \\
It is therefore to be expected that in observations the difference in the degree of quiescence between different strains is large. If we consider the different pathogens causing malaria, some do quiescence, and some not. This difference can be very well caused by different environmental conditions, but it is also plausible that it is a consequence of our findings. \par\medskip 

The evolution of the time scale of actions mainly are  discussed as a life history trait which forms a co-factor or co-evolves with some specific trait of interest, such as cooperation. 
For example, spatial diffusion is considered by Lee et al.~\cite{lee2022slow}. Usually the trait which disperse faster out-compete slower diffusing traits. In that paper, it is shown that slow cheaters can invade fast cooperators. In a similar spirit, it is shown that a strong Allee effect is able to favour slow strains~\cite{taylor2005allee}. 
Within an homogeneous population, quiescence, seedbanks and age structure is considered as a life-history trait, again as a co-trait to cooperation~\cite{sellinger2019better,muller2022life,john2023age}. Depending on whether the population is constant or growing exponentially, the coevolution of cooperation and a slower trait (more dormancy, deeper seed banks) stabilizes cooperation.  
These results indicate that slow time scales can be advantageous under evolutionary pressure. However, the time scale of action {\it per se} is less investigated. 
\par\medskip

All in all, the interpretation of quiescence as a method to slow down the time scale of actions has its own right, particularly in a constant environment. For fluctuating environments, the classical bet-hedging interpretation clearly is appropriate. However, the broader view on quiescence and dormancy introduced here is a promising approach for biological systems where there is no obviously fluctuating environment, or where the fluctuation in the environment and the switching rates between active and dormant states possess very different time scales.

\addcontentsline{toc}{section}{References}
\bibliography{Literaturverzeichnis_neu.bib}
\bibliographystyle{plain}

\begin{appendix}

	\maketitle

\end{comment}

\newpage 
\section{Supplementary information}

\subsection{Proofs}
\subsubsection{Reproduction number}\label{reproNumberProof}
We consider the SIQS model. Recall the transitions 
of this model: \par\medskip 

  \begin{center}
        \begin{tabular}{||c | c | c||} 
         \hline
         Event & Type of transition & Rate \\ [0.5ex] 
         \hline\hline
         Infection with strain 1 &  $I_1 \to I_1+1$ & $\beta S I_1/N$ \\ 
         \hline
         Infection with strain 2 &  $I_2 \to I_2+1$ & $\beta S I_2/N$ \\ 
         \hline
         Recovery of $I_1$ &  $I_1 \to I_1-1$ & $\gamma_1 I_1$ \\
         \hline
         Recovery of $I_2$ & $I_2 \to I_2-1$ & $\gamma_2 I_2$  \\
          \hline
         Deactivation of $I_1$ & $(I_1,Q_1) \to (I_1-1,Q_1+1)$ & $\hat{\omega}_1 I_1$  \\
           \hline
         Deactivation of $I_2$ & $(I_2,Q_2) \to (I_2-1,Q_2+1)$ & $\hat{\omega}_2 I_2$  \\
          \hline
         Reactivation of $Q_1$ & $(I_1,Q_1) \to (I_1+1,Q_1-1)$ & $\hat{c}_1 Q_1$  \\
           \hline
         Reactivation of $Q_2$ & $(I_2,Q_2) \to (I_2+1,Q_2-1)$ & $\hat{c}_2 Q_2$  \\ 
         \hline
		 Death/recovery of $Q_1$ & $Q_1 \to Q_1-1$ & $\hat \mu Q_1$  \\ 
           \hline
         Death/recovery of $Q_2$ & $Q_2 \to Q_2-1$ & $\hat \mu Q_2$  \\ [1ex] 
         \hline
        \end{tabular}
		\end{center} 
  \par\medskip

In order to determine the reproduction number we follow the fate of one infected individual (we thus drop the index in rates and states). 
We linearize at the uninfected equilibrium, that is, we assume that $S=N$ for all times. The focal individual has state $X_t$ 
at time $t$ which starts with 
state $I$, and jumps back and forth between $I$ and $Q$, until it jumps into the absorbing state $S$. 
As the Markov chain is the same for both strains, we suppress the index $i$ and have 
$$ 
I\rightarrow Q \mbox{ at rate } \hat\omega,\qquad 
Q\rightarrow I \mbox{ at rate } \hat c,\qquad 
I\rightarrow S \mbox{ at rate } \gamma,\qquad 
Q\rightarrow S \mbox{ at rate } \hat \mu$$
and 
$$ R_0 = \int_0^\infty \beta\, \chi(X_t=I)\, dt.$$
Thereto it is sufficient to consider the master equations for $p_I(t)=P(X_t=I)$ and  $p_Q(t)=P(X_t=Q)$, 
$$ \frac d {dt}\left(\begin{array}{c}p_I\\p_Q\end{array}\right) = A  \left(\begin{array}{c}p_I\\p_Q\end{array}\right),\qquad 
\left(\begin{array}{c}p_I(0)\\p_Q(0)\end{array}\right)=\left(\begin{array}{c}1\\0\end{array}\right).\qquad 
A = \left(\begin{array}{cc}
- \hat\omega- \gamma  &   \hat c\\
\hat\omega               &   -\hat c-\hat \mu
\end{array}\right).$$
Thus, 
\begin{align*}
R_0 
&=  \int_0^\infty \beta\, p_I(t))\, dt
=  \int_0^\infty \beta\, \e_1^Te^{At}\, \e_1\, dt 
= -\,\e_1^TA^{-1}\e_1 \\
&= \beta\,\frac{\hat c+\hat \mu}{(\hat\omega+\gamma)(\hat c+\hat\mu)-\hat\omega\hat c}
= \beta\,\frac{\hat c+\hat \mu}{\hat \mu\,(\hat\omega+\gamma)+\gamma\hat c}
= \frac{\beta_i}{\gamma_i+\hat \mu\,\frac{\hat \omega_i}{\hat c_i+\hat \mu}}. 
\end{align*}

\subsubsection{Master equation and Fokker-Planck equation}\label{stochEq}
We assume $\gamma_1=\gamma_2=\gamma$ and $\hat\mu=0$. 
Then, the Master equation of the two-strain SIQS model is given by the time-evolution of $P_{m,k,n,l}(t)=P(I_1(t)=m, Q_1(t)=k, I_2(t)=n, Q_2(t)=l)$ and reads
\begin{equation}\label{eq: Master SIIQQS}
    \begin{aligned}
        \Dot{P}_{m,k,n,l}(t) = -&P_{m,k,n,l}(t) \Bigg[ \frac{N-m-k-n-l}{N}(\beta_1 m+\beta_2 n) + \gamma n+\gamma m\\ 
		+& \hat{\omega}_1 m + \hat{\omega}_2 n + (\hat{c}_1+\mu) k + (\hat{c}_2+\mu) l \Bigg] \\
        +&P_{m-1,k,n,l}(t) \left[ \beta \frac{N-m-k-n-l+1}{N}(m-1) \right]\\
        +&P_{m,k,n-1,l}(t) \left[ \beta \frac{N-m-k-n-l+1}{N}(n-1)  \right]\\
        +&P_{m+1,k,n,l}(t) \left[ \gamma (m+1) \right]
        +P_{m,k,n+1,l}(t) \left[ \gamma (n+1)  \right]\\
        +&P_{m+1,k-1,n,l}(t) \left[ \hat{\omega}_1 (m+1)  \right]
        +P_{m,k,n+1,l}(t) \left[ \hat{\omega}_2 (n+1)  \right]\\
        +&P_{m-1,k+1,n,l}(t) \left[ (\hat{c}_1+\mu) (k+1)  \right]
        +P_{m,k,n,l+1}(t) \left[ (\hat{c}_2+\mu) (l+1)  \right].
    \end{aligned}
\end{equation}

Through van Kampen system size expansion and subsequent Taylor expansion to second order, the Fokker-Planck equation in four dimensions is obtained:
\begin{equation}\label{eq: FPorig SIIQQS}
    \begin{aligned}
    \partial_t \rho (x,u,y,v,t)=
    &-\partial_x \left[ \beta x (1-x-u-y-v)-\gamma x \right] \rho\\
    &-\partial_y \left[ \beta y (1-x-u-y-v)-\gamma y \right] \rho\\
    &- \left( \partial_x-\partial_u \right) \left[ \hat{c}_1 u - \hat{\omega}_1 x \right] \rho 
	%+\partial_u \left[ \mu u\right] \rho\\
	\\
    &- \left( \partial_y-\partial_v \right) \left[ \hat{c}_2 v - \hat{\omega}_2 y \right] \rho
	%+\partial_v \left[ \mu v\right] \rho\\
	\\
    &+\frac{1}{2N} \partial_x^2 \left[ \beta x (1-x-u-y-v)+\gamma x \right] \rho\\
    &+\frac{1}{2N} \partial_y^2 \left[ \beta y (1-x-u-y-v)+\gamma y \right] \rho\\
    &+\frac{1}{2N} \left( \partial_x-\partial_u \right)^2 \left[ \hat{c}_1 u + \hat{\omega}_1 x \right] \rho
	 %+\frac{1}{2N} \partial_{uu}^2 \left[ \mu u\right] \rho\\
	 \\
    &+\frac{1}{2N} \left( \partial_y-\partial_v \right)^2 \left[ \hat{c}_2 v + \hat{\omega}_2 y \right] \rho
    %+\frac{1}{2N} \partial_{vv}^2 \left[ \mu v\right] \rho
    %+\mathcal{O}(N^{-2})
    \end{aligned}
\end{equation}

\subsection{Large switching rate limit}
\label{largeSwitchLimitParticle}
\begin{comment}
    \textbf{Claim:} Let $(I_1,Q_1,I_2,Q_2)$ be a two-strain $SIQS$-model (Def. ??) with rates $\beta_1, \beta_2, \gamma_1, \gamma_2$ and $\omega_1=\hat{\omega}_1/\varepsilon$, $\omega_2=\hat{\omega}_2/\varepsilon$, $c_1=\hat{c}_1/\varepsilon$, $c_2=\hat{c}_2/\varepsilon$. Then for $\varepsilon\to\infty$, $I_1+Q_1\overset{d}{\to} \Tilde{I}_1$, $I_2+Q_2\overset{d}{\to} \Tilde{I}_2$ after an initial time layer of size $\mathcal{O}( \varepsilon)$, where $(\Tilde{I}_1,\Tilde{I}_2)$ constitutes a two-strain $SIS$-model with rates 
\begin{align*}
    \Tilde{\beta}_1 = (1-\zeta_1)\beta_1, &\quad \Tilde{\gamma}_1 = (1-\zeta_1)\gamma_1\\
    \Tilde{\beta}_2 = (1-\zeta_2)\beta_2, &\quad \Tilde{\gamma}_2 = (1-\zeta_2)\gamma_2.
\end{align*}
Denote $r_{l,k}(t)=P(I_1+Q_1=l, I_2+Q_2=k)$
\end{comment}

\begin{theorem}\label{rapidSwitchTheoSI}
Assume that $\hat\omega_i$, $\hat c_i\rightarrow\infty$ such that $\frac{\hat \omega_i}{\hat c_i + \hat\omega_i}$ converge to $\zeta_i\in(0,1)$.
Then, the 2-strain SIQS model converges in distribution to a 2-strain SIS model with transition rates
{    \begin{center}
        \begin{tabular}{||c | c | c||} 
         \hline
         Event & Type of transition & Rate \\ [0.5ex] 
         \hline\hline
         Infection with strain 1 &  $\tilde I_1 \to \tilde I_1+1$ & $\beta\,\zeta_1 S \tilde I_1/N$ \\ 
         \hline
         Infection with strain 2 &  $\tilde I_2 \to \tilde I_2+1$ & $\beta\,\zeta_2 S \tilde I_2/N$ \\ 
         \hline
         Recovery of an infected individual (strain 1) &  $\tilde I_1 \to \tilde I_1-1$ & $(\gamma_1(1-\zeta_1)+\hat\mu\zeta_1) \tilde I_1$ \\
         \hline
         Recovery of an infected individual (strain 2)  & $\tilde I_2 \to \tilde I_2-1$ & $(\gamma_2(1-\zeta_2)+\hat\mu\zeta_2) \tilde I_2$  \\
          \hline
        \end{tabular}
		\end{center} 
		}\par\medskip\noindent 
(where $S=N-\tilde I_1-\tilde I_2$) in the sense that  $I_i+Q_i \overset{d}{\to} \tilde I_i$.
\end{theorem}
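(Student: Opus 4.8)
The plan is to treat this as a two-timescale averaging problem for the finite-state Markov jump process underlying the SIQS model, following the framework for singularly perturbed Markov chains in \cite{Yin.2013}. Write the state as $(I_1,Q_1,I_2,Q_2)$ and, after the substitution $\hat\omega_i\mapsto\hat\omega_i/\eps$, $\hat c_i\mapsto\hat c_i/\eps$, decompose the generator as $\mathcal L^\eps=\eps^{-1}\mathcal L^{\mathrm{fast}}+\mathcal L^{\mathrm{slow}}$. Here $\mathcal L^{\mathrm{fast}}$ collects only the deactivation/reactivation transitions $I_i\leftrightarrow Q_i$ (rates $\hat\omega_i I_i$ and $\hat c_i Q_i$), while $\mathcal L^{\mathrm{slow}}$ collects infection, recovery and death (rates $O(1)$, with $\hat\mu$ unscaled). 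The first and decisive observation is that the sums $\tilde I_i=I_i+Q_i$ are conserved by $\mathcal L^{\mathrm{fast}}$; they are the slow variables, and the fast variables (say $Q_1,Q_2$) relax on the $\eps$-timescale.

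For frozen slow variables $(\tilde I_1,\tilde I_2)$ the fast dynamics factorises into two independent birth--death chains, $Q_i$ living on $\{0,\dots,\tilde I_i\}$ with up-rate $\hat\omega_i(\tilde I_i-Q_i)$ and down-rate $\hat c_i Q_i$. Each such chain is finite and irreducible, and detailed balance gives the explicit invariant law $\pi_i(q)=\binom{\tilde I_i}{q}\zeta_i^{\,q}(1-\zeta_i)^{\tilde I_i-q}$ with $\zeta_i=\hat\omega_i/(\hat\omega_i+\hat c_i)$, i.e.\ a $\mathrm{Binomial}(\tilde I_i,\zeta_i)$ distribution: in quasi-equilibrium each infected individual is independently quiescent with probability $\zeta_i$ and infectious with probability $1-\zeta_i$. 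Averaging $\mathcal L^{\mathrm{slow}}$ against $\pi_1\otimes\pi_2$ therefore amounts to replacing $I_i$ by its mean $(1-\zeta_i)\tilde I_i$ and $Q_i$ by $\zeta_i\tilde I_i$. This turns the infection transition $\beta S I_i/N$ into $\beta(1-\zeta_i)S\tilde I_i/N$ and the combined removal $\gamma_i I_i+\hat\mu Q_i$ into $(\gamma_i(1-\zeta_i)+\hat\mu\zeta_i)\tilde I_i$, with $S=N-\tilde I_1-\tilde I_2$, which is exactly the two-strain SIS generator in the statement.

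To make this rigorous I would establish weak convergence of the slow process $\tilde I^\eps=(I_1^\eps+Q_1^\eps,\,I_2^\eps+Q_2^\eps)$ on path space. Tightness is cheap: $\tilde I^\eps$ takes values in the finite set $\{0,\dots,N\}^2$ and, because the fast transitions leave $\tilde I_i$ unchanged, its jumps occur only at the $O(1)$ infection/recovery/death rates, so the jump intensities are bounded uniformly in $\eps$. For identification of the limit I would use the perturbed test function (corrector) method: for $\phi=\phi(\tilde I)$ seek $\phi^\eps=\phi+\eps\psi$ with $\mathcal L^{\mathrm{fast}}\psi=\bar{\mathcal L}\phi-\mathcal L^{\mathrm{slow}}\phi$, where $\bar{\mathcal L}$ is the averaged SIS generator above. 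The right-hand side has zero mean against $\pi_1\otimes\pi_2$ exactly by the averaging computation, so the Poisson equation is solvable (the fast chain being finite and irreducible) and $\psi$ is bounded on the compact state space; then $\mathcal L^\eps\phi^\eps=\bar{\mathcal L}\phi+O(\eps)$, and the martingale problem for $\bar{\mathcal L}$ characterises every limit point. Since $\tilde I_i$ is conserved by the fast flow, no initial-layer correction is needed for the sums, so convergence holds for all $t\ge0$ once the initial data converge.

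The main obstacle is the rigorous averaging step, and specifically the fact that the fast chain's state space $\{0,\dots,\tilde I_i\}$ itself depends on the slow variable, so the family of fast generators is slow-state-dependent rather than a single fixed block structure. This is handled by checking the hypotheses of the singular-perturbation theorems of \cite{Yin.2013} (uniform irreducibility and spectral-gap bounds over the finitely many blocks) or, equivalently, by verifying that the corrector $\psi$ and its increments are bounded uniformly over the finite state space; everything else, namely detailed balance, tightness, and the martingale characterisation, is routine.
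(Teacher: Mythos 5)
Your proposal is correct and follows essentially the same route as the paper's proof: the conserved sums $\tilde I_i=I_i+Q_i$ as slow variables, the detailed-balance computation showing the fast $I\leftrightarrow Q$ switching relaxes to a $\mathrm{Binomial}(\tilde I_i,\zeta_i)$ quasi-equilibrium, and averaging of the infection/removal rates against that measure to obtain the SIS transition rates. The only difference is in the convergence machinery — you phrase it as a perturbed-test-function/martingale-problem argument at the level of the generator, whereas the paper works with the master-equation ODE system for the sub-chain probabilities $s_i^{(l)}$ and invokes Fenichel theory — but this is a technical repackaging of the same two-timescale argument (both resting on \cite{Yin.2013}), not a different proof.
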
 
{\bf Proof: }
We start with a proof for the case of one-strain and then argue that it generalizes to the two-strain case (also see~\cite{usman:arxiv}.
We refine the notion of convergence for the proof.\\
\textbf{Claim:} Let $(I,Q)_{t\geq0}$ be a two-strain $SIQS$-process with rates $\beta, \gamma$ and $\omega=\hat{\omega}/\varepsilon$, $c=\hat{c}/\varepsilon$. Then, for $\varepsilon\to 0$, $I+Q\overset{d}{\to} \Tilde{I}$ after an initial time layer of size $\mathcal{O}( \varepsilon)$, where $(\Tilde{I})_{\geq0}$ constitutes a $SIS$-process with rates 
\begin{align*}
    \Tilde{\beta} = (1-\zeta)\beta, &\quad \Tilde{\gamma} = (1-\zeta)\gamma
\end{align*}
\textbf{Proof of the claim:}
This proof is based on the monograph by Yin \cite{Yin.2013}, which is concerned time scale separation for Markov processes. 
Denote 
$$r_l(t):=P(I+Q=l)=\sum_{i=0}^l s_i^{(l)}:=\sum_{i=0}^l P(I=i,Q=l-i).$$
With the understanding that $s_i^{(l)}=0$ outside the feasible index combinations $0\leq i \leq l \leq N$, the master equation can be rewritten in terms of the sub-chains to
\begin{align*}
    \varepsilon \frac{d}{dt}s_i^{(l)}
    = &-\left(\varepsilon\beta \frac{i (N-l)}{N} + \varepsilon\gamma i + \varepsilon\mu (l-i) + \hat{\omega}i + \hat{c}(l-i)  \right)s_i^{(l)}\\
    &+\varepsilon\beta \frac{(i-1) (N-l+1)}{N}s_{i-1}^{(l-1)} + \varepsilon\gamma (i+1) s_{i+1}^{(l+1)} + \varepsilon\mu (l-i+1)s_{i}^{(l+1)}\\
    &+\hat{\omega}(i+1)s_{i+1}^{(l)} + \hat{c}(l-i+1)s_{i-1}^{(l)}
\end{align*}
By the Fenichel Theory (E.g. chapter 5 in \cite{Muller.2013} and citations therein), for small $\varepsilon$, the solution for $s_i^{(l)}$ settles on the slow manifold after an initial time layer of size $\mathcal{O}(\varepsilon)$. The slow manifold is given by 
\begin{align*}
    0 \overset{!}{=} \lim_{\varepsilon\to 0} \frac{d}{dt} s_i^{(l)} = \big( -\hat{\omega}i + \hat{c}(l-i) \big)s_i^{(l)}  +\hat{\omega}(i+1)s_{i+1}^{(l)} + \hat{c}(l-i+1)s_{i-1}^{(l)}.
\end{align*}
The detailed balance condition dictates
\begin{align*}
    s_i^{(l)}=\frac{\hat{c}(l-i+1)}{\hat{\omega}i} s_{i-1}^{(l)}
    =\prod_{k=1}^{i} \frac{\hat{c}(l-k+1)}{\hat{\omega}k} s_{0}^{(l)}
    =\left( \frac{\hat{c}}{\hat{\omega}} \right)^{i} \frac{l!}{(l-i)!i!} s_{0}^{(l)}
    =\left( \frac{\hat{c}}{\hat{\omega}}\right)^{i} \binom{l}{i} s_{0}^{(l)}
\end{align*}
and by definition of $r_l$
\begin{align*}
    s_{0}^{(l)}=r_l \left[ \sum_{i=0}^l \left( \frac{\hat{c}}{\hat{\omega}}\right)^{i} \binom{l}{i} \right]^{-1},
\end{align*}
which leads to
\begin{align*}
    \sum_{i=0}^l i s_i^{(l)}
    &=r_l\frac{\sum_{i=0}^l i \left( \frac{\hat{c}}{\hat{\omega}}\right)^{i} \binom{l}{i}}{\sum_{i=0}^l \left( \frac{\hat{c}}{\hat{\omega}}\right)^{i} \binom{l}{i}}
    = r_l \frac{\sum_{i=0}^l i x^{i} \binom{l}{i}}{\sum_{i=0}^l x^{i} \binom{l}{i}}\Bigg|_{x=\frac{\hat{c}}{\hat{\omega}}}
    = r_l \frac{x \frac{d}{dx}(1+x)^l}{(1+x)^l}\Bigg|_{x=\frac{\hat{c}}{\hat{\omega}}}\\
    &= r_l l\frac{x(1+x)^{l-1}}{(1+x)^l}\Bigg|_{x=\frac{\hat{c}}{\hat{\omega}}}
    = r_l l \frac{\frac{\hat{c}}{\hat{\omega}}}{1+\frac{\hat{c}}{\hat{\omega}}}
    = r_l l (1-\zeta)
\end{align*}
and similarly $\sum_{i=0}^l (l-i) s_i^{(l)}= r_l l \zeta$. Now consider the ODE for the subchains $r_l$ (Note that the jumps between $I$ and $Q$ do not appear here as they don't leave $r_l$) and plug in the expression for $\sum_{i=0}^l i s_i^{(l)}$:
\begin{align*}
    \frac{d}{dt}r_l = &-\sum_{i=0}^l \left(\beta \frac{i (N-l)}{N} + \gamma i + \mu (l-i)  \right)s_i^{(l)}\\
    &+\sum_{i=0}^l \beta \frac{(i-1) (N-l+1)}{N}s_{i-1}^{(l-1)} + \sum_{i=0}^l \gamma (i+1) s_{i+1}^{(l+1)} + \sum_{i=0}^l \mu (l-i+1)s_{i}^{(l+1)}\\
    = &- \left(\beta \frac{ (N-l)}{N} + \gamma \right) \sum_{i=0}^l i s_i^{(l)} -\mu\sum_{i=0}^l  (l-i)s_i^{(l)} \\
    &+ \beta \frac{ (N-l+1)}{N}\sum_{i=0}^{l-1} i s_i^{(l-1)}
     + \gamma \sum_{i=0}^{l+1} i s_i^{(l+1)}
     + \mu \sum_{i=0}^{l+1} (l-i)s_i^{(l+1)} \\
    = &- \left(\beta \frac{ (N-l)}{N} + \gamma \right) (1-\zeta)r_l l -\mu \zeta r_l l\\
    &+ \beta \frac{ (N-l+1)}{N}  (1-\zeta)r_{l-1} (l-1)
     + \gamma (1-\zeta)r_{l+1} (l+1)
     + \mu  \zeta r_{l+1} (l+1)\\
     = &- \left(\Tilde{\beta} \frac{ (N-l)}{N} + \Tilde{\gamma} +\Tilde{\mu} \right) r_l l 
    + \Tilde{\beta} \frac{ (N-l+1)}{N}  r_{l-1} (l-1)
     + (\Tilde{\gamma}+ \Tilde{\mu}) r_{l+1} (l+1)
\end{align*}
That is, $r_l$ satisfies for small $\varepsilon$ approximately the master equation for a $SIS$-process with parameters $\Tilde{\beta}=(1-\zeta)\beta, \Tilde{\gamma}=(1-\zeta)\gamma+\zeta\mu$, and the proof is concluded.\par\medskip

\textbf{Extension to two strains:} From this proof we can infer the generalization to the two-strain model: Since there is no direct interaction between the sub-chains $\{ I_1+Q_1=l\}$ and $\{ I_2+Q_2=k\}$, their quasi-equilibrium-distributions $P_{r^{(1)}_l}(I_1=i,Q_1=l-i)$ is the same as if the other strain was absent. Again, since there aren't any interaction terms, plugging the terms into the master equation of $r_l$, just as we've done in the one-strain model, yields the result.
\par\qed\par\medskip

\subsection{Kogan's reduction method}
\label{kogan}
We first describe the steps required to obtain the reduced Fokker-Planck equation, 
and subsequently we go through the steps, one by one. 

\subsubsection{Algorithmic description of Kogan's method}
In this description we closely follow \cite{Kogan.2014}, \cite[subsection 4.4]{Majda2001}, and \cite{Heinrich.2018}. 
To start with, we assume that the Fokker-Planck equation can be written as $\rho_t = L_0\rho+\eps L_1\rho+\eps^2L_2\rho+$higher order term, where the operators $L_i$ have the following form
    \begin{equation}
            \begin{aligned}\label{formL}
            L^0 &= \sum_{i,j=1}^{3} \Psi_{ij}(X) \partial_{Y_i}Y_j + \sum_{i\geq j = 1}^3 \Gamma_{ij}(X) \partial_{Y_i Y_j}^2\\
            L^1 &= \partial_X \left[ \sum_{i=1}^{3} \Xi_i(X) Y_i \right] + \sum_{i=1}^3 \partial_{Y_i}(...)
    \end{aligned}
    \end{equation}
and that there exists a one-dimensional line CL parameterized by $X$ where the noise vanishes, i.e. $CL = \{ (X,Y):Y=0 \}$.
The method consists of the following steps. It is instructive to omit some details here and to provide just a sketch, as the procedure will be exercised in full detail below. 
\begin{enumerate}
    \item We use the form of the Fokker-Planck equation 
        $\partial_t \rho=(L_0+\varepsilon L_1 + \varepsilon^2 L_2 )\rho$ 
    and change to the slow time scale $\tau=\varepsilon^2 t$. Then, use the Ansatz 
    \begin{align*}
        \rho(X,Y,t)&=\rho_0(X,Y,\tau)+\varepsilon \rho_1(X,Y,\tau) + \varepsilon^2\rho_2(X,Y,\tau).
    \end{align*}
    Comparing orders of $\varepsilon$ results in the 3 equations 
    \begin{align*}
     L_0 \rho_0 &= 0,\\
     L_0 \rho_1 &= - L_1 \rho_0,\\
     L_0 \rho_2 &= \partial_\tau \rho_0 - L_1 \rho_1 - L_2 \rho_0.
    \end{align*}
        \item The first equation yields $\rho_0(X,Y,\tau) = f(X,\tau) \mathcal{N}(X) \mathrm{e}^{-\frac{1}{2}\sum_{i,j=1}^d C_{i,j}Y_iY_j}$, implying that near the CL, the noise follows a Gaussian distribution with $X$-dependend mean and variance. $\mathcal{N}(X)$ is the corresponding normalizing constant. The slow evolution of X is encoded in the $Y$-independent $f$. Integration of the third equation gives the equation
        \begin{align*}
           \int L_0 \rho_2 dY + \int L_1 \rho_1 dY +\int L_2 \rho_0 dY  
             = \int\partial_\tau \rho_0 dY =  \partial_{\tau} f,
        \end{align*}
        where integration is carried out over $\R^d$  with respect to $Y_1,...,Y_d$.
        \item         \begin{enumerate}
            \item By assumption, $L_0$ contains only full derivatives w.r.t. $Y$, so $\int L_0 \rho_2 dY=0$
            \item $\int L_2 \rho_0 dY$ boils down to calculating second moments of the Gaussian from $\rho_0$.
        \end{enumerate}
    \item This is the most delicate step. $\int L_1 \rho_1 dY$ involves the calculation of $\rho_1$ at first sight, which in higher dimensions is a difficult task. Kogan proposes to find a function $F(X,Y)=A_1Y_1+...+A_dY_d$ s.t.
    \begin{align*}
        \int L_1 \rho_1 dY = \partial_X \int F(X,Y) L_1 \rho_0 dY.
    \end{align*}
    With $F$, the integral can be expressed using the adjoint $L_1^+$ w.r.t. $Y$ and $\rho_0$:
    \begin{align*}
        \int L_1 \rho_1 dY = -\partial_X \int \left( L_1^+ F(X,Y) \right) \rho_0 dY.
    \end{align*}
	Thereto we observe that in relevant cases $L^{(1)} = \partial_X\sum_i \sum_i Y_i + \mbox{total derivatives in }Y_i$.
	If we are able to solve $L_0^+F(X,Y) = \sum_i \sum_i Y_i$, we can write, using $L_0 \rho_1 = - L_1 \rho_0$
	    \begin{align*}
        \int L_1 \rho_1 dY = \partial_X \int L_0^b+(F(X,Y)) \rho_1 dY = -\partial_X \int F(X,Y) L_0\rho_1 dY 
		= \partial_X \int F(X,Y) L_1 \rho_0 dY.
    \end{align*}
    \item Adding up the results from steps 3 and 4 gives the desired approximate Fokker-Planck equation. 
\end{enumerate}
The computations are mostly straightforward but rather involving, and have been checked or performed using 
the computer algebra package maxima~\cite{Maxima.2023}.

\paragraph{Step 1: Operators $L_0$, $L_1$ and $L_2$}
We transform the coordinate system for the  Fokker-Planck equation ~\eqref{eq: FPorig SIIQQS} such that 
one coordinate $X$ is directed along the line of stationary points, and the other three 
coordinates ($Y_1$, $Y_2$, $Y_3$) span the space perpendicular to the line of stationary points. 
The small parameter is again $\varepsilon=\frac{1}{\sqrt{N}}$. We express the system in the slow and fast variables 
(recall $\vartheta_i = = \hat\omega_i/\hat c_i$)
\begin{align*}
    X &:= \frac{1}{2r}\left((1+\vartheta_1) x - (1+\vartheta_2) y +r\right) \\
    Y_1 &:= \frac{1}{2 \varepsilon}\left((1+\vartheta_1) x + (1+\vartheta_2) y - r\right) \\
    Y_2 &:= \frac{1}{2 \varepsilon}\left(u+v-\vartheta_1 x -\vartheta_2 y\right) \\
    Y_3 &:= \frac{1}{2 \varepsilon}\left(u-v-\vartheta_1 x +\vartheta_2 y\right).
\end{align*}
 Vice versa it is
\begin{align*}
    x &= \frac{r X+\varepsilon Y_1}{1+\vartheta_1} \\
    u &= \vartheta_1 \frac{r X+\varepsilon Y_1}{1+\vartheta_1} +\varepsilon \left(Y_2+Y_3\right) \\
    y &= \frac{r(1-X)+\varepsilon Y_1}{1+\vartheta_2}  \\
    v &= \vartheta_2 \frac{r(1-X)+\varepsilon Y_1}{1+\vartheta_2} + \varepsilon \left(Y_2-Y_3\right)
\end{align*}
As the system is 4-dimensional, we have three fast variables.
The slow variable $X$ measures the progress along the CL, whereas the fast variables $Y_1,Y_2,Y_3$ represent shot noise. The goal is to reduce the 4-dimensional system to a 1-dimensional Fokker-Planck equation for $X$, which will then be used to infer information such as fixation probabilities and mean time to fixation. It is convenient to induce scaled variables $\omega_i$ and $c_i$, 
$$
\omega_i = \hat\omega_i/(\beta r),
\qquad 
c_i = \hat c_i/(\beta r).
$$
In terms of the transformed variables we get in the slow timescale $\Tilde{\tau} = \beta r \varepsilon^2 t$ 
\begin{align*}
    \partial_{\Tilde{\tau}} &\rho (X,Y_1,Y_2,Y_3,\Tilde{\tau}) =\\
    &\left[\frac{\varepsilon}{r}\partial_{X}+\partial_{Y_1}-\zeta_1\partial_{Y_2}-\zeta_1\partial_{Y_3}\right] 
    \left[ (Y_1+Y_2)( X+\varepsilon r^{-1} Y_1)\rho\right]\\
    &\left[-\frac{\varepsilon}{r}\partial_{X}+\partial_{Y_1}-\zeta_2\partial_{Y_2}+\zeta_2\partial_{Y_3}\right]
    \left[(Y_1+Y_2)((1-X)+\varepsilon r^{-1}Y_1)\rho\right]\\
    &-\left[\frac{\varepsilon}{r}\partial_{X}+\partial_{Y_1}-\partial_{Y_2}-\partial_{Y_3}\right]
    \left[\frac{\omega_1+c_1}{2}(Y_2+Y_3)\rho\right]\\
    &-\left[-\frac{\varepsilon}{r}\partial_{X}+\partial_{Y_1}-\partial_{Y_2}+\partial_{Y_3}\right]
    \left[\frac{\omega_2+c_2}{2}(Y_2-Y_3)\rho\right]\\
    &+\frac{1}{4}\left[\frac{\varepsilon}{r}\partial_{X}+\partial_{Y_1}-\zeta_1\partial_{Y_2}-\zeta_1\partial_{Y_3}\right]^2
    \left[(1+\vartheta_1)((1-r)-\varepsilon(Y_1+Y_2))( X+\varepsilon r^{-1}Y_1)\rho\right]\\
    &+\frac{1}{4}\left[-\frac{\varepsilon}{r}\partial_{X}+\partial_{Y_1}-\zeta_2\partial_{Y_2}+\zeta_2\partial_{Y_3}\right]^2
    \left[(1+\vartheta_2)((1-r)-\varepsilon(Y_1+Y_2))((1-X)+\varepsilon r^{-1}Y_1)\rho\right]\\
    &+\frac{1}{4}\left[\frac{\varepsilon}{r}\partial_{X}+\partial_{Y_1}-\partial_{Y_2}-\partial_{Y_3}\right]^2
    \left[(1+\vartheta_1) \omega_1(r X+\varepsilon Y_1)+\varepsilon(\frac{c_1+\omega_1}{2})(Y_2+Y_3))\rho\right]\\
    &+\frac{1}{4}\left[-\frac{\varepsilon}{r}\partial_{X}+\partial_{Y_1}-\partial_{Y_2}+\partial_{Y_3}\right]^2
    \left[(1+\vartheta_2)\omega_2(r(1-X)+\varepsilon Y_1)+\varepsilon(\frac{c_2+\omega_2}{2})(Y_2-Y_3))\rho\right]
\end{align*}

Next, expand w.r.t. $\varepsilon$ s.t.
\begin{align}\label{expansionrho}
    \partial_{\Tilde{\tau}} \rho = \varepsilon^0 L_0 \rho + \varepsilon^1 L_1 \rho + \varepsilon^2 L_2 \rho.
\end{align}
and obtain (note that full second derivatives in $Y_i$ in $L_1$ and $L_2$ 
will later be integrated away and therefore are not necessary to state)
\begin{equation}\label{eq: L0 SIQS}
    \begin{aligned}
    L_0\rho=&\partial_{Y_1} \left[ \bigg((Y_1+Y_2) - \frac{\omega_1+c_1}{2}(Y_2+Y_3) - \frac{\omega_2+c_2}{2}(Y_2-Y_3)\bigg)  \rho\right]\\
            +&\partial_{Y_2} \left[\bigg(- (\zeta_1 X+\zeta_2 (1-X))(Y_1+Y_2) + \frac{\omega_1+c_1}{2}(Y_2+Y_3) + \frac{\omega_2+c_2}{2}(Y_2-Y_3)\bigg) \rho \right]\\
            +&\partial_{Y_3} \left[\bigg( - (\zeta_1 X+\zeta_2 (1-X))(Y_1+Y_2) + \frac{\omega_1+c_1}{2}(Y_2+Y_3) - \frac{\omega_2+c_2}{2}(Y_2-Y_3)\bigg) \rho \right]\\
    +&\frac{1}{4}\left[\partial_{Y_1}-\zeta_1\partial_{Y_2}-\zeta_1\partial_{Y_3}\right]^2
    \left[(1+\vartheta_1)(1-r) X\rho\right]\\
    +&\frac{1}{4}\left[\partial_{Y_1}-\zeta_2\partial_{Y_2}+\zeta_2\partial_{Y_3}\right]^2
    \left[(1+\vartheta_2)(1-r)(1-X)\rho\right]\\
    +&\frac{1}{4}\left[\partial_{Y_1}-\partial_{Y_2}-\partial_{Y_3}\right]^2
    \left[(1+\vartheta_1) \omega_1 r X\rho\right]\\
    +&\frac{1}{4}\left[\partial_{Y_1}-\partial_{Y_2}+\partial_{Y_3}\right]^2
    \left[(1+\vartheta_2)\omega_2 r(1-X)\rho\right]
    \end{aligned}       
\end{equation}
and
\begin{equation}\label{eq: L1 SIQS}
    \begin{aligned}
        L_1 = -&\frac{1}{r} \partial_X\left[\left( (1-2X)(Y_1+Y_2)+\frac{\omega_1}{2\zeta_1}(Y_2+Y_3)-\frac{\omega_2}{2\zeta_2}(Y_2-Y_3) \right)\rho \right]\\
        +&\left[ 2\partial_{Y_1}-(\zeta_1+\zeta_2)\partial_{Y_2}+(\zeta_2-\zeta_1)\partial_{Y_3} \right] \left[ r^{-1} (Y_1+Y_2)Y_1 \rho \right]\\
        +& \frac{1}{2} \partial_{XY_1}^2 \left[\bigg( (1+\vartheta_1)(\gamma/(\beta r)+\omega_1)X-(1+\vartheta_2)(\gamma/(\beta r)+\omega_2)(1-X)\bigg) \rho \right]\\
        -& \frac{1}{2}\left[ \partial_{X}(\partial_{Y_2}+\partial_{Y_3}) \right] \left[ (1+\vartheta_1)(\omega_1+\zeta_1\gamma/(\beta r))X \rho \right]\\
        +& \frac{1}{2}\left[ \partial_{X}(\partial_{Y_2}-\partial_{Y_3}) \right] \left[ (1+\vartheta_2)(\omega_2+\zeta_2\gamma/(\beta r))(1-X) \rho \right]\\
        +& \partial_{Y_1 Y_1}(...) + \partial_{Y_1 Y_2}(...) +  \partial_{Y_1 Y_3}(...) +  \partial_{Y_2 Y_2}(...)  + \partial_{Y_2 Y_3}(...)  \partial_{Y_3 Y_3}(...) 
    \end{aligned}
\end{equation}
and (recall that $r=1-1/R_0=1-\gamma/\beta$) 
\begin{equation}\label{eq: L2 SIQS}
    \begin{aligned}
        L_2 = &\partial_{XX}^2 \frac{1}{4r} \left[ (1+\vartheta_1)(\omega_1+\gamma/(\beta r))X +(1+\vartheta_2)(\omega_2+\gamma/(\beta r))(1-X)  \right]\\
        &+ \partial_{Y_1}(...) + \partial_{Y_2}(...) +  \partial_{Y_3}(...)
    \end{aligned}
\end{equation}

\paragraph{Step 2: Equation for $f(X,\tau)$}
Recall that 
\begin{align}
    \label{1equation} L_0 \rho_0 &= 0\\
    \label{2equation} L_0 \rho_1 &= - L_1 \rho_0\\
    \label{3equation} L_0 \rho_2 &= \partial_{\Tilde{\tau}} \rho_0 - L_1 \rho_1 - L_2 \rho_0
\end{align}
From (\ref{1equation}) we infer $\rho_0 = f(X,\Tilde{\tau}) \mathcal{N}(X) \exp \left( {\sum_{i,j=1}^3 c_{ij}(X) Y_i Y_j} \right)$, that is, for fixed $X$ the fast variables follow a 3-dimensional Gaussian distribution with $X$-dependend mean and variance. We will disregard here the explicit calculation of the coefficients $c_{ij}(X)$. As we are interested in the evolution of the slow variable, we aim at $f(X,\Tilde{\tau})$. Plugging in $\rho_0$ with yet unknown $f$ and integrating, (\ref{3equation}) gives
\begin{align}
    \partial_{\Tilde{\tau}} f  
    = \int L_0 \rho_2 dY - \int L_1 \rho_1 dY - \int L_2 \rho_0 dY
\end{align}
where $\int ... dY$ is shorthand for $\int_{\R^3} ... d(Y_1, Y_2, Y_3)$. Solving these three integral will leave us with the 1-dimensional Fokker-Planck equation for $X$.

\paragraph{Step 3: Handling the integral over $L_0$ and $L_2$}
As $L_0$ contains only full derivatives w.r.t. the $Y_i$'s, the first integral from above equation is zero:
\begin{align}\label{eq: int L0rho2 SIQS}
    \int L_0 \rho_2 dY = 0.
\end{align}
For the same reason, in the third integral we only need the derivatives w.r.t. $X$ from $L_2$. Since they don't contain fast variables (\ref{eq: L2 SIQS}), the integral's solution can be read off to be:
\begin{align}\label{eq: int L2rho0 SIQS}
    \int L_2 \rho_0 dY = 
       \partial_{XX}^2 \frac{1}{4r} \left[ (1+\vartheta_1)(\omega_1+\gamma/(\beta r))X +(1+\vartheta_2)(\omega_2+\gamma/(\beta r))(1-X)  \right] f(X, \Tilde{\tau}).
\end{align}

\paragraph{Step 3: Handling the integral over $L_1$}
{\it Part 1:} The tricky step is solving the integral $\int L_1 \rho_1 dY$, for which one would need to solve (\ref{2equation}) for $\rho_1$ - an intricate problem in our four-dimensional system. Kogan circumvents this direct computation in determining a function 
$F(X,Y)=A_1(X)Y_1+A_2(X)Y_2+A_3(X)Y_3$ such that 
$$\int L_1 \rho_1 dY = \partial_X \int F(X,Y) L_1 \rho_0 dY.$$
As we will see, $ L_1^+ F$ is a linear combination of $\{ Y_{i}Y_j :i,j=1,...,3\}$.\par\medskip 

The central task is to explicitly determine this function $F$ s.t. $L_0^+ F = \Xi_1 Y_1 + \Xi_2 Y_2 +\Xi_3 Y_3$. 
We test the form $F(X,Y)=A_1(X)Y_1+A_2(X)Y_2+A_3(X)Y_3$. Then, $L_0^+ F = \Xi_1 Y_1 + \Xi_2 Y_2 +\Xi_3 Y_3$ is equivalent to solving the linear problem $\Psi^TA=-\Xi$, 
where $\Psi=(\Psi_{ij})_{i,j=1,...,3}$ and $\Xi=(\Xi_1,...,\Xi_3)^T$ from (\ref{formL}). The calculation can be done via Maxima. We find
\begin{align*}
    A_1 & = A_2 = \frac{1}{r}\,\, \frac{1-2X+\zeta_1 X -\zeta_2 (1-X)}{1-\zeta_1X-\zeta_2(1-X)} \quad \text{and} \quad
    A_3 = \frac{1}{r}.
\end{align*}
Summary: With 
$$ F(X,Y) = 
 \frac{1}{r}\,\, \frac{1-2X+\zeta_1 X -\zeta_2 (1-X)}{1-\zeta_1X-\zeta_2(1-X)}\,\,(Y_1+Y_2) + \frac 1 r \, Y_3$$
 we find $L_0^+F(X,Y)= \sum_{i=1}^3 \Xi_i Y_i$ and thus 
\begin{align*}
    \int (L_1^+ F) \rho_0 dY = \int F (L_1 \rho_0 )dY
\end{align*}

{\it Part 2:} Next, we determine $ \int (L_1^+ F) \rho_0 dY $ in integrating the r.h.s. of 
\begin{align*}
    \int (L_1^+ F) \rho_0 dY = \int F (L_1 \rho_0 )dY.
\end{align*}
The integral on the r.h.s. involves four types of terms: $F\partial_{Y_i}(...)$, $F\partial_X(...)$, $F\partial_{XY_i}(...)$ and $F\partial_{Y_iY_j}(...)$. These can be handled as follows. For suitable functions $g(X,Y),h(X,Y)$, integration by parts for the partial derivatives $\partial_{Y_i}$
\begin{align*}
    \int h(X,Y) \left[ \partial_{Y_i}g(X,Y)\rho_0 \right] dY 
    &= - \int \left[\partial_{Y_i} h(X,Y) \right] g(X,Y) \rho_0 dY,\\\text{and}
        \int h(X,Y) \left[ \partial_{Y_iY_j}^2g(X,Y)\rho_0 \right] dY 
    &=  \int \left[\partial_{Y_iY_j}^2 h(X,Y) \right] g(X,Y) \rho_0 dY,
\end{align*}
the product rule for $\partial_X$ and a combination of both for $\partial_{X Y_1}^2$
\begin{align*}
    \int h(X,Y) \left[ \partial_{XY_i}^2g(X,Y)\rho_0 \right] dY 
    &= - \int \left[\partial_{Y_i} h(X,Y) \right] \partial_X g(X,Y) \rho_0 dY\\
    &= -\partial_X \int \left[\partial_{Y_i} h(X,Y) \right]  g(X,Y) \rho_0 dY \\
    &\quad+ \int \left[\partial_{XY_i}^2 h(X,Y) \right] g(X,Y) \rho_0 dY
\end{align*}
Using the explicit form of $L_1$ (\ref{eq: L1 SIQS}) we get
\begin{equation}
    \begin{aligned}
        \int (&L_1^+ F) \rho_0 dY \\
        = \frac{1}{r}&\int (\partial_XF) \left( (1-2X)(Y_1+Y_2)+\frac{\omega_1}{2\zeta_1}(Y_2+Y_3)-\frac{\omega_2}{2\zeta_2}(Y_2-Y_3) \right)\rho_0dY\\
        -\frac{1}{r}\partial_X&\int F \left( (1-2X)(Y_1+Y_2)+\frac{\omega_1}{2\zeta_1}(Y_2+Y_3)-\frac{\omega_2}{2\zeta_2}(Y_2-Y_3) \right)\rho_0dY\\
         - \frac{1}{r}&\int \left( 2A_1-(\zeta_1+\zeta_2)A_2+(\zeta_2-\zeta_1)A_3 \right) (Y_1+Y_2)Y_1 \rho  dY\\
         + \frac{1}{2}&\int (\partial_X A_1)\left( (1+\vartheta_1)(\gamma/(\beta r)+\omega_1)X-(1+\vartheta_2)(\gamma/(\beta r)+\omega_2)(1-X)  \right)\rho_0dY\\
         - \frac{1}{2}\partial_X&\int A_1\left( (1+\vartheta_1)(\gamma/(\beta r)+\omega_1)X-(1+\vartheta_2)(\gamma/(\beta r)+\omega_2)(1-X)  \right)\rho_0dY\\
         + \frac{1}{2} &\int -\partial_X(A_2+A_3) (1+\vartheta_1)(\omega_1+\zeta_1\gamma/(\beta r))X \rho + \partial_X(A_2-A_3) (1+\vartheta_2)(\omega_2+\zeta_2\gamma/(\beta r))(1-X) \rho dY\\
         -\frac{1}{2}\partial_X &\int -(A_2+A_3) (1+\vartheta_1)(\omega_1+\zeta_1\gamma/(\beta r))X \rho + (A_2-A_3) (1+\vartheta_2)(\omega_2+\zeta_2\gamma/(\beta r))(1-X) \rho dY
    \end{aligned}
\end{equation}
this can be simplified a little using $A_1=A_2$ and $\partial_X A_3 =0$ to
\begin{equation}\label{eq: L1+F vorletzte gleichung SIQS}
    \begin{aligned}
        \int (&L_1^+ F) \rho_0 dY \\
        = \frac{1}{r}&\int (\partial_XA_1) (Y_1+Y_2) \left( (1-2X)(Y_1+Y_2)+\frac{\omega_1}{2\zeta_1}(Y_2+Y_3)-\frac{\omega_2}{2\zeta_2}(Y_2-Y_3) \right)\rho_0dY\\
         -  \frac{1}{r}&\int \left( 2A_1-(\zeta_1+\zeta_2)A_2+(\zeta_2-\zeta_1)A_3 \right) (Y_1+Y_2)Y_1 \rho_0 dY\\
         + \frac{\gamma}{2\beta r}&\int (\partial_X A_1)\left( (1+\vartheta_1)(1-\zeta_1) X-(1+\vartheta_2)(1-\zeta_2)(1-X)  \right)\rho_0dY\\
         %%%%%%
        -\frac{1}{r}\partial_X&\int F \left( (1-2X)(Y_1+Y_2)+\frac{\omega_1}{2\zeta_1}(Y_2+Y_3)-\frac{\omega_2}{2\zeta_2}(Y_2-Y_3) \right)\rho_0dY\\
         - \frac{\gamma}{2\beta r}\partial_X&\int A_1\left( (1+\vartheta_1)(1-\zeta_1) X-(1+\vartheta_2)(1-\zeta_2)(1-X)  \right)\rho_0dY\\
         +\frac{1}{2}\partial_X &\int A_3 \left( (1+\vartheta_1)(\omega_1+\zeta_1\gamma/(\beta r))X - (1+\vartheta_2)(\omega_2+\zeta_2\gamma/(\beta r)(1-X) \right) \rho_0 dY\\
    \end{aligned}
\end{equation}
Now, note that all six integrands are either of zeroth or second degree w.r.t. the $Y_i$. For the zeroth degree terms of course there is nothing to integrate. We are left with integrals of the form $\int (...)Y_{i}Y_{j}\rho_0dY$. Replicating Proposition A.3 from Heinrich \cite{Heinrich.2018}, this can be reduced to a linear algebra problem. 
\begin{lemma}
    With the notation from (\ref{formL}) define
    \begin{align*}
      M(X) :=  -\begin{pmatrix}
            2\Psi_{1,1} & \Psi_{2,1} & \Psi_{3,1} & 0 & 0 & 0 \\
            2\Psi_{1,2} & (\Psi_{1,1}+\Psi_{2,2}) & \Psi_{3,2} & 2\Psi_{2,1} & \Psi_{3,1} & 0 \\
            2\Psi_{1,3} & \Psi_{2,3} & (\Psi_{1,1}+\Psi_{3,3}) & 0 & \Psi_{2,1} & 2\Psi_{3,1} \\
            0 & \Psi_{1,2} & 0 & 2\Psi_{2,2} & \Psi_{3,2} & 0 \\
            0 & \Psi_{1,3} & \Psi_{1,2} & 2\Psi_{2,3} & (\Psi_{2,2}+\Psi_{3,3}) & 2\Psi_{3,2} \\
            0 & 0 & \Psi_{1,3} & 0 & \Psi_{2,3} & 2\Psi_{3,3}
            \end{pmatrix}
\end{align*}
and
\begin{align*}
    \Gamma(X) = \begin{pmatrix}
        2 \Gamma_{1,1} & \Gamma_{1,2} & \Gamma_{1,3} & 2\Gamma_{2,2} &\Gamma_{2,3} & 2\Gamma_{3,3}
    \end{pmatrix}^T.
\end{align*}
Assume that $M$ is invertible. Consider the space $\mathcal{H}_2=\{ \sum_{i,j=1}^3 c_{ij}(X) Y_iY_j  \}$ of all homogeneous polynomials of degree 2 in $Y_1,Y_2,Y_3$ and use $(Y_1^2,Y_1Y_2, Y_1Y_3,Y_2^2,Y_2Y_3,Y_3^2)$ as a basis. Then, for any $h\in\mathcal{H}_2$, 
\begin{align}\label{eq: Int L1rho1}
    \int h(X,Y) \rho_0(X,Y,\Tilde{\tau}) dY = -f(X,\Tilde{\tau}) \langle \Gamma (X) ,  z \rangle,
\end{align}
where $z$ is a solution to $Mz=\Vec{h}$ and $\Vec{h}$ is the vector-representation of
$h$ w.r.t. to the basis. 
\end{lemma}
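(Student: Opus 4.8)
The plan is to extract the second moments of the Gaussian $\rho_0$ without ever computing its covariance coefficients $c_{ij}(X)$ explicitly, following the strategy of Heinrich~\cite{Heinrich.2018}. The one fact I would exploit is that $\rho_0$ lies in the kernel of $L_0$, cf.~\eqref{1equation}. The starting point is the adjoint identity: for any polynomial $g$ in the fast variables,
\begin{align*}
	\int (L_0^+ g)\,\rho_0\,dY = \int g\,(L_0\rho_0)\,dY = 0,
\end{align*}
where $L_0^+$ denotes the formal (backward) adjoint of $L_0$ and all boundary terms arising from integration by parts vanish because $\rho_0$ and its derivatives decay like a Gaussian in $Y$. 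Reading off the structure~\eqref{formL} of $L_0$, the adjoint acts as $L_0^+ g = -\sum_{i,j}\Psi_{ij}Y_j\partial_{Y_i}g + \sum_{i\geq j}\Gamma_{ij}\partial^2_{Y_iY_j}g$.

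First I would apply this identity with $g$ running through the six basis monomials $b_1=Y_1^2,\dots,b_6=Y_3^2$ of $\mathcal{H}_2$. A direct computation shows that the drift part of $L_0^+$ maps $\mathcal{H}_2$ into itself, whereas the diffusion part produces a constant; e.g.\ $L_0^+(Y_1^2)=-2\sum_j\Psi_{1j}Y_1Y_j+2\Gamma_{11}$ and $L_0^+(Y_1Y_2)=-\sum_j(\Psi_{1j}Y_jY_2+\Psi_{2j}Y_jY_1)+\Gamma_{12}$. Collecting the degree-two contributions in the chosen basis reproduces precisely the matrix $M(X)$ (its $a$-th column being the coordinates of the degree-two part of $L_0^+b_a$), while the constant contributions assemble into the vector $\Gamma(X)$; this is exactly where the specific entries of $M$ and $\Gamma$ in the statement come from. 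Writing $m_a=\int b_a\,\rho_0\,dY$ for the unnormalised second moments and using $\int\rho_0\,dY=f(X,\tilde\tau)$, the six vanishing integrals become the single linear system $M^{T}\vec m=-f\,\Gamma$.

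Finally, since $M$ is assumed invertible, $\vec m=-f\,(M^{T})^{-1}\Gamma$. For a general $h=\sum_a h_a b_a\in\mathcal{H}_2$ with coordinate vector $\vec h$, linearity gives $\int h\,\rho_0\,dY=\langle\vec h,\vec m\rangle=-f\,\langle\vec h,(M^{-1})^{T}\Gamma\rangle=-f\,\langle M^{-1}\vec h,\Gamma\rangle$; setting $z=M^{-1}\vec h$, i.e.\ the solution of $Mz=\vec h$, yields $\int h\,\rho_0\,dY=-f\,\langle\Gamma,z\rangle$, which is~\eqref{eq: Int L1rho1}. I expect the main obstacle to be purely organisational: verifying that the drift part of $L_0^+$, expressed on the monomial basis, has matrix exactly $M$ with the correct transpose and sign conventions, and that the constant parts line up with $\Gamma$. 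This is a finite but error-prone bookkeeping step that I would cross-check with the computer algebra package already used elsewhere in the paper.
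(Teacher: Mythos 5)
Your proposal is correct and rests on exactly the same mechanism as the paper's proof: the identity $\int (L_0^+ g)\,\rho_0\,dY = \int g\,(L_0\rho_0)\,dY = 0$ together with the observation that $L_0^+$ maps $\mathcal{H}_2$ into $\mathcal{H}_2\oplus\mathcal{H}_0$ with matrix $M$ on the quadratic part and constant part $\langle\Gamma,\cdot\rangle$. The only (immaterial) difference is organizational: you solve the transposed system $M^T\vec m=-f\,\Gamma$ for all six second moments and then pair with $\vec h$, whereas the paper applies $L_0^+$ directly to the element with coordinate vector $z=M^{-1}\vec h$; the two are related by the transpose identity you already note, so the arguments coincide.
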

\begin{proof}
    Due to the structure of $L_0$ (\ref{formL}), for any $h\in\mathcal{H}_2$ we have $L_0^+h = h_0 + c_0 \in \mathcal{H}_2 \oplus \mathcal{H}_0 $. More precisely, $h_0=M(X)\Vec{h}$ as well as $c_0=\langle \Gamma (X) ,  \Vec{h} \rangle$. If $z$ is a solution to $Mz=\Vec{h}$, then
\begin{align*}
    \int h(X,Y) \rho_0(X,Y,\Tilde{\tau}) dY 
    &= \int Mz \rho_0(X,Y,\Tilde{\tau}) dY \\
    &= \int \big( L_0^+z\big)\rho_0(X,Y,\Tilde{\tau}) dY - \int \big( \langle \Gamma (X) ,  z \rangle \big)\rho_0(X,Y,\Tilde{\tau}) dY \\
    &= 0-f(X,\Tilde{\tau}) \langle \Gamma (X) ,  z \rangle.
\end{align*}
\end{proof}
For our case, numerical analysis reveals that, for our parameter ranges, $M$ is indeed invertible and thus such $z$ can be found.
Now it is a matter of collecting the coefficients of $Y_1^2,Y_1Y_2, Y_1Y_3,Y_2^2,Y_2Y_3,Y_3^2$ of the integrands in (\ref{eq: L1+F vorletzte gleichung SIQS}) and solving the resulting linear problem. This can be done by Maxima. We do not present the (lengthy) result, 
but refer to the maxima script (subsection~\ref{maxCode}).

\paragraph{Step 4: The explicit equation for $f(X,\tau)$}
Combining (\ref{eq: int L0rho2 SIQS}), (\ref{eq: int L2rho0 SIQS}) and the solution to (\ref{eq: Int L1rho1}) for our case, obtained by Maxima~\cite{Maxima.2023} (see SI), finally gives
\begin{equation}
    \begin{aligned}
        \partial_{\Tilde{\tau}} &f(X,\Tilde{\tau}) =\\ 
        \partial_X &\left[ \frac{\beta r^2}{\gamma} (1-\zeta_1)(1-\zeta_2)\frac{\zeta_1\zeta_2 (\omega_1-\omega_2) \alpha(X)+\omega_1\omega_2(\zeta_1-\zeta_2) }{ \alpha(X)^2 \left[\omega_1+\zeta_1 \alpha(X) \right]\left[\omega_2+\zeta_2 \alpha(X) \right]}  X(1-X) f(\Tilde{\tau},X) \right]\\
        +\partial_{XX} &\left[ \frac{\beta r^2}{\gamma}  \frac{(1-\zeta_1)(1-\zeta_2)}{\alpha(X)}X(1-X) f(\Tilde{\tau}, X) \right]
    \end{aligned}
\end{equation}
where $\alpha(X) := 1-\zeta_2+(\zeta_2-\zeta_1)X$. After a finale rescale of time $\tau = \gamma/(\beta r^2) \Tilde{\tau}$. 
Recall that $\Tilde{\tau} = \beta r \varepsilon^2 t$, such that $\tau= = \gamma \varepsilon^2 t/r$ the one-dimensional Fokker-Planck is independent of $r$ and reads
\begin{equation}
    \begin{aligned}\label{FP1dimAppendix}
     \partial_{\tau} f(X,\tau) = -\partial_X v(X) f(X,\tau) + \frac{1}{2} \partial_{X}^2 D(X) f(X,\tau)
    \end{aligned}
\end{equation} 
with effective drift and diffusion terms
\begin{align}\label{Drift1dimSI}
    v(X) &= -(1-\zeta_1)(1-\zeta_2)\frac{\zeta_1\zeta_2 (\omega_2-\omega_1) \alpha(X)
    +\omega_1\omega_2(\zeta_2-\zeta_1) }{ \alpha(X)^2 \left[\omega_1+\zeta_1 \alpha(X) \right]\left[\omega_2+\zeta_2 \alpha(X) \right]}  X(1-X)\\
    D(X) &= \frac{2(1-\zeta_1)(1-\zeta_2)}{ \alpha(X)}X(1-X).\label{Noise1dim}
\end{align}

The initial condition is given by the Dirac measure $f(X,0)=\delta(X-X_0)$, where $X_0$ is the fraction of the type-1 strain at the initial time. 
%$X_0$ is determined by $x_0=m_0/N$ and $y_0=n_0/N$. It is the point on the CL at which the solution of (\ref{eq: deterministic SIIQQS rescaled}) with this starting conditions settles. 

\begin{comment} 
As in the $SIIS$-model, drift and diffusion have scaling $1/N$ (in the original time); they both stem from shot noise of the original system (\ref{eq: SIIQQS}). Interestingly, the noise term only depends on the mean time in $Q$ but not on the frequency of jumps, albeit the cross diffusion terms in (\ref{eq: FPorig SIIQQS}).
\end{comment}

\subsubsection{Kogan's reduction for an SIS model}\label{KoganSIS}
For the convenience of the reader, we re-state Kogan's result for the tow-strain SIS-model, and apply his result to the SIS model which approximates the SIQS model.
\begin{theorem}[Kogan et al.~\cite{Kogan.2014}]
Consider a two-strain SIS model: For a finite population size $N$, we have $I_1$, $I_2$, and transitions $I_i\rightarrow I_i+1$ at rate $\beta_i\, I_iS/N$ where $S=N-I_1-I_2$, 
and $I_i\rightarrow I_i-1$ at rate $\gamma_i\, I_i$. Assume $R_0=\beta_1/\gamma_1=\beta_2/\gamma_2$, such that a line of stationary points appears in the deterministic limit ($N\rightarrow\infty$). 
The diffusion approximation allows an approximation of the process along the line of stationary points, given by 
$$\partial_{\tilde \tau} f(X,\tilde \tau) 
= 
\frac{2a(1-a)}{N} \partial_X\bigg(\frac{1-(X/r)^2}{[(1+a+(X/r)(1-a)]^2}\,f(X,\tilde \tau)\bigg) 
+
\frac{2ar)}{N} \partial_{XX}\bigg(\frac{1-(X/r)^2}{(1+a+(X/r)(1-a)}\,f(X,\tilde \tau)\bigg),
$$
where $\tilde \tau=\gamma_1\, t$, $X\in[-r,r]$, $r=1-1/R_0$ and $a=\gamma_2/\gamma_1$. 
\end{theorem}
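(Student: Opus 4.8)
The plan is to specialise the general multiscale reduction of Theorem~\ref{approxSIQS}, together with the algorithm laid out in Section~\ref{kogan}, to the two-dimensional van Kampen diffusion of the SIS model. This is the simplest instance of the same machinery already exercised on the four-dimensional SIQS model, so I expect no new conceptual ingredients, only a lighter computation, and it doubles as a consistency check against the SIQS result via the rapid-switching limit.

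\textbf{Step 1 --- diffusion limit and adapted coordinates.} First I would write the Fokker--Planck equation for $x=I_1/N$, $y=I_2/N$ coming from the van Kampen expansion: the drift is the deterministic vector field with components $\gamma_1 x(R_0(1-x-y)-1)$ and $\gamma_2 y(R_0(1-x-y)-1)$, and the diffusion matrix is diagonal with entries $\tfrac1N(\beta_i(1-x-y)x+\gamma_i x)$. Under the neutrality hypothesis $R_0=\beta_1/\gamma_1=\beta_2/\gamma_2$ the stationary set is the coexistence line $x+y=r$ with $r=1-1/R_0$. I would then introduce the slow coordinate $X=x-y\in[-r,r]$ running along the line and the fast coordinate $Y=(x+y-r)/\eps$, $\eps=1/\sqrt N$, the scaled deviation off it, so that on the line $x=(r+X)/2$ and $y=(r-X)/2$.

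\textbf{Step 2 --- operator expansion and reduction.} Rewriting the equation in $(X,Y)$ and keeping the $1/N$ prefactors explicit as in the statement (equivalently, passing to slow time), the generator splits as $L^\eps=L_0+\eps L_1+\eps^2 L_2$. Here $L_0$ is a one-dimensional Ornstein--Uhlenbeck generator in $Y$ alone, with $X$-dependent restoring rate $R_0(\gamma_1 x+\gamma_2 y)$ evaluated on the line, so that $L_0\rho_0=0$ forces $\rho_0=f(X,\tilde\tau)\,\mathcal{N}(X)\,\exp(-\tfrac12 C(X)Y^2)$, a Gaussian in the fast variable; $L_1$ carries the mixed $\partial_X$ terms coupling the slow motion to $Y$; and $L_2$ contains the genuine slow diffusion in $X$. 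Following Section~\ref{kogan}, I would integrate the order-$\eps^2$ equation over $Y$: the $L_0$ term vanishes (total $Y$-derivatives), $\int L_2\rho_0\,dY$ is read off directly and produces the effective diffusion $D_{\mathrm{SIS}}$, and the delicate $\int L_1\rho_1\,dY$ is handled by Kogan's trick of finding $F(X,Y)=A(X)Y$ with $L_0^+F=\Xi\,Y$, so that the integral reduces to second $Y$-moments of $\rho_0$. Because the fast space is one-dimensional, solving $L_0^+F=\Xi Y$ is a single scalar division and every moment is explicit; collecting the projected-$L_2$ term and the noise-induced ${\mathbf P}L_1^+(L_0^+)^{-1}L_1^+{\mathbf P}$ term then yields, after simplification, the stated drift and diffusion, with the characteristic factor $a(1-a)$, $a=\gamma_2/\gamma_1$, emerging from the contrast of the two strains' time scales.

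\textbf{Main obstacle.} The hard part is the noise-induced drift ${\mathbf P}L_1^+(L_0^+)^{-1}L_1^+{\mathbf P}$: it is a genuinely second-order effect with no deterministic counterpart, since the drift vanishes identically on the coexistence line, so the whole answer hinges on correctly tracking the $X$-dependence of the OU variance $C(X)^{-1}$ together with the explicit $\partial_X$-derivatives generated by $L_1$. Even in this one-dimensional fast space, this is precisely where sign and prefactor errors are easiest to make, so I would carry the bookkeeping of $\partial_X$ acting on the $X$-dependent Gaussian symbolically (as the paper does with \texttt{maxima}) rather than by hand.
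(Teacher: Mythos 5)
Your outline is sound, but it takes a genuinely different route from the paper: for this particular statement the paper offers no derivation at all --- the theorem is explicitly attributed to Kogan et al.~\cite{Kogan.2014} and is simply re-stated ``for the convenience of the reader'', after which the paper only performs the cosmetic change of variables $\tilde X=(X/r+1)/2$ and the time rescaling to match its own notation. What you propose is a self-contained rederivation by specialising the multiscale machinery of SI Section~3 (the van Kampen expansion, the splitting $L^\eps=L_0+\eps L_1+\eps^2 L_2$, and Kogan's trick of evaluating $\int L_1\rho_1\,dY$ via a linear function $F(X,Y)$ with $L_0^+F=\Xi\,Y$) to the two-dimensional SIS diffusion. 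Your choice of coordinates $X=x-y$, $Y=(x+y-r)/\eps$ is the right one, your identification of $L_0$ as an OU generator in $Y$ with restoring rate $R_0(\gamma_1x+\gamma_2y)$ restricted to the line is correct, and you correctly recognise that the entire effective $X$-drift is the noise-induced term ${\mathbf P}L_1^+(L_0^+)^{-1}L_1^+{\mathbf P}$, since the deterministic drift vanishes on the coexistence line; note also (as your plan implicitly does by collecting both drift- and diffusion-type contributions from $\int L_1\rho_1\,dY$) that this term feeds the effective diffusion as well, which is where the denominator $1+a+(X/r)(1-a)$ comes from rather than from the naive projection of $L_2$. The trade-off is clear: the paper's citation costs nothing but leaves the reader dependent on an external source, whereas your derivation makes the result self-contained, is a strictly simpler (one fast variable instead of three) instance of the computation the paper already carries out for the SIQS model, and doubles as the consistency check against the rapid-switching limit that the paper's Remark after Theorem~\ref{approxSIS} relies on. The only caution is the one you already flag yourself: the prefactors $2a(1-a)/N$ and $2ar/N$ hinge on carefully tracking the $X$-dependence of the OU variance, so the final bookkeeping should be verified symbolically.
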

We use $\eps^2=1/N$, and apply the transformation $\tilde X = (X/r+1)/2\in[0,1]$. Therewith, 
$$\partial_\tau f(\tilde X,\tau) 
= \eps^2 \frac{a(1-a)}{r} \partial_{\tilde X}\bigg(\frac{\tilde X(1-\tilde X)}{[a+(1-a)\tilde X]^2}\,f(\tilde X,t)\bigg) 
+\eps^2 \frac a r  \partial_{\tilde X\tilde X}\bigg(\frac{\tilde X(1-\tilde X)}{a+(1-a)\tilde X}\,f(\tilde X,t)\bigg).
$$
Rescaling time $\tau=\eps^2 \gamma_1 t/r$ gives us 
$$\partial_\tau f(\tilde X,\tau) 
= a(1-a) \partial_{\tilde X}\bigg(\frac{\tilde X(1-\tilde X)}{[a+(1-a)\tilde X]^2}\,f(\tilde X,\tau)\bigg) 
+ a  \partial_{\tilde X\tilde X}\bigg(\frac{\tilde X(1-\tilde X)}{a+(1-a)\tilde X}\,f(\tilde X,\tau)\bigg).
$$
Now we replace the parameters used by Kogan by the parameters of our SIS model, particularly we have 
$$ a = \frac{\gamma_2\zeta_2 + \hat\mu(1-\zeta_2)}{\gamma_1\zeta_1 + \hat\mu(1-\zeta_1)}$$
with  $\tilde \tau=\zeta_1\gamma_1\, t/r$

\subsection{Absorption probabilities}
\label{absorpProof}

\subsubsection{Sign of the drift in the one dimensional SIQS Fokker-Planck equation}\label{driftSign}

\begin{prop}\label{Prop: selectiontypesSI}
	Assume $\zeta_2 > \zeta_1$. If and only if $(1-\zeta_2) < -\frac{(\zeta_2-\zeta_1)\omega_1\omega_2}{\zeta_1\zeta_2(\omega_2-\omega_1)} < (1-\zeta_1)$, there exists $X_0 \in (0,1)$ with $v(X_0)=0$. For any such $X_0$ it is $v'(X_0)>0$. In this sense, the model may exhibit directional or disruptive selection but not balancing selection. 
\end{prop}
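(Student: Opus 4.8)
The plan is to exploit the fact that, on the open interval $X\in(0,1)$, every factor in the drift except one is strictly positive, so that the sign behaviour of $v$ is governed by a single \emph{affine} quantity. Concretely, since $\zeta_1,\zeta_2\in(0,1)$ and $\zeta_2>\zeta_1$, the map $\alpha(X)=1-\zeta_2+(\zeta_2-\zeta_1)X$ is strictly increasing and bijects $(0,1)$ onto $(1-\zeta_2,1-\zeta_1)$, both endpoints being positive. Hence $\alpha(X)>0$, and (assuming $\omega_i>0$) the factors $\omega_1+\zeta_1\alpha(X)$, $\omega_2+\zeta_2\alpha(X)$ and $\alpha(X)^2$ are all strictly positive, as is $X(1-X)$. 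I would write $v(X)=-(1-\zeta_1)(1-\zeta_2)\,g(X)\,X(1-X)/h(X)$ with
\[
g(X):=\zeta_1\zeta_2(\omega_2-\omega_1)\,\alpha(X)+\omega_1\omega_2(\zeta_2-\zeta_1),\qquad
h(X):=\alpha(X)^2\,[\omega_1+\zeta_1\alpha(X)]\,[\omega_2+\zeta_2\alpha(X)]>0,
\]
so that for $X\in(0,1)$ the sign of $v(X)$ equals the sign of $-g(X)$, and $v(X_0)=0$ is equivalent to $g(X_0)=0$.

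For the ``if and only if'' statement I would solve $g(X_0)=0$, obtaining $\alpha(X_0)=-\omega_1\omega_2(\zeta_2-\zeta_1)/\bigl(\zeta_1\zeta_2(\omega_2-\omega_1)\bigr)$. Because $g$ is affine in $X$ there is at most one such root, and because $\alpha$ is an increasing bijection onto $(1-\zeta_2,1-\zeta_1)$, a solution $X_0\in(0,1)$ exists precisely when this target value lies strictly between $1-\zeta_2$ and $1-\zeta_1$; this is exactly the displayed inequality, which proves the equivalence and incidentally shows the zero is unique.

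For the claim $v'(X_0)>0$, I would differentiate the quotient form. Writing $p(X)=g(X)X(1-X)$, the quotient rule gives $v'=-(1-\zeta_1)(1-\zeta_2)(p'h-ph')/h^2$, and at $X_0$ the term $p(X_0)=g(X_0)X_0(1-X_0)=0$ vanishes while $p'(X_0)=g'(X_0)X_0(1-X_0)$, so that
\[
v'(X_0)=-(1-\zeta_1)(1-\zeta_2)\,\frac{g'(X_0)\,X_0(1-X_0)}{h(X_0)}.
\]
Since $g'(X)=\zeta_1\zeta_2(\omega_2-\omega_1)(\zeta_2-\zeta_1)$ is constant, the sign of $v'(X_0)$ reduces to the sign of $-(\omega_2-\omega_1)$. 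The crucial observation — which I expect to be the subtle point — is that the existence condition itself forces $\omega_2<\omega_1$: the middle term $-\omega_1\omega_2(\zeta_2-\zeta_1)/\bigl(\zeta_1\zeta_2(\omega_2-\omega_1)\bigr)$ must exceed $1-\zeta_2>0$, and since $\omega_1\omega_2(\zeta_2-\zeta_1)>0$ this is only possible when the denominator $\zeta_1\zeta_2(\omega_2-\omega_1)$ is negative. Thus $\omega_2-\omega_1<0$, whence $g'(X_0)<0$ and $v'(X_0)>0$.

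Finally, for the selection-type interpretation I would argue at the level of the deterministic drift: an interior zero with $v'(X_0)>0$ is an \emph{upward} crossing, hence a repelling equilibrium of $\dot X=v(X)$, which is the signature of disruptive selection; if no interior zero exists, $v$ keeps a constant sign on $(0,1)$, which is directional selection. Since \emph{every} admissible interior zero is repelling, a stable interior equilibrium (balancing selection) can never occur, completing the proof. I expect the routine obstacle to be only the quotient-rule bookkeeping (handled cleanly by evaluating at a zero of $g$), the genuinely delicate step being the deduction of $\omega_2<\omega_1$ from the positivity of the existence threshold.
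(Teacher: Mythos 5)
Your proof is correct and takes essentially the same route as the paper's: both reduce $v(X_0)=0$ to the affine condition on $\alpha(X_0)$ and use the bijectivity of $\alpha$ onto $(1-\zeta_2,1-\zeta_1)$ for the equivalence, and both obtain $\operatorname{sgn}(v'(X_0))=\operatorname{sgn}\bigl(-(\zeta_2-\zeta_1)(\omega_2-\omega_1)\bigr)$ by exploiting that the numerator vanishes at $X_0$ so only its derivative survives, concluding via the observation that existence of an interior zero forces $\omega_2<\omega_1$. Your write-up is slightly more explicit than the paper about why the existence condition implies $\omega_2-\omega_1<0$, but the argument is the same.
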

{\bf Proof: }
   First, $v(X)=0$ iff $\alpha(X)=-\frac{(\zeta_2-\zeta_1)\omega_1\omega_2}{\zeta_1\zeta_2(\omega_2-\omega_1)}$ and since $\alpha:[0,1] \xrightarrow{} [(1-\zeta_2),(1-\zeta_1)]$ is bijective, the first claim follows. Assuming there is a zero $X_0\in(0,1)$, then $\textbf{sgn}\left(v'(X_0))\right)=\textbf{sgn}\left(\frac{d}{dX}\frac{v(X)}{X(1-X)}\big|_{X=X_0}\right)$.

    Abbreviating the denominator of $v(X)$ in (\ref{Drift1dim}) as $d(\alpha(X))$, the sign of the derivative at $X_0$ can be determined as
    \begin{align*}
        \textbf{sgn}(v'(X_0))
    =\textbf{sgn}\Big(&-\frac{\alpha'(X_0)}{\big(d(\alpha(X_0))\big)^2}
    \Big[ 
        \zeta_1\zeta_2 (\omega_2-\omega_1)\alpha(X_0)\\
        &- \underbrace{\left(\zeta_1\zeta_2 (\omega_2-\omega_1)+(\zeta_2-\zeta_1)\omega_1\omega_2\alpha(X_0)\right)}_{=0 \text{ by def. of }X_0} d'(\alpha(X_0)) \Big]
    \Big)\\
        = \textbf{sgn}\big( &-\alpha'(X_0)(\omega_2-\omega_1)\zeta_1\zeta_2\alpha(X_0)\big) = \textbf{sgn}\big( -(\zeta_2-\zeta_1)(\omega_2-\omega_1) \big).
    \end{align*}
    Since the differences $(\zeta_2-\zeta_1)$ and $(\omega_2-\omega_1)$ must have different signs, $\textbf{sgn}(v'(X_0))>0$.
\par\qed\par\medskip

%%%%%
\subsubsection{Computation of the fixation probabilities} \label{prop: fix probs SIQSAppend}

Recall (see, e.g.~\cite{Gardiner.2009}) that the absorption probability $\pi(X)$  at $X=0$ for 
a stochastic deferential equation with Fokker-Planck equation
$$ u_t(x,t) = -\partial_x(v(x) u(x,t)) + \frac 1  2 \partial_{xx} (D(x)u(x,t))$$
starting with $u(x,0)=\delta_{X_0}(X)$ is given by the ODE 
\begin{align*}
	v(X) \pi'(X) + \frac{1}{2}D(X)\pi''(X)=0,\qquad 
 \pi(0)=1,\quad \pi(1)=0
\end{align*}
with the solution 
\begin{align}\label{FormulafixationprobSI}
	\pi(X_0)= \frac{\int_{X_0}^{1} \nu(y)dy}{\int_0^1 \nu(y) dy}, \quad \text{where } \nu(y) = \exp \left(- 2\int_0^y \frac{v(z)}{D(z)}dz \right).
\end{align}
We use this result to prove the following proposition.

\begin{prop}\label{prop: fix probs SIQS:SI}
\begin{enumerate}[label=(\alph*)]
\item If $\zeta_1=\zeta_2$ but $\omega_1\not=\omega2$, the probability that the second strain, starting at $X_0 \in (0,1)$, outcompetes the first strain 
reads
\begin{align*}
    \pi(X_0)= \frac{\mathrm{e}^{\mathcal{A}}-\mathrm{e}^{\mathcal{A}X_0}}{\mathrm{e}^{\mathcal{A}}-1},
\end{align*}
where $\mathcal{A}=\frac{ \zeta_1^2 (\omega_2-\omega_1)}{\left( \omega_1+(1-\zeta_1)\zeta_1 \right)\left(\omega_2+(1-\zeta_1)\zeta_1\right)}$. In particular $\pi(X_0)>1-X_0 \iff \omega_2>\omega_1$

\item If $\omega_1=\omega_2$ but $\zeta_1 \neq \zeta_2$, then 
\begin{align*}
    \pi(X_0)&= \frac{\mathfrak{A}(1-\zeta_1) - \mathfrak{A}(\alpha(X_0))}{\mathfrak{A}(1-\zeta_1) - \mathfrak{A}(1-\zeta_2)},
\end{align*}
where 
\begin{align*}
    \mathfrak{A}:[1-\zeta_2,1-\zeta_1] &\to \mathbb{R}_{>0}:
    s \mapsto (\omega_1 + \zeta_1 s)^{\frac{-\zeta_2}{\zeta_2-\zeta_1}} (\omega_1 + \zeta_2 s)^{\frac{\zeta_1}{\zeta_2-\zeta_1}}.
\end{align*}
$\mathfrak{A}$ is concave if $\zeta_2 > \zeta_1$ and convex if $\zeta_1 > \zeta_2$.
\end{enumerate}
\end{prop}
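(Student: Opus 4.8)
The plan is to feed the drift \eqref{Drift1dimSI} and the diffusion \eqref{Noise1dim} into the generic fixation formula \eqref{FormulafixationprobSI}, so that everything reduces to evaluating the scale density $\nu(y)=\exp\!\big(-2\int_0^y (v/D)\,dz\big)$ and the two integrals $\int_{X_0}^1\nu$ and $\int_0^1\nu$. The decisive preliminary step, shared by both parts, is to simplify $v/D$: dividing \eqref{Drift1dimSI} by \eqref{Noise1dim}, the common factors $(1-\zeta_1)(1-\zeta_2)$ and $z(1-z)$ cancel and a single power of $\alpha$ survives, leaving the rational function
\[
\frac{v(z)}{D(z)}=-\tfrac12\,\frac{\zeta_1\zeta_2(\omega_2-\omega_1)\,\alpha(z)+\omega_1\omega_2(\zeta_2-\zeta_1)}{\alpha(z)\,[\omega_1+\zeta_1\alpha(z)]\,[\omega_2+\zeta_2\alpha(z)]}.
\]
Each of the two hypotheses kills exactly one term in the numerator, and it is this cancellation that renders the integrals elementary.

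For part (a) I would set $\zeta_1=\zeta_2=\zeta$. Then $\alpha(z)\equiv 1-\zeta$ is constant and the second numerator term vanishes, so $-2v/D$ collapses to the constant $\mathcal A$ (which equals the displayed constant once $\alpha=1-\zeta$ is substituted). Hence $\nu(y)=e^{\mathcal A y}$, both integrals are elementary, and $\pi(X_0)=(e^{\mathcal A}-e^{\mathcal A X_0})/(e^{\mathcal A}-1)$ follows at once. For the ``in particular'' I would argue by convexity: $\pi(0)=1$, $\pi(1)=0$, and $\pi''(X_0)=-\mathcal A^2 e^{\mathcal A X_0}/(e^{\mathcal A}-1)$ has sign $-\operatorname{sgn}(\mathcal A)=-\operatorname{sgn}(\omega_2-\omega_1)$; thus $\pi$ is concave, and so lies strictly above its chord $1-X_0$ on $(0,1)$, exactly when $\omega_2>\omega_1$.

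For part (b) I would set $\omega_1=\omega_2=\omega$, which removes the $\alpha$-term and gives $-2v/D=\omega^2(\zeta_2-\zeta_1)/(\alpha[\omega+\zeta_1\alpha][\omega+\zeta_2\alpha])$. The natural move is the substitution $s=\alpha(z)=1-\zeta_2+(\zeta_2-\zeta_1)z$, which maps $[0,1]$ bijectively onto $[1-\zeta_2,1-\zeta_1]$ and whose Jacobian absorbs the prefactor $(\zeta_2-\zeta_1)$, reducing the exponent to $\int \omega^2/(s[\omega+\zeta_1 s][\omega+\zeta_2 s])\,ds$. A partial-fraction decomposition of the form $\tfrac1s+\tfrac{b_1}{\omega+\zeta_1 s}+\tfrac{b_2}{\omega+\zeta_2 s}$, with $b_1=\zeta_1^2/(\zeta_2-\zeta_1)$ and $b_2=-\zeta_2^2/(\zeta_2-\zeta_1)$, integrates to a sum of logarithms, so after exponentiation $\nu(y)$ is, up to a constant that cancels in the quotient, a power product in $s$, $\omega+\zeta_1 s$ and $\omega+\zeta_2 s$. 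The key observation is that this power product is a constant multiple of the $s$-derivative of a function $\mathfrak A$ of the very same power-product type; hence the fundamental theorem of calculus gives $\int_{X_0}^1\nu \propto \mathfrak A(1-\zeta_1)-\mathfrak A(\alpha(X_0))$ and $\int_0^1\nu \propto \mathfrak A(1-\zeta_1)-\mathfrak A(1-\zeta_2)$, with the common constant cancelling in the ratio. Finally the convexity statement I would obtain by logarithmic differentiation: $\mathfrak A'/\mathfrak A$ reduces to a single rational term, and after a second differentiation the terms quadratic in $s$ cancel, leaving a sign that should be read off as the claimed concave/convex dichotomy in $\operatorname{sgn}(\zeta_2-\zeta_1)$.

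I expect the main obstacle to be part (b): guessing the correct power-product primitive and, above all, keeping the exponents and their signs straight, because $\operatorname{sgn}(\zeta_2-\zeta_1)$ simultaneously fixes the orientation of the substitution $s=\alpha(z)$, the monotonicity of $\mathfrak A$, and the concave/convex dichotomy. To guard against sign slips I would not trust the guessed closed form until I have verified, by differentiating $\mathfrak A$ directly and comparing with the $\mathfrak A'/\mathfrak A$ computation, that it really reproduces the integrand $\nu$; and I would sanity-check the final expression against the neutral baseline $\pi(X_0)=1-X_0$ in the degenerate limits $\omega_2\to\omega_1$ and $\zeta_2\to\zeta_1$.
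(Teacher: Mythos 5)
Your proposal follows essentially the same route as the paper's proof: insert the drift and diffusion from \eqref{Drift1dimSI}--\eqref{Noise1dim} into \eqref{FormulafixationprobSI}, observe that $\alpha$ is constant when $\zeta_1=\zeta_2$ (giving the exponential scale density and hence part (a)), and for part (b) substitute $s=\alpha(z)$, do partial fractions, exponentiate to a power product and recognize a power-product antiderivative $\mathfrak{A}$ verified by direct differentiation. The one genuine divergence is the ``in particular'' claim of (a): you compare $\pi$ with its chord via the sign of $\pi''(X_0)=-\mathcal{A}^2e^{\mathcal{A}X_0}/(e^{\mathcal{A}}-1)$, whereas the paper instead shows that $\operatorname{sgn}(\partial\pi/\partial\omega_j)=\operatorname{sgn}(\partial\mathcal{A}/\partial\omega_j)$ and reads off the monotonicity of $\mathcal{A}$ in $\omega_1,\omega_2$; your concavity argument is the more direct of the two and is correct. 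Your warning about keeping the exponents of $\mathfrak{A}$ straight is well placed --- that is exactly where the bookkeeping is delicate --- so do carry out the promised check that $\mathfrak{A}'$ reproduces the integrand, and note that the paper's written proof does not actually verify the final concavity/convexity assertion about $\mathfrak{A}$, so your logarithmic-differentiation step would need to be completed rather than merely sketched.
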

{\bf Proof: } We compute the fixation probability if the strains only differ in either $\zeta$ or $\omega$, and not in both parameters at a time. As indicated, this task boils down to the evaluation 
fo the integrals 
\begin{align}\label{FormulafixationprobAppend}
	\pi(X_0)= \frac{\int_{X_0}^{1} \nu(y)dy}{\int_0^1 \nu(y) dy}, \quad \text{where } \nu(y) = \exp \left(- 2\int_0^y \frac{v(z)}{D(z)}dz \right).
\end{align}
The drift term $\nu(x)$ and the noise term $D(x)$ are given in equations~\eqref{Drift1dim}, \eqref{Noise1dim}.

\textbf{ First case: $\zeta_1=\zeta_2$.}Then,  $\alpha(X)=(1-\zeta_1)$ does not depend on $X$, and hence 
    \begin{align*}
        -2\frac{v(X)}{D(X)} &= \frac{ \zeta_1^2(\omega_2-\omega_1) }{\left[\omega_1+\zeta_1 (1-\zeta_1)\right]\left[\omega_2+\zeta_1 (1-\zeta_1) \right]}=\mathcal{A}.
    \end{align*}
    Thus $\nu(y)=\mathrm{e}^{\mathcal{A}y}$, which immediately gives $\pi$, 
    \begin{align*}
    	\pi(X_0)= \frac{\mathrm{e}^{\mathcal{A}}-\mathrm{e}^{\mathcal{A}X_0}}{\mathrm{e}^{\mathcal{A}}-1},
    \end{align*} 

    For differentiation of $\pi$ it is by the product rule 
\begin{align*}
    \frac{d}{d\zeta_1} \pi(X_0) 
    &= \left(\frac{d}{d\zeta_1}\mathcal{A}\right) \left( 1- \mathrm{e}^{\mathcal{A}} \right)^{-2} \mathrm{e}^{\mathcal{A}} \left[ 1 - \mathrm{e}^{-X_0 \mathcal{A}}\left( 1-X_0(1-\mathrm{e}^{\mathcal{A}}) \right)  \right].
\end{align*}
Estimating
\begin{align*}
    1 - \mathrm{e}^{-X_0 \mathcal{A}}\left( 1-X_0(1-\mathrm{e}^{\mathcal{A}}) \right)  
    > 1 - \mathrm{e}^{-X_0 \mathcal{A}}\left( 1+\mathcal{A} X_0 \right) > 0 ,
\end{align*}
we find that the sign of $\frac{d}{d\zeta_1} \mathcal{A}$ determines the sign of $\frac{d}{d\zeta_1} \pi(X_0)$. This of course is also true for derivatives with respect to $\omega_1, \omega_2$. Determining the signs of $\frac{d}{d\zeta_1} \mathcal{A}, \frac{d}{d\omega_1} \mathcal{A}, \frac{d}{d\omega_2} \mathcal{A}$ is standard:
\begin{align*}
    \frac{1}{\omega_2-\omega_1}\frac{d}{d\zeta_1} \mathcal{A} &= \frac{2\zeta_1}{\left[\omega_1+\zeta_1 (1-\zeta_1)\right]\left[\omega_2+\zeta_1 (1-\zeta_1) \right]} - \frac{\zeta_1^2(1-2\zeta_1)(\omega_1+\omega_2+2\zeta_1(1-\zeta_1))}{\left[\omega_1+\zeta_1 (1-\zeta_1)\right]^2\left[\omega_2+\zeta_1 (1-\zeta_1) \right]^2}\\
    &= \zeta_1\frac{2\omega_1\omega_2 + \zeta_1(\omega_1+\omega_2)+2\zeta_1^3(1-\zeta_1)}{\left[\omega_1+\zeta_1 (1-\zeta_1)\right]^2\left[\omega_2+\zeta_1 (1-\zeta_1) \right]^2} > 0\\
    \frac{d}{d\omega_1} \mathcal{A} &= -\frac{ \zeta_1^2}{\omega_1 + \zeta_1(1-\zeta_1)^2} < 0\\
    \frac{d}{d\omega_2} \mathcal{A} &= \frac{ \zeta_1^2}{\omega_2 + \zeta_1(1-\zeta_1)^2} > 0
\end{align*}

\textbf{Second case: $\zeta_1 \neq \zeta_2$ but $\omega_1=\omega_2$.}

    \begin{align*}
        -\int_0^y 2\frac{v(X)}{D(X)}dX &=  \int_0^y \frac{ \omega_1^2(\zeta_2-\zeta_1) }{\alpha(X) \left[\omega_1+\zeta_1 \alpha(X) \right]\left[\omega_1+\zeta_2 \alpha(X) \right]}dX\\
        &= \int_{\alpha(0)}^{\alpha(y)}\frac{\omega_1^2}{s \left[ \omega_1+\zeta_1 s\right]\left[  \omega_1+\zeta_2 s \right]}ds\\
        &= \frac{1}{\zeta_2-\zeta_1}\left[ \zeta_1 \log(\omega_1+ \zeta_1s)-\zeta_2 \log(\omega_1 +\zeta_2s) +(\zeta_2-\zeta_1)\log(s) \right]\Bigg|_{\alpha(0)}^{\alpha(y)}.
    \end{align*}
    Due to the denominator in (\ref{FormulafixationprobAppend}), prefactors in $\nu$ need not to be calculated and can be lumped into some constant  $C$ (independent of $y$).
    \begin{align*}
        \nu(y)=C  \Big[ \alpha(y) (\omega_1+\zeta_2\alpha(y))^{-\frac{\zeta_2}{\zeta_2-\zeta_1}} (\omega_1+\zeta_1\alpha(y))^{\frac{\zeta_1}{\zeta_2-\zeta_1}} \Big]
    \end{align*}
    Substituting $s=\alpha(y)$ just as done before gives
    \begin{align*}
        \int_{X_0}^1 \nu(y) dy = C \int_{\alpha(X_0)}^{\alpha(1)} s (\omega_1+\zeta_2s)^{-\frac{\zeta_2}{\zeta_2-\zeta_1}} (\omega_1+\zeta_1s)^{\frac{\zeta_1}{\zeta_2-\zeta_1}}\, ds.
    \end{align*}
    It turns out that $\mathfrak{A}(s)=(\omega_1 + \zeta_1 s)^{\frac{-\zeta_2}{\zeta_2-\zeta_1}} (\omega_1 + \zeta_2 s)^{\frac{\zeta_1}{\zeta_2-\zeta_1}}$ is actually an anti-derivative of the integrand:
    \begin{align*}
        \frac{d}{ds} \mathfrak{A}(s)
        &= \frac{\zeta_1\zeta_2}{\zeta_2-\zeta_1} \left( \omega_1 + \zeta_1 s \right)^{\frac{\zeta_2}{\zeta_2-\zeta_1}} \left( \omega_1 + \zeta_2 s \right)^{\frac{-\zeta_1}{\zeta_2-\zeta_1}} \left( (\omega_1 + \zeta_1 s)^{-1}- (\omega_1 + \zeta_2 s)^{-1} \right)\\
        &= \frac{\zeta_1\zeta_2}{\zeta_2-\zeta_1} \left( \omega_1 + \zeta_1 s \right)^{1+\frac{\zeta_1}{\zeta_2-\zeta_1}} \left( \omega_1 + \zeta_2 s \right)^{1+\frac{\zeta_2}{\zeta_2-\zeta_1}} \frac{(\zeta_2-\zeta_1)s}{(\omega_1 + \zeta_1 s)(\omega_1 + \zeta_2 s)}\\
        &= \zeta_1\zeta_2 s (\omega_1+\zeta_2s)^{-\frac{\zeta_2}{\zeta_2-\zeta_1}} (\omega_1+\zeta_1s)^{\frac{\zeta_1}{\zeta_2-\zeta_1}}.
    \end{align*}
    Plugging the integration bounds into $\mathfrak{A}$ yields $\pi$,
    \begin{align*}
    	\pi(X_0)&= \frac{\mathfrak{A}(1-\zeta_1) - \mathfrak{A}(\alpha(X_0))}{\mathfrak{A}(1-\zeta_1) - \mathfrak{A}(1-\zeta_2)},
    \end{align*} 

    In case both $\omega_1 \neq \omega_2$ and $\zeta_1 \neq \zeta_2$, there is no elementary anti-derivative. One needs to resort to numerical calculation of 
    \begin{align*}
        \pi(X_0)&= \frac{\int_{\alpha(X_0)}^{1-\zeta_1} \Tilde{\nu}(s)ds}{\int_{1-\zeta_2}^{1-\zeta_1}  \Tilde{\nu}(s) ds}
          \end{align*}
        where  $\Tilde{\nu}(y)=C \mathrm{e}^{\mathcal{A}\alpha^{-1}(s)} s (\omega_1+\zeta_2s)^{\frac{\zeta_2}{\zeta_2-\zeta_1}} (\omega_1+\zeta_1s)^{-\frac{\zeta_1}{\zeta_2-\zeta_1}}$.
  
\par\qed\par\medskip

\subsection{Maxima script for the dimension reduction of the SIQS model}
\label{maxCode}

{\small 
\begin{lstlisting}
/* SIMULATION STUDY */

/*  original Fokker Plank */
/*  Dx1 is a symbol for \partial_{x_1} etc.;  Dx1**2 indicates \partial^2_{x_1}. */
/*  Defining c1,c2 in terms of zeta1, zeta2 makes sure the */
/*  results are expressed in zeta1, zeta2.  */
assume(r>0, beta1>gamma, w1>0, sigma>0, zeta1>0, zeta2>0, gamma>mu, mu>0);


c1: (w1/zeta1-w1-mu)$
c2: (w2/zeta2-w2-mu)$
theta1: zeta1/(1-zeta1)$
theta2: zeta2/(1-zeta2)$
gamma: beta1*(1-r)-theta1*mu$
beta2: beta1*(gamma+theta2*mu)/(gamma+theta1*mu)$

FPorig:
    -Dx*(beta1*x*(1-x-u-y-v)-gamma*x)*rho
    -Dy*(beta2*y*(1-x-u-y-v)-gamma*y)*rho
    -(Dx-Du)*(c1*u-w1*x)*rho
    -(Dy-Dv)*(c2*v-w2*y)*rho
    +Du*(mu*u)*rho
    +Dv*(mu*v)*rho
    +h/2*Dx**2*(beta1*x*(1-x-u-y-v)+gamma*x)*rho
    +h/2*Dy**2*(beta2*y*(1-x-u-y-v)+gamma*y)*rho
    +h/2*(Dx-Du)**2*(c1*u+w1*x)*rho
    +h/2*(Dy-Dv)**2*(c2*v+w2*y)*rho
    +h/2*Du**2*(mu*u)*rho
    +h/2*Dv**2*(mu*v)*rho$

/* Check CL*/
ratsimp( subst([h=0, u=theta1*x, y=(r-(1+theta1)*x)/(1+theta2), 
                      v=theta2*(r-(1+theta1)*x)/(1+theta2)], FPorig));

/* transformation itself (forward and back) */
termX: 1/(2*r)*((1+theta1)*x-(1+theta2)*y+r)$
termY1: eps**(-1)*1/2*((1+theta1)*x+(1+theta2)*y -r)$
termY2: eps**(-1)*1/2*(u+v-theta1*x -theta2*y)$
termY3: eps**(-1)*1/2*(u-v-theta1*x +theta2*y)$

termx: (r*X+eps*Y1)/(1+theta1)$
termu: eps*(Y2+Y3)+theta1*((r*X+eps*Y1)/(1+theta1))$
termy: -((r*X-eps*Y1-r)/(1+theta2))$
termv: eps*(Y2-Y3)-theta2*((r*X-eps*Y1-r)/(1+theta2))$

/* check back-transformation; the next terms should simplify to zero. */
ratsimp(subst([x=termx,u=termu,y=termy,v=termv],termX) - X);
ratsimp(subst([x=termx,u=termu,y=termy,v=termv],termY1) - Y1);
ratsimp(subst([x=termx,u=termu,y=termy,v=termv],termY2) - Y2);
ratsimp(subst([x=termx,u=termu,y=termy,v=termv],termY3) - Y3);

/* Define trafo of derivatives */
termDx:diff(termX,x)*DX+diff(termY1,x)*DY1+diff(termY2,x)*DY2+diff(termY3,x)*DY3$
termDu:diff(termX,u)*DX+diff(termY1,u)*DY1+diff(termY2,u)*DY2+diff(termY3,u)*DY3$
termDy:diff(termX,y)*DX+diff(termY1,y)*DY1+diff(termY2,y)*DY2+diff(termY3,y)*DY3$
termDv:diff(termX,v)*DX+diff(termY1,v)*DY1+diff(termY2,v)*DY2+diff(termY3,v)*DY3$

/* trafo of Fokker-Plank in (X,Y1,Y2,Y3)*/
FP:subst([x=termx,u=termu,y=termy,v=termv,Dx=termDx,Du=termDu,
                             Dy=termDy,Dv=termDv,h=eps**(2)],FPorig)$

/* ###########################################
Insert specific choices of parameters here.
On my GPU, it was not possible to have BOTH zeta1 and zeta2 symbolic.
############################################# */
/*FP: subst([r=1/2, mu=1/10, beta1=2, w1=3,w2=3,zeta1=1/2,zeta2=zeta2],FP)$*/
/*FP: subst([beta1=2,r=1/2,mu=1/10,w1=2,w2=2,zeta1=zeta1,zeta2=zeta1+5/100],FP)$*/
FP: subst([beta1=2,r=1/2,mu=8/100,w1=2,w2=2,zeta1=0.6,zeta2=zeta2],FP)$

/* define L0, L1, L2. */
L0:ratcoeff(FP,eps,0)$
L1:ratcoeff(FP,eps,1)$
L2:ratcoeff(FP,eps,2)$

Psi: matrix(
 [ratcoeff(L0,DY1*Y1*rho),ratcoeff(L0,DY1*Y2*rho),ratcoeff(L0,DY1*Y3*rho)],
 [ratcoeff(L0,DY2*Y1*rho),ratcoeff(L0,DY2*Y2*rho),ratcoeff(L0,DY2*Y3*rho)],
[ratcoeff(L0,DY3*Y1*rho),ratcoeff(L0,DY3*Y2*rho),ratcoeff(L0,DY3*Y3*rho)]
)$
Xi: [ratcoeff(L1,DX*Y1*rho),ratcoeff(L1,DX*Y2*rho),ratcoeff(L1,DX*Y3*rho)]$

/* Calculate F such that L0+(F) = DX L1*/
Eq1: Psi[1,1]*x1 + Psi[2,1]*x2 + Psi[3,1]*x3$
Eq2: Psi[1,2]*x1 + Psi[2,2]*x2 + Psi[3,2]*x3$
Eq3: Psi[1,3]*x1 + Psi[2,3]*x2 + Psi[3,3]*x3$
sol: linsolve([Eq1 = -Xi[1], Eq2 = -Xi[2], Eq3 = -Xi[3]], [x1,x2,x3])$
A1: rhs(sol[1])$
A2: rhs(sol[2])$
A3: rhs(sol[3])$
F: A1*Y1+A2*Y2+A3*Y3$

/*Define M*x = vector using (Y1**2, Y1Y2, Y1Y3, Y2**2, Y2Y3, Y3**2) as basis*/
M1: -2*Psi[1,1]*x1   - Psi[2,1]*x2                        - Psi[3,1]*x3;
M2: -2*Psi[1,2]*x1   - (Psi[1,1]+Psi[2,2])*x2        - Psi[3,2]*x3                  -2*Psi[2,1]*x4  - Psi[3,1]*x5;
M3: -2*Psi[1,3]*x1   - Psi[2,3]*x2                        - (Psi[1,1]+Psi[3,3])*x3                          - Psi[2,1]*x5                   -2*Psi[3,1]*x6;
M4:                         - Psi[1,2]*x2                                                              -2*Psi[2,2]*x4   - Psi[3,2]*x5 ;
M5:                         - Psi[1,3]*x2                         - Psi[1,2]*x3                 -2*Psi[2,3]*x4  - (Psi[2,2]+Psi[3,3])*x5   -2*Psi[3,2]*x6;
M6:                                                                     -  Psi[1,3]*x3                                          - Psi[2,3]*x5                   -2*Psi[3,3]*x6;
/* Remainder terms */
Gamma11: 2*ratcoeff(L0,DY1*DY1*rho)$
Gamma12: ratcoeff(L0,DY1*DY2*rho)$
Gamma13: ratcoeff(L0,DY1*DY3*rho)$
Gamma22: 2*ratcoeff(L0,DY2*DY2*rho)$
Gamma23: ratcoeff(L0,DY2*DY3*rho)$
Gamma33: 2*ratcoeff(L0,DY3*DY3*rho)$
/* Check that these are all terms in L0. Should evaluate to zero.*/
ratsimp( Psi[1,1]*DY1*Y1*rho + Psi[1,2]*DY1*Y2*rho +Psi[1,3]*DY1*Y3*rho
            +Psi[2,1]*DY2*Y1*rho + Psi[2,2]*DY2*Y2*rho +Psi[2,3]*DY2*Y3*rho
            +Psi[3,1]*DY3*Y1*rho + Psi[3,2]*DY3*Y2*rho +Psi[3,3]*DY3*Y3*rho
            + Gamma11*DY1*DY1*rho/2
            + Gamma12*DY1*DY2*rho
            + Gamma13*DY1*DY3*rho
            + Gamma22*DY2*DY2*rho/2
            + Gamma23*DY2*DY3*rho
            + Gamma33*DY3*DY3*rho/2
    - L0
);

/* Calculate Integral L1*rho1 dY */
/* The other derivates have to be set to 0 to exact the mixed terms */
L1coeffDX: ratcoeff(subst([DY1=0,DY2=0,DY3=0], L1), DX)$
L1coeffDY1: ratcoeff(subst([DX=0,DY2=0,DY3=0], L1), DY1)$
L1coeffDY2: ratcoeff(subst([DX=0,DY1=0,DY3=0], L1), DY2)$
L1coeffDY3: ratcoeff(subst([DX=0,DY1=0,DY2=0], L1), DY3)$
L1coeffDXDY1: ratcoeff(subst([DY2=0,DY3=0], L1), DX*DY1)$
L1coeffDXDY2: ratcoeff(subst([DY1=0,DY3=0], L1), DX*DY2)$
L1coeffDXDY3: ratcoeff(subst([DY1=0,DY2=0], L1), DX*DY3)$
/* Check that these are all required terms in L1 */
L1notRequired: (  ratcoeff(L1,DY1,2)*DY1**2
                        +ratcoeff(L1,DY2,2)*DY2**2
                        +ratcoeff(L1,DY3,2)*DY3**2
                        + ratcoeff(L1,DY1*DY2)*DY1*DY2
                        + ratcoeff(L1,DY1*DY3)*DY1*DY3
                       + ratcoeff(L1,DY2*DY3)*DY2*DY3
               )$
ratsimp(    L1coeffDX*DX
                +L1coeffDY1*DY1
                +L1coeffDY2*DY2
                +L1coeffDY3*DY3
                +L1coeffDXDY1*DX*DY1
                +L1coeffDXDY2*DX*DY2
                +L1coeffDXDY3*DX*DY3
                +L1notRequired-L1);

/* Evaluate \int L1 rho1 dY = -DX \int F L1 rho0 dY
There are 4 types of terms:
i) \int F*DX(...) dY = DX \int F*(...) dY - \int (DX F)*(...) dY (Product rule)
ii) \int F*DYi(...) dY = -\int (DYi F)*(...) dY = -\int A[i]*(...) dY (Integration by parts)
iii) \int F*DXDYi(...) dY = DX -\int (DX DYi F)*(...) dY 
                               + \int (DX DYi F)*(...) dY (Both i and ii)
iv) \int F*DYi DYj dY = 0 (Integration by parts)
 */
IntegrantL1rho1Drift: -diff(F,X)*L1coeffDX /* i) */
                                    -A1*L1coeffDY1    /* ii) */
                                    -A2*L1coeffDY2    /* ii) */
                                    -A3*L1coeffDY3    /* ii) */
                                    +diff( diff(F,Y1), X)*L1coeffDXDY1 /* iii) */
                                    +diff( diff(F,Y2), X)*L1coeffDXDY2 /* iii) */
                                    +diff( diff(F,Y3), X)*L1coeffDXDY3$ /* iii) */

IntegrantL1rho1Diffusion: F*L1coeffDX     /* i) */
                                        -A1*L1coeffDXDY1 /* ii) */
                                        -A2*L1coeffDXDY2 /* ii) */
                                        -A3*L1coeffDXDY3$ /* ii) */
IntegrantL2rho1Drift: ratcoeff(subst([DY1=0,DY2=0,DY3=0],L2),DX,1)$
IntegrantL2rho1Diffusion: ratcoeff(subst([DY1=0,DY2=0,DY3=0],L2),DX,2)$

/* Use Lemma 1.8 */
/* Driftterm DX(...*f) */
solDrift:linsolve([M1=ratcoeff(IntegrantL1rho1Drift,Y1*Y1*rho),
                            M2=ratcoeff(IntegrantL1rho1Drift,Y1*Y2*rho),
                            M3=ratcoeff(IntegrantL1rho1Drift,Y1*Y3*rho),
                            M4=ratcoeff(IntegrantL1rho1Drift,Y2*Y2*rho),
                            M5=ratcoeff(IntegrantL1rho1Drift,Y2*Y3*rho),
                            M6=ratcoeff(IntegrantL1rho1Drift,Y3*Y3*rho)],
                                                   [x1,x2,x3,x4,x5,x6])$
L1rho1Drift: ratsimp( -rhs(solDrift[1])*Gamma11
                                    -rhs(solDrift[2])*Gamma12
                                    -rhs(solDrift[3])*Gamma13
                                    -rhs(solDrift[4])*Gamma22
                                    -rhs(solDrift[5])*Gamma23
                                    -rhs(solDrift[6])*Gamma33
                                    +subst([Y1=0,Y2=0,Y3=0,rho=1], 
                                              IntegrantL1rho1Drift) )$
/* Diffusionterm DX**2*(...*f) */
solDiffusion:linsolve([M1=ratcoeff(IntegrantL1rho1Diffusion,Y1*Y1*rho),
                            M2=ratcoeff(IntegrantL1rho1Diffusion,Y1*Y2*rho),
                            M3=ratcoeff(IntegrantL1rho1Diffusion,Y1*Y3*rho),
                            M4=ratcoeff(IntegrantL1rho1Diffusion,Y2*Y2*rho),
                            M5=ratcoeff(IntegrantL1rho1Diffusion,Y2*Y3*rho),
                            M6=ratcoeff(IntegrantL1rho1Diffusion,Y3*Y3*rho)],
                            [x1,x2,x3,x4,x5,x6])$
L1rho1Diffusion: ratsimp( -rhs(solDiffusion[1])*Gamma11
                                        -rhs(solDiffusion[2])*Gamma12
                                        -rhs(solDiffusion[3])*Gamma13
                                        -rhs(solDiffusion[4])*Gamma22
                                        -rhs(solDiffusion[5])*Gamma23
                                        -rhs(solDiffusion[6])*Gamma33
                                    +subst([Y1=0,Y2=0,Y3=0,rho=1], IntegrantL1rho1Diffusion) )$
/* L2 */
solDriftL2:linsolve([M1=ratcoeff(IntegrantL2rho1Drift,Y1*Y1*rho),
                            M2=ratcoeff(IntegrantL2rho1Drift,Y1*Y2*rho),
                            M3=ratcoeff(IntegrantL2rho1Drift,Y1*Y3*rho),
                            M4=ratcoeff(IntegrantL2rho1Drift,Y2*Y2*rho),
                            M5=ratcoeff(IntegrantL2rho1Drift,Y2*Y3*rho),
                            M6=ratcoeff(IntegrantL2rho1Drift,Y3*Y3*rho)],[x1,x2,x3,x4,x5,x6])$
L2rho1Drift: ratsimp( -rhs(solDriftL2[1])*Gamma11
                                    -rhs(solDriftL2[2])*Gamma12
                                    -rhs(solDriftL2[3])*Gamma13
                                    -rhs(solDriftL2[4])*Gamma22
                                    -rhs(solDriftL2[5])*Gamma23
                                    -rhs(solDriftL2[6])*Gamma33
                                    +subst([Y1=0,Y2=0,Y3=0,rho=1], 
                                    IntegrantL2rho1Drift) )$
/* Diffusionterm DX**2*(...*f) */
solDiffusionL2:linsolve([M1=ratcoeff(IntegrantL2rho1Diffusion,Y1*Y1*rho),
                                        M2=ratcoeff(IntegrantL2rho1Diffusion,Y1*Y2*rho),
                                        M3=ratcoeff(IntegrantL2rho1Diffusion,Y1*Y3*rho),
                                        M4=ratcoeff(IntegrantL2rho1Diffusion,Y2*Y2*rho),
                                        M5=ratcoeff(IntegrantL2rho1Diffusion,Y2*Y3*rho),
                                        M6=ratcoeff(IntegrantL2rho1Diffusion,Y3*Y3*rho)],
                                        [x1,x2,x3,x4,x5,x6])$
L2rho1Diffusion: ratsimp( -rhs(solDiffusionL2[1])*Gamma11
                                        -rhs(solDiffusionL2[2])*Gamma12
                                        -rhs(solDiffusionL2[3])*Gamma13
                                        -rhs(solDiffusionL2[4])*Gamma22
                                        -rhs(solDiffusionL2[5])*Gamma23
                                        -rhs(solDiffusionL2[6])*Gamma33
                                    +subst([Y1=0,Y2=0,Y3=0,rho=1], 
                                    IntegrantL2rho1Diffusion) )$
/* Minus infront of L1rho1Drift come s from .... */
FP1dim: -DX*(L1rho1Drift+DX*L1rho1Diffusion) +DX*(L2rho1Drift+DX*L2rho1Diffusion)$

/*
/* Drift */
ratcoeff( ratcoeff(FP1dim, DX,1), X, 0);
ratsimp(ratcoeff( ratcoeff(FP1dim, DX,1), X, 1)+ratcoeff( ratcoeff(FP1dim, DX,1), X, 2));
/* Diffusion */
ratcoeff( ratcoeff(FP1dim, DX,2), X, 0);
ratsimp(ratcoeff( ratcoeff(FP1dim, DX,2), X, 1)+ratcoeff( ratcoeff(FP1dim, DX,2), X, 2));
*/
/* Sanity check */
ratsimp(subst([X=0], FP1dim));
ratsimp(subst([X=1], FP1dim));

FP1dimDrift: -ratcoeff(FP1dim,DX,1);
FP1dimDiffusion: 2*ratcoeff(FP1dim,DX,2);


/* Check 1-dimensional Fokker-Planck equation */
/* Divide by r/gamma to account for the rescaling of time */
listofvars(FP1dimDrift);
listofvars(FP1dimDiffusion);
/*  1-dimensional Fokker-Planck equation from the paper */
alpha: (1-zeta2+X*(zeta2-zeta1))$
FP1dimDriftpaper: (
                     -(alpha**2*(w1+zeta1*alpha)*(w2+zeta2*alpha))**(-1) 
                        * X*(1-X)*(1-zeta1)*(1-zeta2)
                        *(zeta1*zeta2*(w2-w1)*alpha+w1*w2*(zeta2-zeta1))
                   );
/* Should be zero */
ratsimp(  FP1dimDriftpaper-(r/gamma)*subst([w1=beta*r*w1,w2=beta*r*w2],FP1dimDrift));
FP1dimDiffusionpaper: (alpha)**(-1)*2*(1-zeta1)*(1-zeta2)*X*(1-X);
/* Should be zero */
ratsimp( FP1dimDiffusionpaper - (r/gamma)*subst([w1=beta*r*w1,w2=beta*r*w2],FP1dimDiffusion));
\end{lstlisting}
}

%\addcontentsline{toc}{subsection}{References}
%\bibliography{Literaturverzeichnis_neu.bib}
%\bibliographystyle{plain}

%\end{document}

\end{appendix}

\end{document}